\documentclass[11pt]{article}
\usepackage[utf8]{inputenc}
\usepackage[english]{babel}
\usepackage{graphicx} 
\usepackage{amsmath}
\usepackage{amssymb}
\usepackage{graphicx}
\usepackage{float}
\usepackage{caption}
\usepackage{amsthm}
\usepackage{xcolor}
\usepackage{tabu}
\usepackage{fullpage}
\usepackage{hyperref}
\hypersetup{
    colorlinks,
    linkcolor={red!50!black},
    citecolor={blue!50!black},
    urlcolor={blue!80!black}
}
\usepackage{cleveref}
\usepackage{xspace}
\usepackage{thm-restate}



\iffalse
\renewbibmacro*{doi+eprint+url}{
    \printfield{doi}
    \newunit\newblock
    \iftoggle{bbx:eprint}{
        \usebibmacro{eprint}
    }{}
    \newunit\newblock
    \iffieldundef{doi}{
        \usebibmacro{url+urldate}}
        {}
    }
\fi

\theoremstyle{plain}
\newtheorem{theorem}{Theorem}
\newtheorem{lemma}[theorem]{Lemma}

\theoremstyle{definition}

\title{Two Tiling is Undecidable}

\author{Jack Stade}
\date{February 2025}

\begin{document}

\maketitle

\begin{abstract}
We show that the following problem is undecidable: given two polygonal prototiles, determine whether the plane can be tiled with rotated and translated copies of them. This improves a result of Demaine and Langerman [SoCG 2025], who showed undecidability for three tiles.

Along the way, we show that tiling with one prototile is undecidable if there can be edge-to-edge matching rules. This is the first result to show undecidability for monotiling with only local matching constraints.
\end{abstract}

\tableofcontents

\section{Introduction}

A tiling problem (in $\mathbb{R}^2$) asks whether a given set \emph{prototiles} can tile the plane, that is whether there is a tiling of the plane where every tile is a copy of one of the prototiles. Rotations and reflections may or may not be allowed, and there may be some constraints on how tiles can be placed in relation to each other.

A classic problem asks which tiling problems are \emph{decidable}, that is, when is there an algorithm that can determine whether a set of prototiles admits a tiling? In 1966, Berger \cite{BergerWangTiling} showed that the \emph{Wang tiling problem} is undecidable. This is a tiling problem where the prototiles are squares with colored edges, where the tiles should be placed without rotations on a square grid so that colors match along coincident edges. This type of matching rule can be simulated with purely geometric tiles (see \cite{GolombSimulatingWangTiles}). Several different proof of the undecidability of Wang tiling are now know (see e.g. \cite{KariWangTiling,OllingerWangTiling,RobinsonWangTiling}).

There are only finitely many meaningfully distinct sets of $n$ Wang tiles for any constant $n$, so any undecidability result for Wang tiling must allow the number of tiles to grow arbitrarily large. For more general tiles, this need not be the case. In 2009, Ollinger \cite{Ollinger5Tiling} showed that tiling (with rotations) with $5$ polyomino prototiles is undecidable. For general (not necessarily polyomino) tiling, Demaine and Langerman \cite{DemaineLangerman3Tiling} recently showed that $3$ prototiles are sufficient for undecidability. We improve this result, showing that geometric tiling is undecidable with $2$ prototiles:

\begin{theorem}\label{thm:main}
There is no algorithm, that, given two polygonal prototiles, decides whether they tile the plane. 
\end{theorem}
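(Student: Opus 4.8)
The plan is to follow the two-stage strategy suggested by the structure of the paper, ultimately reducing from the Wang tiling problem \cite{BergerWangTiling}, which the introduction recalls is undecidable. The first stage is to establish the intermediate result advertised in the abstract — that monotiling with edge-to-edge matching rules is undecidable — and the second stage is to trade the matching rules for a single additional geometric prototile, yielding \Cref{thm:main}.

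For the first stage I would encode an arbitrary Wang tile set $W$ into a single polygon $P$ carrying colored-edge matching rules. The geometry of $P$ should be engineered to be \emph{rigid}: any valid tiling must decompose into a square grid in which each cell is occupied by exactly one copy of $P$, in a controlled finite set of orientations. I would build this rigidity from interlocking teeth and notches so that translational and rotational slack is killed except for the intended grid structure. Once the grid is forced, the matching-rule colors do all the combinatorial work: I encode the Wang color of each cell edge as a short colored ``barcode'' along the subdivided edge, and choose matching rules that permit two cells to be adjacent exactly when their shared codes are consistent with some tile of $W$. Because matching rules may carry unboundedly many colors and $P$ may be arbitrarily intricate, a single prototile suffices — the shape forces the grid and the colors simulate $W$, so $P$ admits a matching-respecting tiling iff $W$ tiles the plane.

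For the second stage I would remove the matching rules at the cost of one extra prototile by converting the combinatorial constraints into purely geometric ones. Let $P_1$ be a copy of $P$ in which each colored edge is physically replaced by a profile of bumps and dents realizing its barcode. Enforcing arbitrary matching rules through the self-compatibility of a single shape (even allowing rotations and reflections) is too inflexible, so I introduce a second, thin ``filler/key'' prototile $P_2$ to mediate: between two adjacent copies of $P_1$ I leave a gap whose two walls are the two barcodes, and I shape $P_2$ so that it tiles this gap precisely when the barcodes are compatible, i.e.\ when the corresponding matching rule is satisfied. Incompatible codes leave an unfillable gap and block the tiling. Thus $\{P_1,P_2\}$ tiles the plane iff $P$ tiles respecting its matching rules iff $W$ tiles, which is the desired reduction.

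The main obstacle in both stages is the \emph{rigidity} (forcing) analysis: proving that \emph{every} valid tiling, not merely the intended one, has the required grid-and-gap structure, so that no spurious tiling slips through and breaks the reduction. In the first stage this means ruling out tilings in which copies of $P$ shear, rotate, or offset in unintended ways; in the second it additionally means controlling how $P_2$ may be placed, ensuring it can only ever occupy the intended gaps and cannot assemble with other copies of $P_2$ (or with $P_1$) into unforeseen configurations. These case analyses — showing the teeth and notches leave no geometric slack — are where essentially all the technical difficulty lies, after which the reductions themselves are routine.
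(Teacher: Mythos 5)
Your stage~2 is essentially the paper's (\Cref{sec:stapletile}): replace each label by a bump/dent profile along the edge and add one small second tile that fills the residual gaps exactly when the matching rule is satisfied, with the extra obligations you correctly flag — that the filler cannot tile alone (\Cref{lem:nostapletiling}) and cannot assemble with sticks in unintended ways, which the paper handles by an angle-sum analysis in \Cref{lem:stapleverification}. But your stage~1 has a fatal gap, and it is precisely the point where the paper's real work lies. With a \emph{single} prototile $P$, every copy carries the \emph{same} fixed edge labels, and the edge-to-edge matching rules are one fixed set of forbidden label pairs. So if, as you propose, the geometry of $P$ rigidly forces a square grid with one copy of $P$ per cell in a controlled orientation, then every horizontal adjacency in every tiling presents the same ordered pair of labeled edges (and likewise vertically). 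The rules then either permit that unique adjacency type — in which case the periodic grid tiling always exists — or forbid it — in which case no tiling exists; either way the problem is trivially decidable. There is no per-cell ``barcode'': a barcode encoding of Wang colors requires different cells to carry different labels, which is exactly what a monotile cannot do. Full rigidity destroys the only available information channel.

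The paper resolves this by making the forced structure rigid \emph{combinatorially} but flexible \emph{metrically}: the matching rules force a ``weave pattern'' (\Cref{lem:weavepattern}) in which horizontal stacks of sticks form a grid, but the gap between consecutive stacks in a row has variable left and right lengths. The identity of the simulated tile at each cell is stored in these offsets, not in labels. Since the stick has $4n+2$ distinct labeled unit edges, forbidding specific pairs of them constrains \emph{at which offsets} two sticks may meet, and the elaborate ``states and buckets'' machinery (\Cref{sec:gapconditions} and the gap conditions) turns this into arithmetic constraints relating offsets in adjacent rows; one full cycle through the states checks one row of an intermediate ``AB tiling'' problem, itself undecidable by a short reduction from Wang tiling (\Cref{lem:ABtiling}). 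So the correct slogan is not ``shape forces the grid, colors simulate $W$,'' but ``rules force a grid of stacks with free offsets, and rules on which edges may coincide propagate information through those offsets.'' Without this idea your reduction does not get off the ground, and the claim that ``the reductions themselves are routine'' inverts where the difficulty actually sits: for a monotile the encoding, not the rigidity, is the hard part.
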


A longstanding open problem \cite{GoodmanStraussSurvey} asks whether \emph{monotiling} is decidable. We make some progress towards answering this question, showing that monotiling is undecidable when certain additional constraints are enforced. An \emph{edge-to-edge matching rule} is a constraint that prevents certain pairs of edges of prototiles from being placed next to each other in a tiling. We show the following:

\begin{theorem}\label{thm:onetiling}
There is no algorithm, that, given a single prototile with edge-to-edge matching rules, decides whether it tiles the plane.
\end{theorem}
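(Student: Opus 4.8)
The plan is to reduce from the Wang tiling problem, which is undecidable by Berger \cite{BergerWangTiling}. Given a finite set of Wang tiles over a color set of size $k$, I would compute a single polygonal prototile $P$ together with a finite list of edge-to-edge matching rules such that $P$ admits a valid tiling of the plane (respecting the rules, under the allowed rigid motions) if and only if the Wang set tiles the plane. Since a single shape has only finitely many distinct orientations, it cannot by itself carry one of arbitrarily many colors; the central idea is therefore a macro-tile (supertile) construction, in which each Wang tile is simulated by a large square block assembled from many copies of $P$, and each of the block's four sides spells out, in binary, the color of the corresponding Wang edge. Rotations (and reflections, if permitted) of $P$ supply the handful of distinct local ``states'' needed --- for instance to represent a $0$-bit versus a $1$-bit, to mark block corners, and to separate the boundary-encoding tiles from the interior filler --- while the matching rules restrict which states may sit across a shared edge.

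I would organize the matching rules into three layers. First, \emph{assembly} rules force the copies of $P$ to aggregate into the intended square blocks of the correct size, with a well-defined interior, boundary tracks, and corners; the filler tiles are constrained so that the only locally consistent completion of a partial block is the intended one. Second, \emph{encoding} rules along each side of a block force the boundary tiles to read out a syntactically valid length-$\lceil \log_2 k \rceil$ binary string and tie the string on a side to the Wang color it represents. Third, \emph{matching} rules across block boundaries force the binary string on one block's right side to agree with the string on the left side of its neighbor (and similarly vertically), exactly mirroring the Wang color-compatibility condition. With these in place, a valid Wang tiling is straightforwardly converted into a tiling by $P$, giving one direction of the reduction.

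The hard part, as in all such few-prototile constructions, is the converse: proving \emph{rigidity}, namely that every valid tiling of the plane by $P$ necessarily decomposes into the intended macro-tile grid and therefore projects to a genuine Wang tiling. I expect this to be the main obstacle, because the single-tile, rotations-allowed setting gives the adversary great freedom --- one must rule out fault lines, shifted or sheared block lattices, blocks of the wrong size, and exotic assemblies that exploit an unintended orientation of $P$. The standard remedy is to make the construction self-synchronizing: design the corner and boundary-track tiles so that their matching rules propagate a unique coordinate and parity structure outward, forcing the block boundaries onto a fixed grid, and then verify by a finite case analysis that no local configuration outside the intended catalogue can be completed. Once this rigidity lemma is established, the biconditional between tileability of $P$ and tileability of the Wang set follows, and undecidability is inherited from Berger's theorem.
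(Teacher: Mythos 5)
There is a genuine gap, and it sits exactly where you flagged the difficulty but offered only a generic remedy. Your macro-tile template presumes many distinguishable tile \emph{roles}: corner markers, boundary-track tiles carrying a $0$-bit versus a $1$-bit at specific positions along a side, and interior filler. With a single prototile, edge-to-edge matching rules are position-independent constraints on pairs of edges of $P$, so two copies of $P$ in the same orientation with the same edge-pairings to their neighbors are indistinguishable. Rotations supply only a constant number of orientations, so your construction is in effect a subshift with a constant-size alphabet --- and such rules cannot count: nothing forces a block boundary to appear at distance $m$, nothing prevents the ``interior filler'' region from extending over the whole plane (or, if filler-filler adjacency is restricted to prevent that, nothing lets the interior of a large block be filled at all), and nothing distinguishes bit position $3$ from bit position $5$ along a side. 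The ``self-synchronizing'' fixed-point constructions you allude to need unboundedly many tile types precisely to break this homogeneity; asserting that ``the only locally consistent completion of a partial block is the intended one'' is the conclusion you would need, not a mechanism. You also never say where the unbounded information capacity for $k$ colors lives, given that the number of rule systems over a bounded-complexity tile is bounded.

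The paper escapes this trap by putting the instance into the \emph{geometry} of the prototile rather than into a catalogue of roles: the tile is a stick of $n$ hexagons with $4n+2$ individually addressable edges, so matching rules can refer to $\Theta(n)$ distinct labels, and information is carried by the relative \emph{offsets} of long tiles (gap lengths between blocks in a row), which range over $\Theta(n)$ values --- not by binary strings of tile-states. Rigidity is then a concrete, tile-specific case analysis (\Cref{lem:weavepattern}) forcing a ``weave pattern'' of stacks, after which the combinatorics is managed through schematics, states, buckets, and the gap conditions (\Cref{lem:gapsinsamebucket,lem:leftandrightsidelengths,lem:statecycle,lem:intervalbound}), reducing from an intermediate AB-tiling problem (\Cref{lem:ABtiling}) rather than from Wang tiles directly. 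If you want to salvage your approach, the fix is essentially to adopt this shift: let $P$ grow with the instance so that its edge labels and the admissible translational offsets between adjacent copies do the work that your (unavailable) multiple tile types were meant to do.
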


Edge-to-edge matching rules are in some sense ``local''. When non-local ``atlas-style'' matching rules are allowed, it is straightforward (see \cite{GoodmanStraussSurvey}) to see that monotiling is undecidable. Our result is the first to show undecidability for monotiling with only local matching rules.

We first prove \Cref{thm:onetiling} in \Cref{sec:monotiling}, and in \Cref{sec:stapletile} we show how to obtain \Cref{thm:main} by adding a ``staple'' tile that simulates edge-to-edge matching rules. This is based on a similar use of a staple tile in \cite{DemaineLangerman3Tiling}.

Both \Cref{thm:main} and \Cref{thm:onetiling} are stated in the setting where rotations are allowed by reflections are prohibited. It is not too hard to modify the tiles (see \Cref{sec:stapletile}) to forbid reflections, so in fact these results work independently of whether reflections are allowed.

One could also ask what happens when only translations (no rotations or reflections) are allowed. Previously, Yang and Zhang \cite{YangZhang8Tiling} showed that translational $8$-tiling is undecidable. Our construction uses only $4$ total orientations of our $2$ tiles, showing that:

\begin{theorem}\label{thm:translationaltiling}
Translational $4$-tiling is undecidable. 
\end{theorem}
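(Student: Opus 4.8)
The plan is to obtain \Cref{thm:translationaltiling} as an almost immediate consequence of the construction used for \Cref{thm:main}, rather than through a separate reduction. The crucial feature of that construction is that the two prototiles are \emph{rigid in orientation}: although arbitrary rotations are permitted, the geometry of the tiles together with the staple-enforced matching constraints is designed so that, in \emph{any} tiling of the plane, each prototile can appear only in one of a small fixed set of orientations. I would first isolate this as a structural lemma asserting that across the two prototiles there are exactly four distinct oriented copies $T_1,\dots,T_4$ that can ever occur, and that every valid tiling is in fact a tiling by translated copies of these four shapes alone.

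Given this lemma, the reduction is essentially free. From the two prototiles I would compute the four oriented tiles $T_1,\dots,T_4$ — a purely mechanical step, applying the four admissible rotations — and take these as the prototile set for a translational tiling instance. I then claim that the two original tiles admit a tiling with rotations if and only if $\{T_1,\dots,T_4\}$ admit a tiling using translations only. The ``only if'' direction is precisely the structural lemma: any rotational tiling uses only the four orientations, so relabelling each placed tile by its oriented copy exhibits it as a translational tiling of $\{T_1,\dots,T_4\}$. The ``if'' direction is trivial, since a translational tiling of $\{T_1,\dots,T_4\}$ is in particular a rotational (and reflection-free, as translations forbid reflections automatically) tiling of the two originals, each $T_i$ being a rotate of an original. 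Composing this equivalence with the undecidability of \Cref{thm:main}, and noting that the map from two prototiles to four oriented tiles is computable, yields undecidability of translational $4$-tiling.

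The main obstacle is contained entirely in the structural lemma, namely verifying that no spurious orientation can ever be used. Because rotations by arbitrary angles are a priori allowed, I must argue that the edges of the tiles, and the adjacencies left available by the simulated matching rules, can interlock only at the four designed angles — any other rotation would force a gap or an unavoidable overlap that cannot be completed to a tiling. In the present setting I expect this rigidity to already be a byproduct of how the tiles are built to encode edge-to-edge matching rules, so the lemma should follow by inspecting the list of admissible adjacencies rather than requiring a genuinely new argument; the only care needed is to confirm that the count of realizable orientations is exactly four, matching the claimed bound.
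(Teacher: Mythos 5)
Your high-level plan matches the paper's (the paper disposes of this theorem with exactly the observation that its construction uses $3$ orientations of the stick plus $1$ of the staple), but your central structural lemma is false as stated, and the ``only if'' direction of your equivalence rests on it. You assert that in \emph{any} tiling of the plane by the two prototiles, only four fixed oriented copies $T_1,\dots,T_4$ can occur. This cannot hold: the property of being a tiling is invariant under global rotation, so rotating a valid tiling by an arbitrary angle yields a valid tiling using orientations outside any fixed finite set. What the paper's machinery actually gives is rigidity \emph{relative to the tiling}: \Cref{lem:stapleverification} shows all sticks in a tiling are aligned with a single common hexagonal grid (six orientations, a priori), and \Cref{lem:weavepattern} then forces the weave pattern, which \emph{up to a global rotation} uses exactly three stick orientations (horizontal with arrow down, and the two slanted orientations differing by $\pi$) plus translates of the staple (one orientation suffices because the regular $12$-gon is invariant under rotation by $\pi/6$, so every hole between matched dents is a translate of a fixed copy). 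Consequently your relabelling argument, applied to an arbitrary rotational tiling, does not produce a translational tiling of $T_1,\dots,T_4$ themselves.

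The repair is routine, and there are two standard ways to do it. Either state the rigidity lemma correctly --- every tiling is a global rotation of a tiling by translates of $T_1,\dots,T_4$ --- and then rotate the given tiling to standard position before relabelling; or, as the paper implicitly does, route both directions through the underlying instance: if the Wang/AB instance is solvable, the \emph{explicitly constructed} weave-pattern tiling uses only the four designated oriented tiles, so a translational tiling exists; if the instance is unsolvable, then by \Cref{thm:main}'s verification no tiling exists at all (even with rotations), and in particular no translational tiling of $T_1,\dots,T_4$, since such a tiling is a fortiori a rotational tiling of the two originals. Your ``if'' direction and the computability of the map from the two prototiles to the four oriented tiles are fine; only note that the four orientations are fixed by convention (the standard position of the weave pattern), not extracted from all tilings as your lemma suggests.
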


Translational monotiling is known to be decidable (see \cite{GBNTranslationalMonotiling}), so only the cases $n=2$ and $n=3$ remain open.

It is also natural to restrict to tiling with polyomino tiles. Our tiles are not polyominos (though our construction can be realized with polyhexes). In \Cref{sec:conclusion}, we also sketch a proof that tiling with $3$ polyomino prototiles is undecidable. This improves Ollinger's construction with $5$ tiles.

\section{Monotiling with edge-to-edge matching rules is undecidable}\label{sec:monotiling}

In this section, we prove \Cref{thm:onetiling}. 

\subsection{The weave pattern}\label{sec:weavepattern}

The prototile that we use is a \emph{stick} formed by $n$ hexagons. An example is shown in \Cref{fig:sticktile}. Since reflections are not allowed, an arrow is sufficient to specify the orientation of the tile.
The edges of the tile are given labels as shown in the figure.
The edges $z_1a_1b_1\ldots b_{n-1}x_2$ which form the part behind the arrow are called the \emph{back side} and the edges $z_2c_1d_1\ldots c_{n-1}d_{n-1}x_1$ are called the \emph{front side}.

\begin{figure}
    \centering
    \includegraphics[page=1]{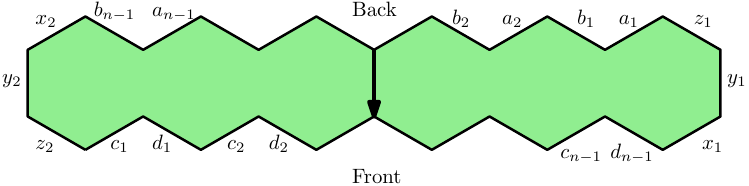}
    \caption{A stick of length $6$.}
    \label{fig:sticktile}
\end{figure}

There are many different ways to tile the plane with stick tiles. Our first step is to add some edge-to-edge matching rules that restrict the tile to forming a particularly nice type of a tiling that we call a \emph{weave pattern}. In order to define a weave pattern, we first need some definitions.

Given a tiling of the plane by stick tiles, we can always orient it so that one of the three classes of parallel line segments on the boundary of a stick is vertical. Up to rotations by $\pi$, there are $3$ orientations that a stick might occur in. We say that a stick is either horizontal, slants left, or slants right, as shown in \Cref{fig:stickorientations}.

\begin{figure}
    \centering
    \includegraphics[page=2]{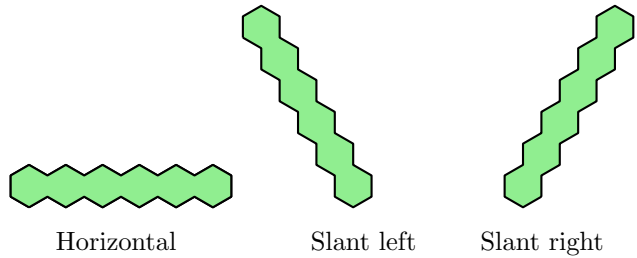}
    \caption{The orientations of a stick (up to rotations by $\pi$)}
    \label{fig:stickorientations}
\end{figure}

We say that a pair of sticks are \emph{stacked} if they have the same orientation modulo $\pi$ and one stick fills all of the concave corners on one side of the other stick. A pair of stacked sticks leans left if it (up to rotation) as in the left side of \Cref{fig:stackdirections} and leans right if it as in the right side of the figure. A stack of sticks is formed by a chain of stacked pairs.

\begin{figure}
    \centering
    \includegraphics[page=5]{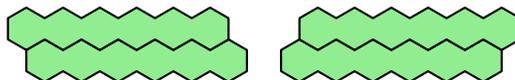}
    \caption{Stacked pairs leaning left and right. The direction that a pair leans is invariant under rotation.}
    \label{fig:stackdirections}
\end{figure}

The structure that we want to create is shown in \Cref{fig:weavepattern}. Specifically, we say that a tiling by stick polyhexes forms a \emph{weave pattern} if it can be oriented so that:

\begin{itemize}
    \item All the sticks are either horizontal or slant left
    \item There is a horizontal stick
    \item All the horizontal sticks are in left-leaning stacks (called the horizontal stacks) and all the left-slanted sticks are in right-leaning stacks (called the vertical stacks)
    \item All the horizontal stacks have height $n$
    \item All the horizontal sticks have arrows pointing down
    \item Each horizontal stack is adjacent to four other horizontal stacks, two above and two below, so that the adjacency graph of the horizontal stacks forms an infinite (diagonal) grid
    \item Adjacent horizontal stacks overlap by at least one concave corner. That is, the adjacency between the top stick of a horizontal stack and the bottom stick of one of its neighbors is not as in case 2 in \Cref{fig:forbiddenpairs}
    \item Each cycle of four horizontal stacks making up a grid cell encloses a region that is filled by a vertical stack containing at least two sticks
    \item Each vertical stack splits into a left part and a right part, where the sticks in the left (resp. right) part have arrows pointing right (resp. left)
\end{itemize}

\begin{figure}
    \centering
    \includegraphics[page=4]{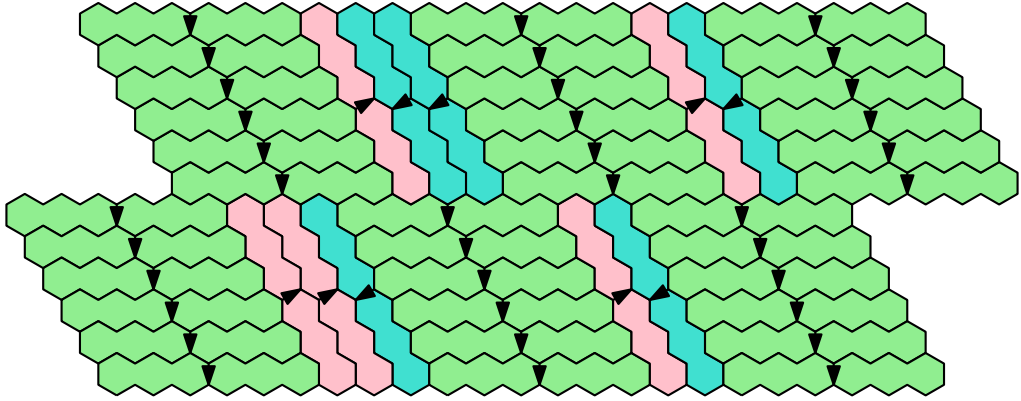}
    \caption{The \emph{weave pattern} that the sticks should form. }
    \label{fig:weavepattern}
\end{figure}

In order to create such a pattern, we forbid matchings between any of the following pairs of edges:

\begin{enumerate}
    \item $(y_i, y_j)$ for $i, j\in \{1, 2\}$
    \item $(x_i, x_j)$ or $(z_i, z_j)$ for $i, j \in \{1, 2\}$
    \item $(y_i, x_j)$ for $i, j\in \{1, 2\}$
    \item $(x_i, z_j)$ for $i, j\in \{1, 2\}$
    \item $(y_i, a_1)$ or $(y_i, c_1)$ for $i\in \{1, 2\}$
    \item $(x_1, c_i)$ for $1\le i\le n-1$
    \item $(z_2, d_i)$ for $1\le i\le n-2$
    \item $(x_2, a_i)$ for $1\le i\le n-1$
    \item $(z_1, b_i)$ for $1\le i \le n-1$
    \item $(y_i, z_2)$ for $i\in \{1, 2\}$
    \item $(y_i, d_{n-1})$ for $i\in \{1, 2\}$
\end{enumerate}

These are illustrated in \Cref{fig:forbiddenpairs}. Note that none of the forbidden cases occur in a weave pattern.

\begin{figure}
    \centering
    \includegraphics[page=19,scale=0.9]{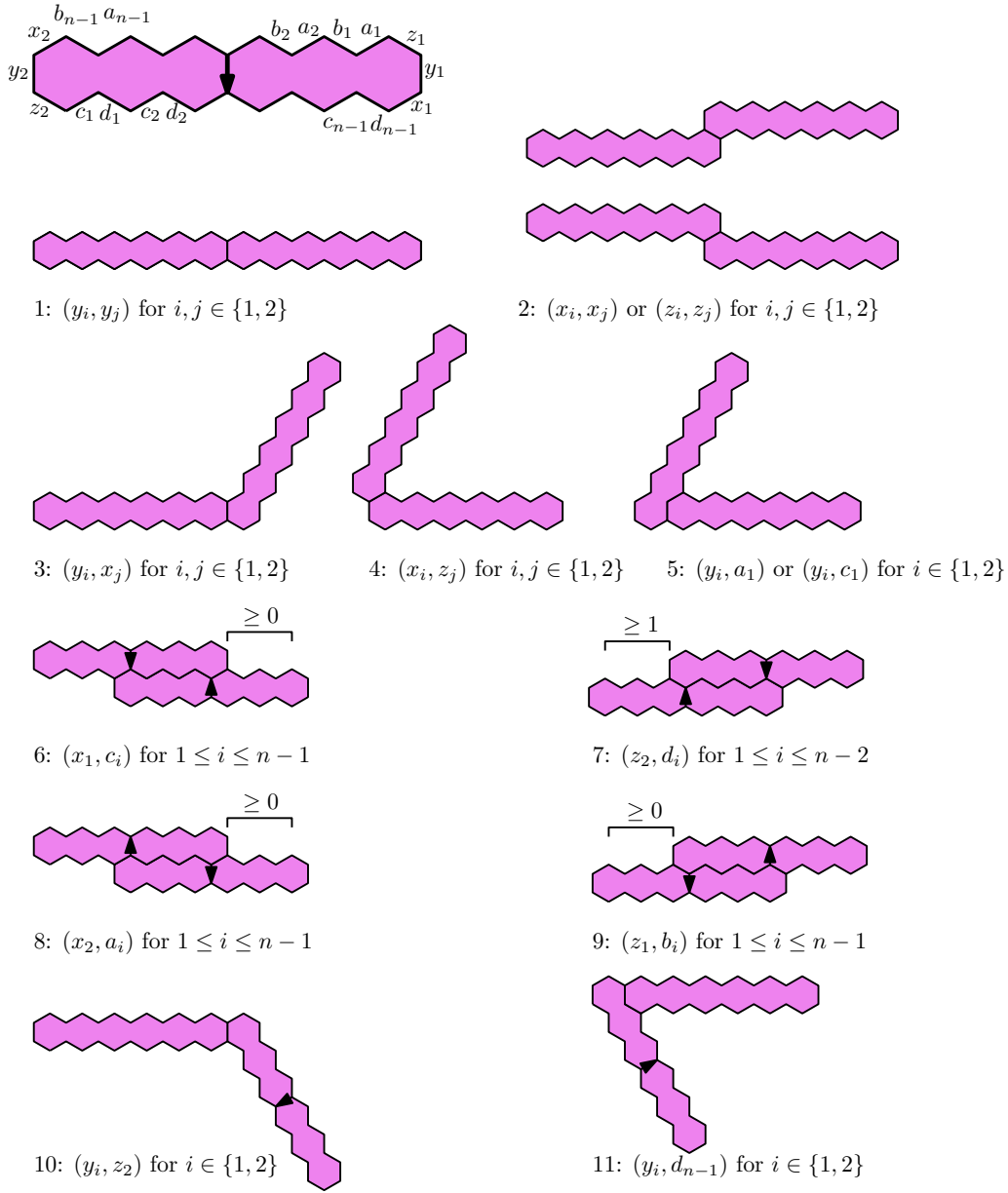}
    \caption{
    Preventing these arrangements of tiles will force a tiling to form a weave pattern. Note that these are \emph{not} symmetric under reflections. When arrows are not shown, both orientations are prevented.}
    \label{fig:forbiddenpairs}
\end{figure}

\begin{lemma}\label{lem:weavepattern}
Any tiling of the plane by a stick of length at least $5$ that satisfies the rules 1 through 11 must form a weave pattern.
\end{lemma}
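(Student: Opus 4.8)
The plan is to run a local-to-global forcing argument. Rules 1 through 11 are precisely a list of which pairs of edge labels may not be coincident, so the first task is to convert them into a complete description of the legal neighborhood of each edge of a stick: for every labelled edge, determine the full set of edges (and relative orientations) that can legally sit against it. The engine that makes these local constraints bite is the reflex (concave) corners of a stick: at each such corner the surrounding tiles must fill an angle exceeding $\pi$, which in an edge-to-edge tiling forces specific incident edges and thereby forces specific neighbors. I would therefore organize everything around the question ``what can fill each concave corner of a stick,'' and read the global structure off the resulting adjacency table.

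First I would analyze the arrow-tip edges $y_1,y_2$. Rules 1, 3, 5, 10, 11 forbid $y_i$ against another tip, against any end edge $x_j$, against the first long edges $a_1,c_1$, against $z_2$, and against $d_{n-1}$; enumerating the remaining possibilities and combining them with the requirement that the concave corner at the base of the arrow be filled, I expect to show that the tip can be covered in essentially only one way. This is what pins down both the claim that horizontal sticks have their arrows pointing down and the claim that sticks meet side-to-side (stacked) rather than tip-to-tip or tip-to-side. Second I would treat the long sides. Rules 6, 7, 8, 9 forbid an end edge ($x_1,z_2,x_2,z_1$) from abutting a same-side long edge, which kills all offset/shifted stackings; together with rule 2 (like-labeled ends) and rule 4 ($x$ against $z$) this forces that two sticks sharing a side are aligned and stacked, and — because the back side ($z_1a_1\ldots x_2$) and front side ($z_2c_1\ldots x_1$) are different — it forces a unique lean. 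Consequently every stick lies in a maximal stack; a horizontal stick forces a horizontal left-leaning stack, and a left-slanting stick forces a right-leaning (vertical) stack.

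Third, seeding from the guaranteed horizontal stick and its stack, I would propagate outward. I would argue that the stacks abutting a horizontal stack above and below must be vertical stacks of left-slanting sticks, that their interlocking forces the height-$n$ condition, and that the overlap-by-a-concave-corner condition holds (i.e. case~2 of \Cref{fig:forbiddenpairs} is excluded). Iterating, the horizontal stacks fill out the claimed diagonal grid, each grid cell is filled by a vertical stack of at least two sticks, and the left/right arrow split within a vertical stack is forced by the tip rules established in the first step. I would close by ruling out the third orientation: a right-slanting stick would have to abut the forced structure in a way that violates one of rules 1--11.

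The main obstacle will be this third step — upgrading the purely local statement (``adjacent sticks are aligned, stacked, and leaning a forced way'') to the rigid infinite grid, with no room for shifted rows, aperiodic defects, or pockets of the wrong orientation; this requires a connectivity/planarity argument rather than pure case-checking. I also expect the hypothesis $n\ge 5$ to enter exactly in the corner-filling analysis, where a short stick could match its end edges against interior long edges in spurious ways that longer sticks forbid, and I would flag precisely where it is used. The remainder is an exhaustive but routine case analysis, best made manageable by first fixing the complete edge-label adjacency table and only then deducing the global pattern.
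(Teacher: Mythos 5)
Your overall strategy---compile a local edge-adjacency table from rules 1--11 and propagate from a seed stack to the global grid---is the same general shape as the paper's argument, but several of your intermediate claims are false as stated, and they are not minor. First, your second step claims that rules 6--9 ``kill all offset/shifted stackings'' so that ``two sticks sharing a side are aligned and stacked.'' This is wrong: the rules are deliberately asymmetric (rule 6 forbids $(x_1,c_i)$ but not $(x_1,a_i)$; rule 7 stops at $d_{n-2}$, leaving $(z_2,d_{n-1})$ legal), precisely because the weave pattern \emph{requires} offset end-to-side contacts---adjacent horizontal stacks overlap by at least one concave corner, so the end edge of one stick sits against interior long edges of a stick in the neighboring stack. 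Any argument that derives ``side contact implies exact stacking'' proves too much and would rule out the weave pattern itself. Second, your claim of a ``unique lean'' (horizontal forces left-leaning, left-slanting forces right-leaning) confuses the lean of a stacked pair, which is rotation-invariant, with the slant of a stick, which depends on the chosen global orientation; both leans necessarily coexist in any weave pattern, and the orientation is fixed only \emph{after} locating a left-leaning stack. What is actually forced locally is weaker: all pairs within one stack lean the same way, and left-leaning stacks have height exactly $n$, while right-leaning stacks do not---their height varies between $2$ and $n-3$. Your third step's assertion that interlocking ``forces the height-$n$ condition'' on the vertical stacks is therefore false; the height-$n$ condition holds only for the left-leaning (horizontal) stacks and is proved directly from the rules, not from interlocking.

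Two further gaps: you take the seed for granted, but the existence of a left-leaning stack is a genuine step---the paper gets it by noting stacks have height at most $n$ (rule 4), taking a stick $S$ at the end of a stack, and doing casework on the stick filling the space to its left. And your guess about where $n\ge 5$ enters (spurious end-against-interior matchings for short sticks) misses the actual reason: the right-leaning stack of slanted sticks between two consecutive horizontal stacks must contain between $2$ and $n-3$ sticks (the leftmost must point right and the rightmost left by rules 10--11, and back-to-back stacking is barred by rule 9), which is only consistent when $n\ge 5$. Finally, the ``connectivity/planarity argument'' you flag as the main obstacle is not how the paper closes the argument: once one full left-leaning stack is oriented, the spaces above its two ends are each forced to contain a horizontal stick meeting some but not all concave corners of the back side, those sticks are forced into further height-$n$ left-leaning stacks, the region between them is forced to be a short right-leaning stack with the arrow split, and a symmetric argument works below; the grid then follows by induction plus one commutation check (the stack above-right of above-left equals above-left of above-right). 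So the global structure falls out of forced propagation, with no separate planarity machinery needed---but only after the local claims are stated correctly, which your current draft does not do.
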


\begin{proof}
Consider a tiling of the plane by sticks of length $n$. 

Consider a stack of sticks, WLOG oriented horizontally. For each stick in the stack, the spaces directly right and left of them must by filled by sticks that slant in the same direction that the stack leans. Otherwise, one of the rules would be violated, as shown in \Cref{fig:stackadjacent}. So a single stick can't be in both a left-leaning pair and a right-leaning pair, meaning that all the pairs in a single stack lean the same direction.

\begin{figure}
    \centering
    \includegraphics[page=6]{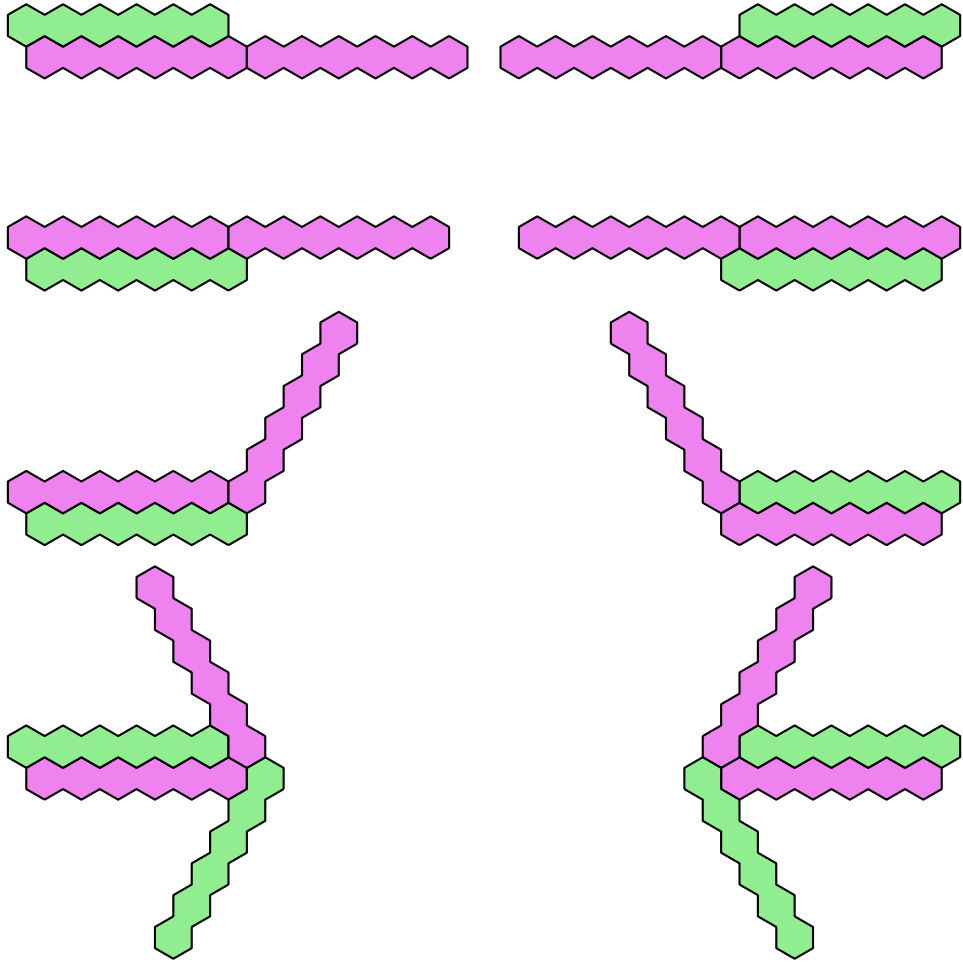}
    \caption{Trying to place a stick next to a stacked pair when the stick doesn't slant in the same direction that the stack leans. The pairs highlighted in purple violate one of the rules.}
    \label{fig:stackadjacent}
\end{figure}

If such a stack has height larger than $n$, then a stick next to that the stack violates rule 4, as shown in \Cref{fig:stackheightbound}. So each stack has height at most $n$. 

\begin{figure}
    \centering
    \includegraphics[page=7]{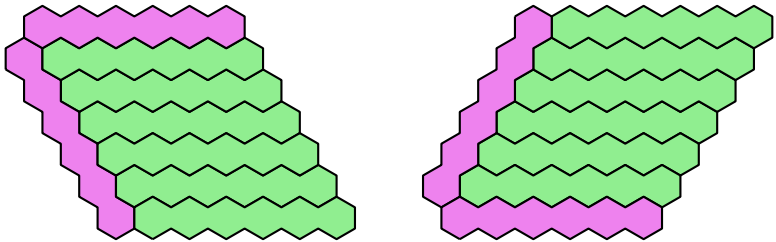}
    \caption{A stack of height more than $n$ introduces a pair violating rule 4 (highlighted in purple)}
    \label{fig:stackheightbound}
\end{figure}

If a left-leaning stack has height less than $n-1$, then this would introduce one of the cases shown in \Cref{fig:leftstacklowerbound1}, which are forbidden by the rules. If it has height exactly $n-1$, it must be as in one of the arrangements shown in \Cref{fig:leftstacklowerbound2}, which are also forbidden. So every left-leaning stack has height exactly $n$. Note that this is not true for right-leaning stacks.

\begin{figure}
    \centering
    \includegraphics[page=8]{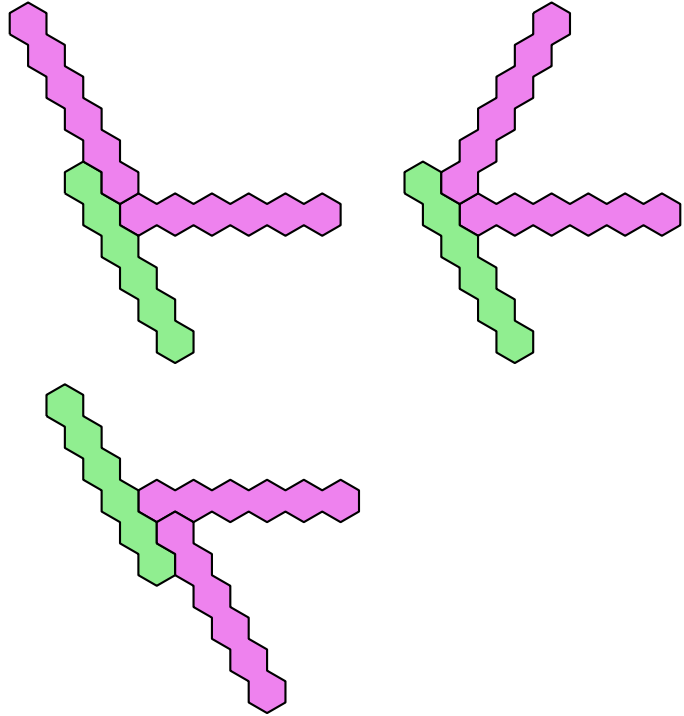}
    \caption{If a horizontal stick is placed next to a left-slanted stick, then all of the concave vertices on the right side the left-slanted stick must be filled by horizontal sticks. This shows that a left-leaning stack has height at least $n-1$.}
    \label{fig:leftstacklowerbound1}
\end{figure}

\begin{figure}
    \centering
    \includegraphics[page=9]{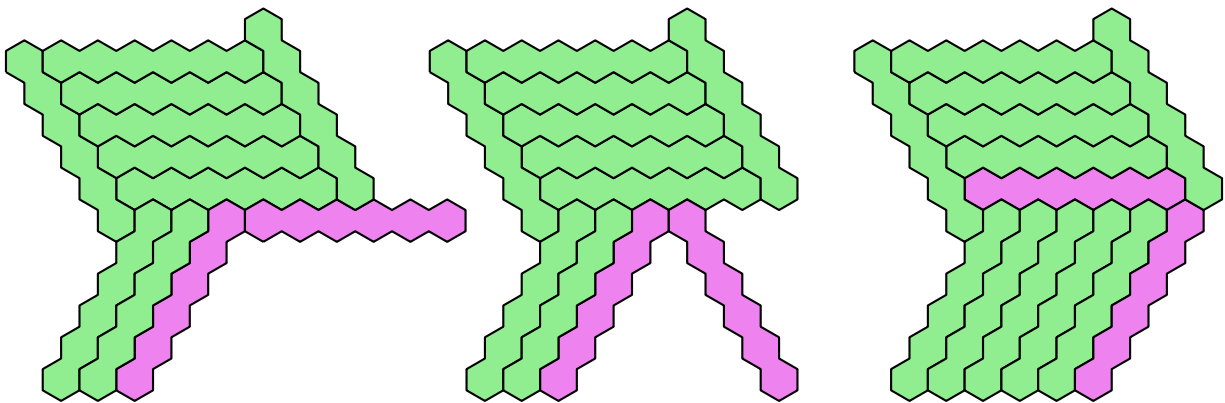}
    \caption{A left leaning stack can't have height exactly $n-1$, so it must have height $n$.}
    \label{fig:leftstacklowerbound2}
\end{figure}

Next, we argue that there must be a left leaning stack.
Since stacks have height at most $n$, there is a stick that is at the end of a stack or that is not part of a stack.
Call this stick $S$.
Orient $S$ so that it is horizontal and is at the bottom of a stack (if it is part of a stack).
Let $T$ be the stick that fills the space left of $S$, and note that $T$ is not horizontal.
If $T$ slants left, then $S$ is part of a left-leaning stack (otherwise it would introduce one of the configurations shown in \Cref{fig:leftstacklowerbound1}). If $T$ slants to the right, then the space below $S$ can only be filled by a left-leaning stack, as shown in \Cref{fig:alwayshavealeftstack}.

\begin{figure}
    \centering
    \includegraphics[page=10]{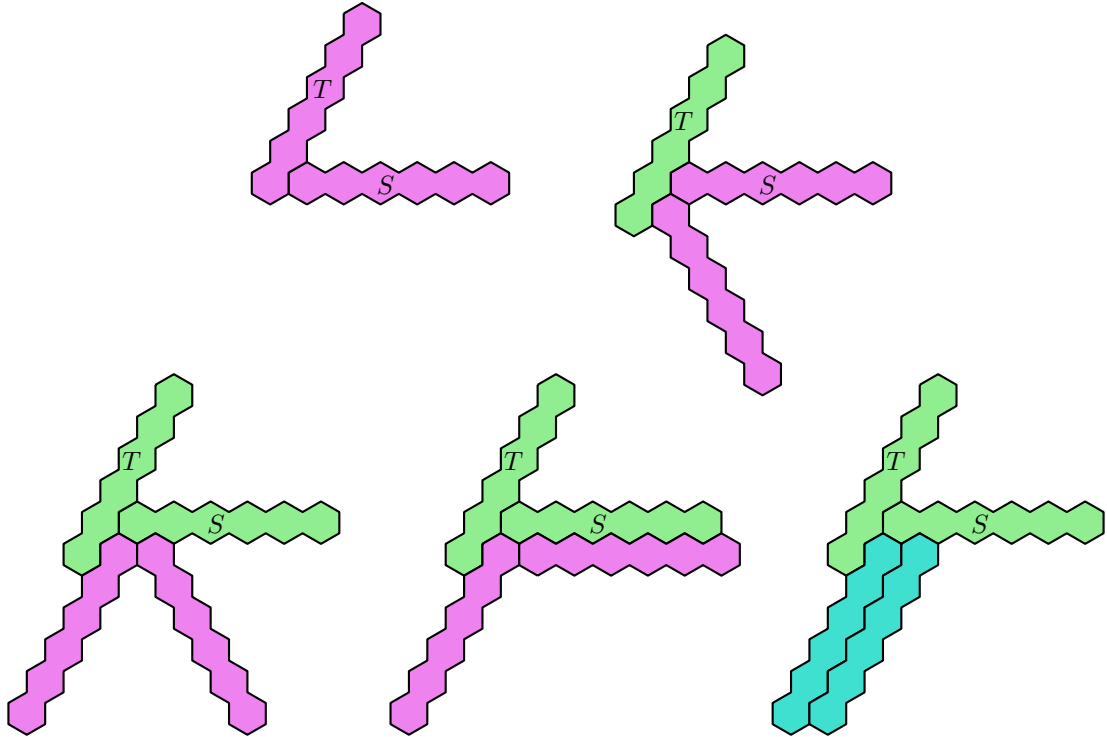}
    \caption{If $T$ slants right, then the space below $S$ must be filled by a left-leaning stack.}
    \label{fig:alwayshavealeftstack}
\end{figure}
 
Now choose some left leaning stack. By the above, it has length $n$. Rules 6 and 8 require that all the sticks in this stack have the same orientation. Orient the stack so that the sticks are horizontal with arrows facing down. The left-slanting sticks left and right of the stack have arrows pointing away from the stack because of rules 10 and 11. This configuration is shown in \Cref{fig:fullleftstack}. Let $S$ be the uppermost stick in this stack.

\begin{figure}
    \centering
    \includegraphics[page=11]{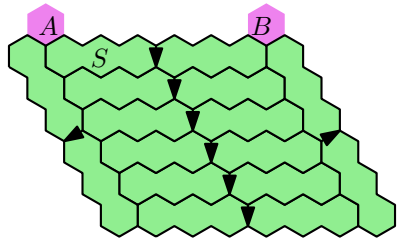}
    \caption{If $T$ slants right, then the space below $S$ must be filled by a left-leaning stack, as in the bottom-right figure.}
    \label{fig:fullleftstack}
\end{figure}

The space marked $A$ in \Cref{fig:fullleftstack} must be filled by a horizontal stick (call it $T$, see \Cref{fig:leftstackaboveleft1}) that meets some, but not all, of the concave corners on the back side of $S$; the other possibilities fail, as shown in \Cref{fig:fillingspaceA}. The space right of $T$ then must be filled by a left-slanted stick, otherwise introducing one of the cases shown in \Cref{fig:stickrightofT}. As argued above (see \Cref{fig:leftstacklowerbound1}), this means that $T$ is part of a left-leaning stack, which must be as in \Cref{fig:leftstackaboveleft2}. Note that $T$ has an arrow facing down because of rules 6 and 8.

\begin{figure}
    \centering
    \includegraphics[page=12]{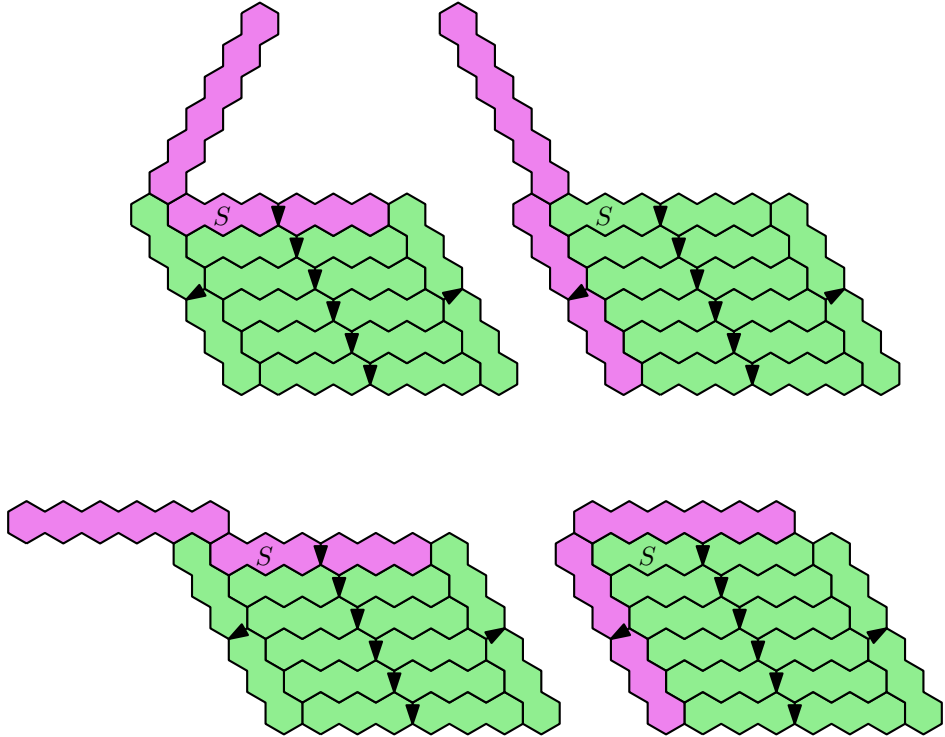}
    \caption{The space marked $A$ in \Cref{fig:fullleftstack} can't be filled in any of the ways shown.}
    \label{fig:fillingspaceA}
\end{figure}

\begin{figure}
    \centering
    \includegraphics[page=13]{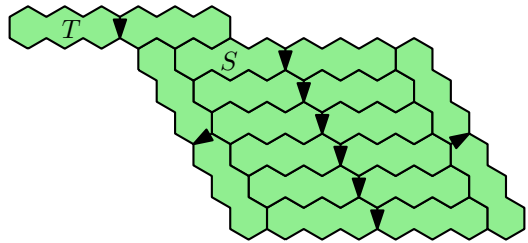}
    \caption{$T$ must be a horizontal stick, facing downwards, that meets some, but not all, of the concave corners on the back side of $S$.}
    \label{fig:leftstackaboveleft1}
\end{figure}

\begin{figure}
    \centering
    \includegraphics[page=14]{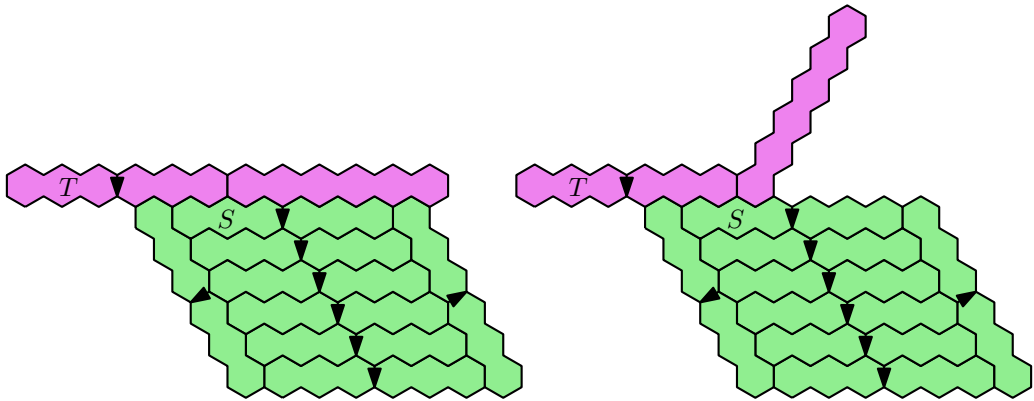}
    \caption{The space right of $T$ must contain a left-slanted stick.}
    \label{fig:stickrightofT}
\end{figure}

\begin{figure}
    \centering
    \includegraphics[page=15]{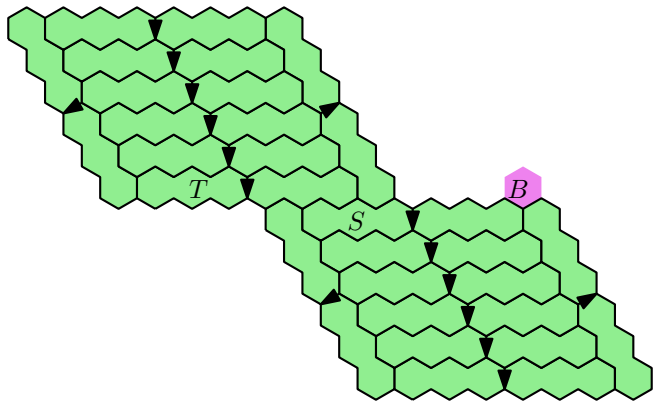}
    \caption{$T$ is the bottom of a left-leaning stack.}
    \label{fig:leftstackaboveleft2}
\end{figure}

Now we can argue that the space marked $B$ in \Cref{fig:fullleftstack,fig:leftstackaboveleft2} must be filled by a horizontal stick $U$ that meets some, but not all, of the concave corners on the back side of $S$. The space between $U$ and $T$ must be filled with left-slanted sticks; \Cref{fig:fillingspaceB} shows how other possible arrangements fail.
The stick $U$ must be oriented with the arrow pointing down by rules 7 and 9. Again, $U$ must be part of a left-leaning stack, shown in \Cref{fig:twostacksabove}.

\begin{figure}
    \centering
    \includegraphics[page=16]{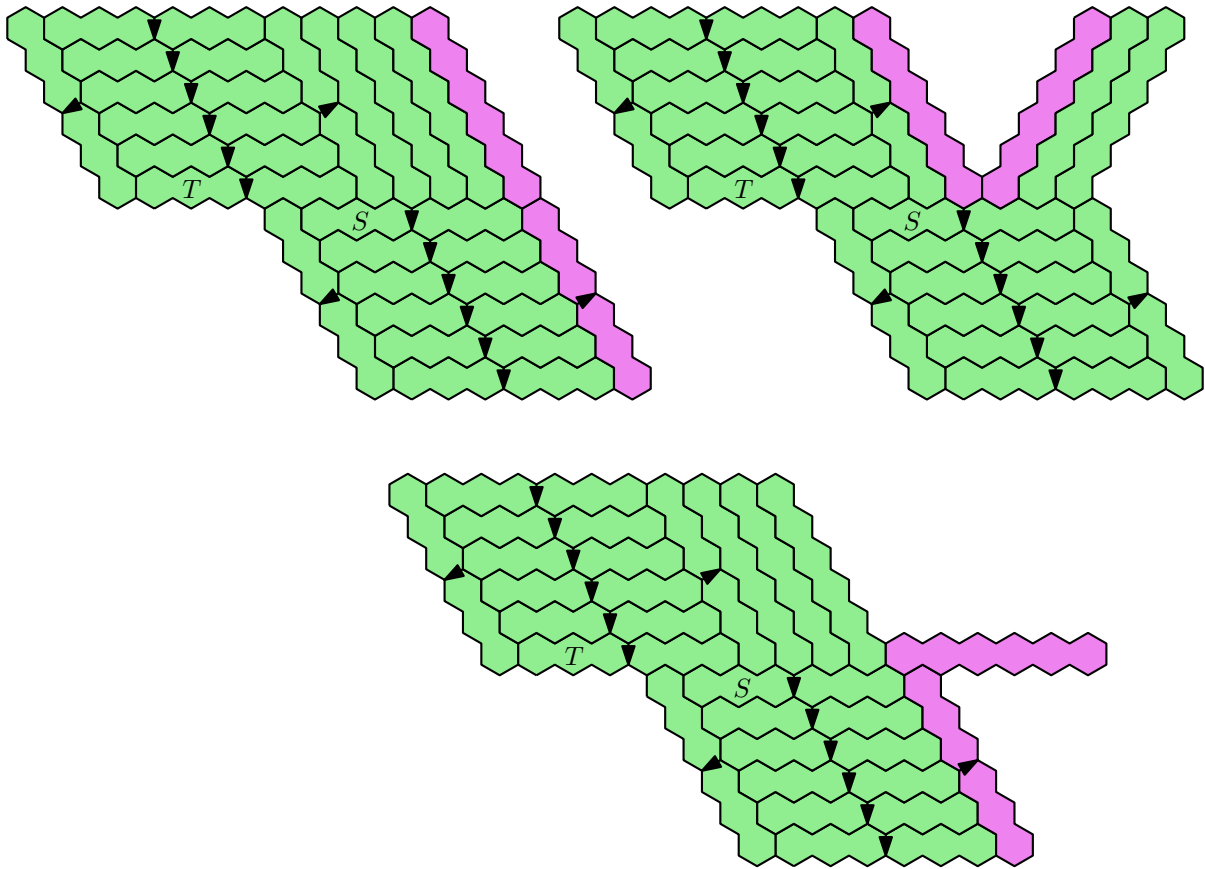}
    \caption{The space $B$ must also be filled by a horizontal stick meeting some, but not all, of the concave corners on the back side of $S$. We use the presence of the stack containing $T$ to make this conclusion.}
    \label{fig:fillingspaceB}
\end{figure}

\begin{figure}
    \centering
    \includegraphics[page=17]{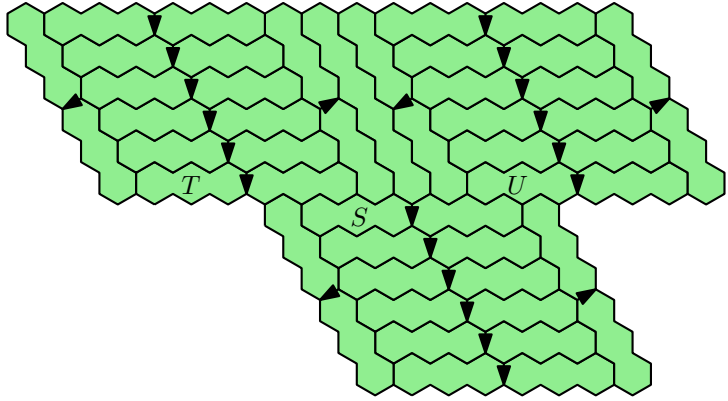}
    \caption{Every left-leaning stack has two left-leaning stacks meeting its back side.}
    \label{fig:twostacksabove}
\end{figure}

Examining the right-leaning stack of left-slanted sticks between $T$ and $U$, we see that the leftmost of these sticks has an arrow pointing to the right and the rightmost has an arrow pointing to the left. Two sticks in a right leaning stack can be stacked front-to-front, but not back-to-back (because of rule 9). So some number of sticks on the left must face right and the rest of them face left. The number of sticks between $T$ and $U$ is not fixed, but is at least $2$ and at most $n-3$.

A similar argument that every left-leaning stack also has meets two left-leaning stacks below, as in \Cref{fig:fourstacksaround}. It remains to check that the left-leaning stacks forms a grid. Specifically, if $S$ is a left-leaning stack, we need to check that the stack above-right of the stack above-left of $S$ is the same as the stack above-left of the stack above-right of $S$. But this is straightforward.

\begin{figure}
    \centering
    \includegraphics[page=18]{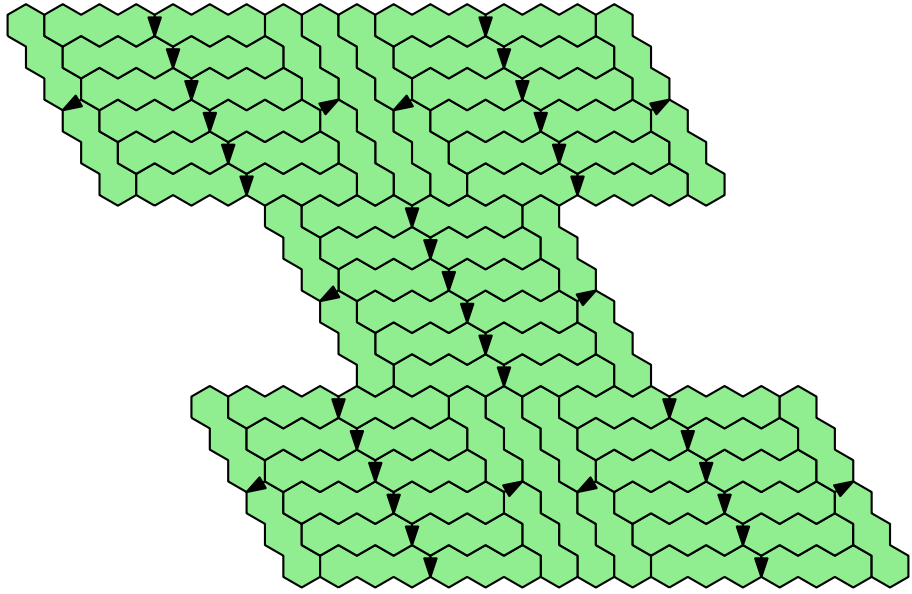}
    \caption{Every left leaning stack has $4$ adjacent left-leaning stacks---two on the front side and two on the back side.}
    \label{fig:fourstacksaround}
\end{figure}

By induction, the entire plane must be filled with a grid of left-leaning stacks meeting each other in this way. So the tiling forms a weave pattern.

\end{proof}

\subsection{AB tiles}

Given a set of Wang tiles, we want to construct a stick tile and some edge-to-edge matching rules so tilings with the stick tile correspond to tilings with the Wang tiles. Instead of simulating Wang tiles directly, we go through an intermediate problem that we call \emph{AB tiling}.

In the AB tiling problem, we are given a two sets $A$ and $B$ of $1\times 2$ rectangular tiles, where $|A|=|B|$. The left and right long edges of each tile are each assigned a top color and a bottom color. The problem asks whether the tiles can tile the plane without rotation, where an $A$ tile must be placed so that its lower-left corner has even $x$ and $y$ coordinates, and a $B$ tile must be placed so that its lower-left corner has odd $x$ and $y$ coordinates (that is, they form a pattern like the one shown in \Cref{fig:abpattern}). The colors should match along each vertical edge.

\begin{figure}
    \centering
    \includegraphics[page=1]{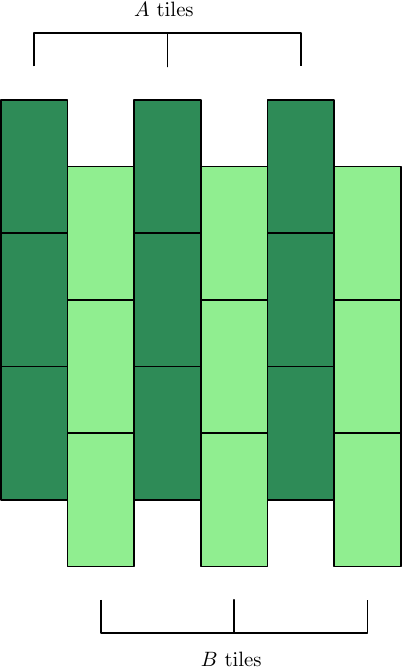}
    \caption{The pattern that an AB tiling should make.}
    \label{fig:abpattern}
\end{figure}

\begin{lemma}\label{lem:ABtiling}
AB tiling is undecidable. 
\end{lemma}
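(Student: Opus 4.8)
The plan is to reduce from the Wang tiling problem, which is undecidable by Berger \cite{BergerWangTiling}. Given a finite set $W$ of Wang tiles, I would build sets $A$ and $B$ of AB tiles with $|A|=|B|=|W|$ so that AB tilings correspond exactly to Wang tilings of $\mathbb{Z}^2$. The observation driving the reduction is that the offset brick pattern of \Cref{fig:abpattern}, together with the requirement that colors match along the vertical half-edges, is nothing but a copy of the square grid $\mathbb{Z}^2$ rotated by $45^\circ$: the set of legal AB tilings coincides with the set of Wang tilings of a grid whose two color classes must be filled by $A$-tiles and $B$-tiles respectively.

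To make this precise I would first analyze the adjacency structure of the brick pattern. Each long (vertical) edge of a tile has length $2$ and is split by the offset into two unit segments, a top segment and a bottom segment, carrying the top and bottom colors. Consequently every tile meets exactly four neighbors along vertical edges---one through each of its four half-edges (top-left, bottom-left, top-right, bottom-right)---sitting diagonally up-left, down-left, up-right and down-right of it, alternating between $A$-tiles and $B$-tiles. Passing to coordinates $(p,q)$ in which these four diagonal directions become the four axis directions (a $45^\circ$ rotation, sending the tile indexed by lower-left corner $(i,j)$ to a lattice point whose parity of $p+q$ records whether it is an $A$- or $B$-tile) turns the half-edge adjacency into the orthogonal adjacency of $\mathbb{Z}^2$, with $A$- and $B$-tiles occupying the two classes of the checkerboard. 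I would check that vertically stacked tiles of the same type become diagonal, hence non-adjacent, in these coordinates; this is exactly why the absence of any matching rule along horizontal edges is harmless.

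Given this identification, the encoding is immediate. For each Wang tile $t$ with colors $N(t),E(t),S(t),W(t)$ I create tiles $A_t$ and $B_t$ by writing these four colors onto the four half-edges according to the port correspondence above---for instance top-right $=E(t)$, bottom-right $=S(t)$, top-left $=N(t)$, bottom-left $=W(t)$---using the \emph{same} assignment for $A_t$ and $B_t$. Because the assignment agrees for both types, each of the four half-edge matching constraints between neighbors reduces to exactly one Wang adjacency constraint (for example, the top-right/bottom-left match between a tile and its up-right neighbor becomes $E(t)=W(t')$). This yields a bijection between AB tilings and Wang tilings of $\mathbb{Z}^2$: a Wang tiling is realized by placing $A_t$ on even cells and $B_t$ on odd cells, and conversely every AB tiling reads off a Wang tiling by forgetting the $A/B$ label. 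Hence $A,B$ admit an AB tiling if and only if $W$ tiles the plane, and undecidability follows.

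The conceptual heart, and the step I expect to require the most care, is the coordinate change establishing the $45^\circ$ identification: verifying that all four half-edges map to the correct neighbors and that each matching constraint translates faithfully into the intended Wang constraint. Once this bookkeeping is set up correctly, the encoding and the equivalence are routine.
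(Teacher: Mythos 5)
Your proof is correct, but it takes a genuinely different route from the paper's. Both arguments reduce from Wang tiling, but where you exploit the $45^\circ$ identification of the brick pattern's adjacency graph with $\mathbb{Z}^2$, the paper instead glues tiles in pairs: for the $i$th Wang tile with colors $a,b,c,d$ on its left, right, upper, and lower faces, it creates an $A$ tile with half-edge colors $0, c, a, k+i$ (upper-left, upper-right, lower-left, lower-right) and a $B$ tile with colors $k+i, b, d, 0$, where $k+i$ is a fresh color appearing on no other tile. The fresh color forces the $A$ and $B$ tiles of index $i$ to join into an S-tetromino, and each \emph{tetromino} --- not each single AB tile --- simulates one Wang tile (see \Cref{fig:absimulation}), so the Wang grid embeds axis-aligned into the AB pattern at the cost of two AB tiles per Wang tile and a color set of size $n+k+1$. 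You instead write each Wang tile's four colors directly onto a \emph{single} AB tile, using the same assignment for the $A$ and $B$ copies; I checked your bookkeeping and it is sound: with top-right $=E$, bottom-left $=W$, top-left $=N$, bottom-right $=S$, the up-right adjacency of the brick pattern yields exactly $E(t)=W(t')$ and the up-left adjacency yields exactly $N(t)=S(t'')$, while tiles stacked in the same column become diagonal in the rotated coordinates, so the lack of horizontal-edge rules is indeed harmless, and the checkerboard parity of $p+q$ correctly separates $A$ from $B$. Your reduction is more economical (only $k$ colors, one AB tile per Wang tile, a clean bijection with Wang tilings of the rotated lattice), whereas the paper's pairing trick buys an axis-aligned, visually immediate correspondence with the Wang grid and avoids any coordinate change. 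Nothing downstream depends on the specific reduction --- the stick-tile construction simulates arbitrary AB instances --- so your version would serve equally well as a proof of \Cref{lem:ABtiling}.
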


\begin{proof}
We reduce from Wang tiling; the construction is illustrated in \Cref{fig:absimulation}.
Consider a set of $n$ Wang tiles with colors $\{1, \dots, k\}$.
We will create a set of AB tiles with $|A|=|B|=n$ and colors $\{0, \dots, n+k\}$.

For each Wang tile, we create one $A$ tile and one $B$ tile. Suppose the $i$th Wang tile has colors $a, b, c$ and $d$ on the left, right, upper, and lower faces respectively. Then the corresponding $A$ tile has colors $0$, $c$, $a$ and $k+i$ on the upper left, upper right, lower left, and lower right faces respectively, and the $B$ tile has colors $k+i$, $b$, $d$, $0$ on those faces.

For each Wang tile we can make an S-tetromino made from the corresponding $A$ and $B$ tiles. Every tile must be part of one of these tetrominoes since the $A$ and $B$ tiles corresponding to the $i$th Wang tiles are the only tiles with color $k+i$. It is easy to see that tilings with these tetrominos are in correspondence with tilings by the Wang tiles.

\begin{figure}
    \centering
    \includegraphics[page=2]{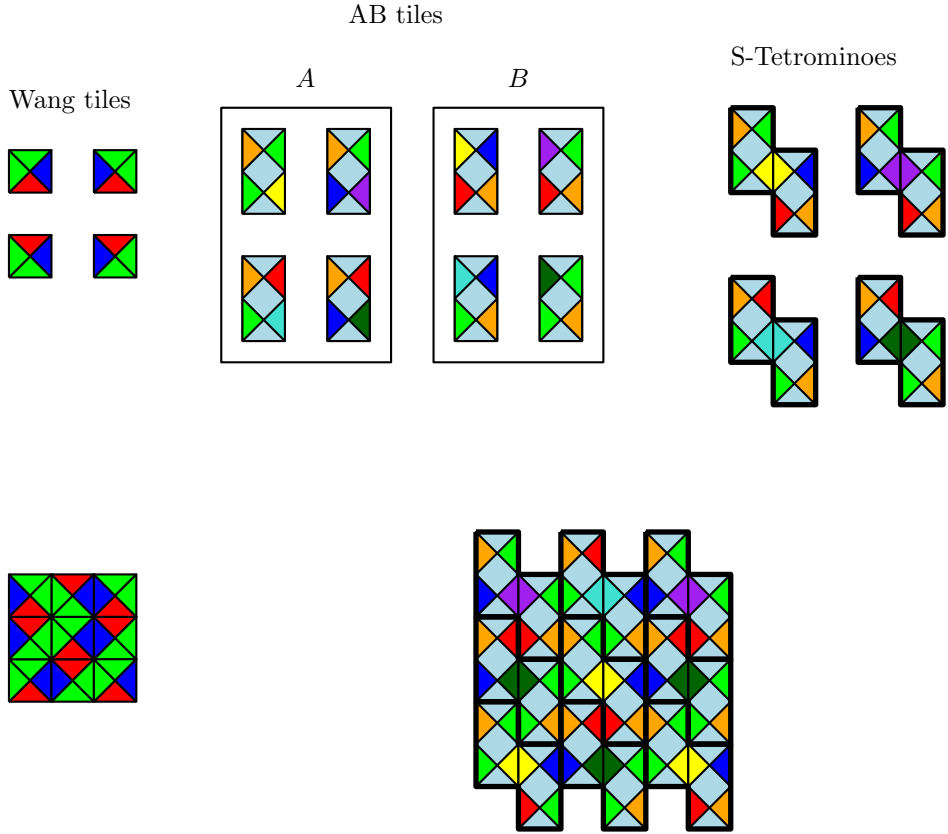}
    \caption{A set of Wang tiles (upper left) can be simulated by a set of AB tiles (upper middle). The tiles join to form S-tetrominoes (upper right). The lower left figure shows a patch of a tiling with these Wang tiles and the lower right shows a simulation of this patch with one S-tetromino per Wang tile.}
    \label{fig:absimulation}
\end{figure}
\end{proof}

\subsection{Tiling schematics}\label{sec:schematics}

A weave pattern tiling can be represented by a much simpler schematic. An example is shown in \Cref{fig:tilingschematic}.

\begin{figure}
    \centering
    \includegraphics[page=1]{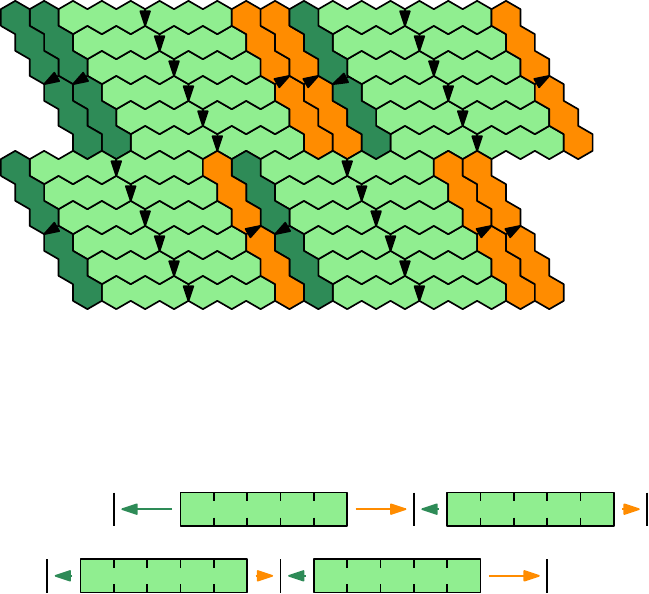}
    \caption{A section of a weave pattern tiling together with a schematic representation.}
    \label{fig:tilingschematic}
\end{figure}

A \emph{tiling schematic} consists of rows of $n\times 1$ blocks. Between adjacent blocks in a row there are gaps of length at least $2$, and each gap is dividing into a left part and right part, each of length at least $1$.
The left part of a gap is represented by a right-facing arrow and the right part of a gap is represented by a left-facing arrow. The \emph{left length} (resp. \emph{right length}) of a gap is the length of the left (resp. right) part of the gap.
The rows are put together so that the blocks in adjacent rows overlap by at least one unit.

A schematic with blocks of length $n$ corresponds to a weave-pattern tiling of sticks with length $n+1$. Every block in the schematic represents a horizontal stack in the tiling.
The top edge of a block is divided into $n$ unit segments, corresponding to the $n$ concave vertices along the upper side of the top tile in the corresponding horizontal stack. Similarly, the unit segments on the bottom edge of a block correspond to the concave vertices along the lower side of the bottom tile in that stack. 
We define \emph{position} $i$ on a block as the point at distance $1+i$ from the left side of the block.

In order to simulate an AB tiling, we need to add some edge-to-edge matching rules in addition to those shown in \Cref{fig:forbiddenpairs}. These will be represented in the schematics by arrow and triangle marks placed along the top and bottom faces of a block. 

\begin{itemize}
    \item An arrow marking on top (resp. bottom) of a block should line up with the arrows in a gap if it occurs below (resp. above) a gap. The forbidden cases are shown in \Cref{fig:arrowmarking}. A left (resp. right) arrow mark on the top of a block pointing at position $i$ means that the pair $(a_{n-i}, y_2)$ (resp. $(a_{n-i-1}, y_1)$) is forbidden. A left (reps. right) arrow mark on the bottom of a block pointing at position $i$ means that the pair $(c_{i+1}, y_1)$ (resp. $(c_{i+2}, y_2)$) is forbidden. 
    \item Triangles on the top of a block represent the places where the ends of gaps in the upper row can be. The allowed and forbidden cases are shown in \Cref{fig:trianglemarking}. The pair $(a_{n-i}, x_1)$ is forbidden unless there is a triangle mark facing position $i$, and the pair $(b_{n-1-i}, z_2)$ is forbidden unless there is a triangle facing position $i$.
\end{itemize}

\begin{figure}
    \centering
    \includegraphics[page=2]{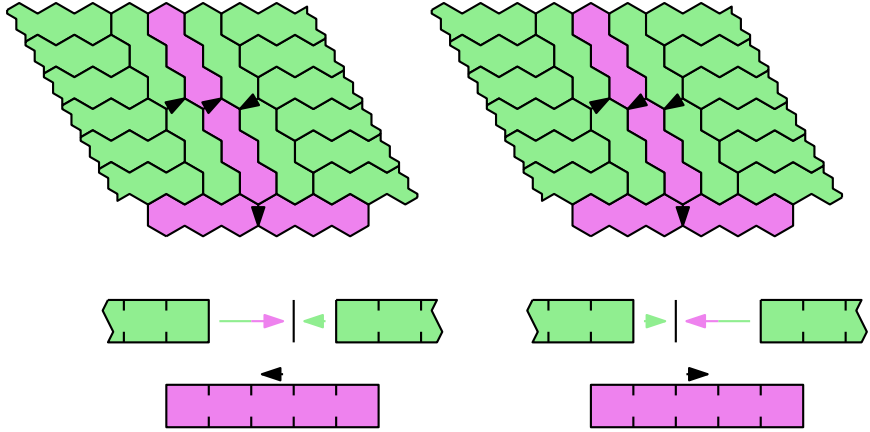}
    \caption{The forbidden configurations represented by an arrow marking.}
    \label{fig:arrowmarking}
\end{figure}

\begin{figure}
    \centering
    \includegraphics[page=3]{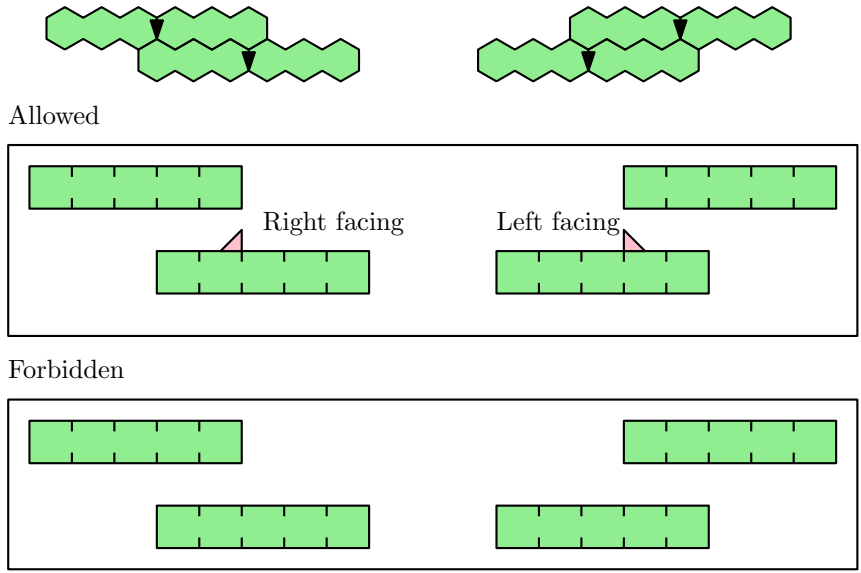}
    \caption{The ends of a block in the row above can only occur above a triangle marking facing in that direction. }
    \label{fig:trianglemarking}
\end{figure}

These types of edge pairs can only occur in a weave pattern in the ways represented by \Cref{fig:arrowmarking,fig:trianglemarking}. So it is clear that a schematic satisfying these conditions with respect to the markings will yield a stick tiling that satisfies the corresponding matching rules, and vise versa.

\subsection{States and buckets}

\Cref{fig:simpleoperation} shows an example of the type of constraint that can be created by combining arrow and triangle markings. In the example, the left (resp. right) length of a gap differs from the right (resp. left) length of the gap below it to the left (resp. right) by at most $1$. The triangle marks prevent gaps in the lower row from having lengths other than $1$ or $3$. 

\begin{figure}
    \centering
    \includegraphics[page=4]{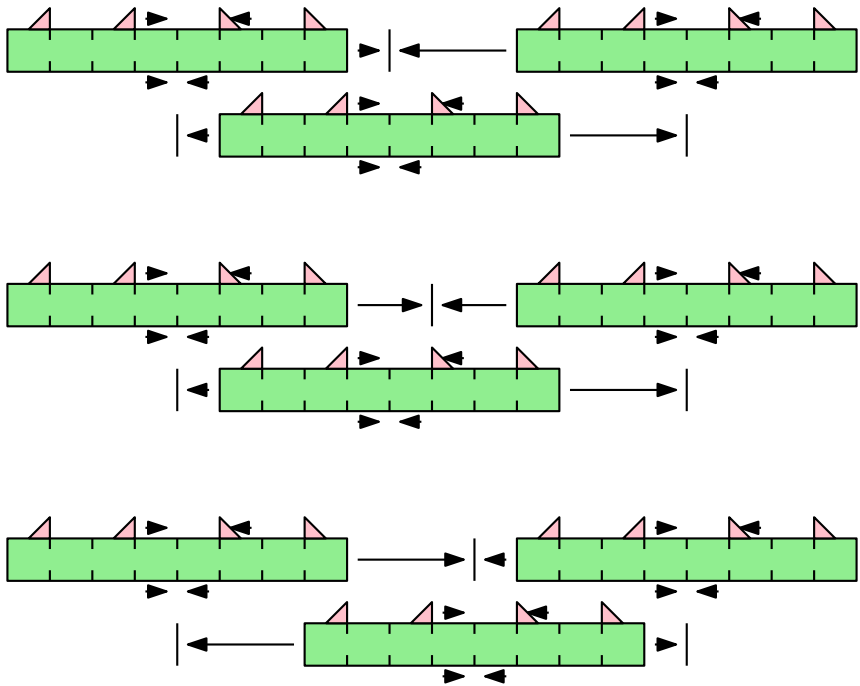}
    \caption{An example of the type of constraint that can be created with arrow and triangle markings. Triangle marks on top and arrow marks on the bottom of the block are used to restrict the set of possible lengths for the left and right sides of the gaps. Triangle marks on top of the block create a constraint between gaps in adjacent rows. }
    \label{fig:simpleoperation}
\end{figure}

We will enforce the structure of an AB tiling with a sequence of constraints like this. Different constraints occur depending on what part of the block is below a gap in the row above it. We divide the block into \emph{buckets}, as in \Cref{fig:morebuckets}. Each bucket represents some simple constraint. 

\begin{figure}
    \centering
    \includegraphics[page=6]{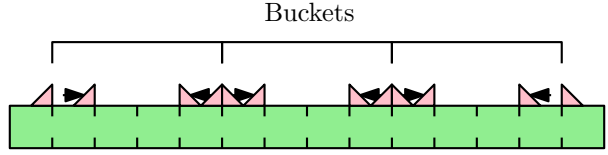}
    \caption{The block is divided into a number of equally-sized buckets. The boundary between adjacent buckets has a pair of arrows facing back-to-back. The left (resp. right) half of a bucket can have right (resp. left) triangle marks. All possible triangle marks in this example are shown.}
    \label{fig:morebuckets}
\end{figure}

The boundary between adjacent buckets is marked on the top of the block with a pair of arrows facing back-to-back (as in \Cref{fig:morebuckets}). Two adjacent spaces in a gap have directions facing front-to-back or front-to-front, so gaps can't cross these boundaries. This means that each gap is above a single bucket.

If we place the triangle marks carefully, then we can ensure that all the gaps in a single row are above the same bucket in the row below.
We say that a row $R$ is in state $i$ if each gap in the row above $R$ is above the $i$th bucket of some block in row $R$.

The states define a sequence of operations that should repeat in a cyclic pattern. If a row has state $i$, then the row below it should have state $\sigma(i)$, where $\sigma$ is a cyclic permutation. To enforce this, we can control the left and right lengths of each gap, so that the gaps in a row with state $i$ can only fit above bucket $\sigma(i)$. A simple example is shown in \Cref{fig:statecycle}. 

\begin{figure}
    \centering
    \includegraphics[page=7,width=\textwidth]{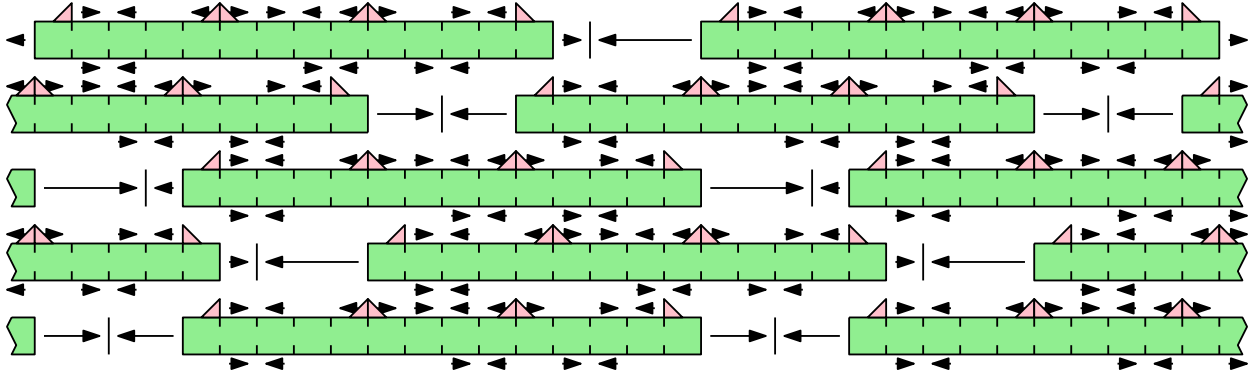}
    \caption{A block with $3$ buckets. Bucket $i$ can only fit below a gap with left length $i$ and right length $4-i$. Each row has gaps that only fit above the next bucket in the cycle, so the rows cycle through the states. }
    \label{fig:statecycle}
\end{figure}

In what follows, we give a detailed construction of the buckets. We will ensure that the rows cycle through the states in a particular order, and we will describe in detail the types of constraint that a single state can represent. The goal is to specify the buckets in such a way that the set of valid tilings is fully characterized by a simple set of conditions called the \emph{gap conditions}.

\subsection{The gap conditions}\label{sec:gapconditions}

The block is divided into $s$ buckets as in \Cref{fig:morebuckets}. These are labeled $1, \dots, s$. Let $\ell_b$ be the length of a bucket, so bucket $i$ spans from positions $(i-1)\ell_b$ to $i\ell_b$. For each $1\le i\le s$, there is a right arrow mark facing position $(i-1)\ell_b+1$ and a left arrow mark facing position $i\ell_b-1$. 

We place right-facing triangles only at positions $x$ satisfying $(i-1)\ell_b\le x < (i-\frac12)\ell_b$ for some $i$. We place left-facing triangles only at positions $x$ satisfying $(i-\frac12)\ell_b< x\le i\ell_b$. That is to say, only the left half of a bucket can have right-facing triangle marks and only the right half of a bucket can have left-facing triangle marks. All possible triangle marks are shown in \Cref{fig:morebuckets}.

\begin{lemma}\label{lem:gapsinsamebucket}
Each gap is contained in one of the buckets in the row below. If two gaps are in the same row, then they occur above buckets with the same index.
\end{lemma}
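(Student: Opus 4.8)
The plan is to treat the two assertions of \Cref{lem:gapsinsamebucket} separately: the containment claim follows from the back-to-back arrow marks at the bucket boundaries, while the same-index claim uses the triangle marks together with the fixed block length. Throughout, the gaps lie in one row and the buckets they sit over lie in the row below it.

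First I would show that no gap crosses a bucket boundary. Reading a gap from left to right, its left part consists of right-facing arrows and its right part of left-facing arrows, so a gap never contains a left-facing arrow immediately followed by a right-facing one. A bucket boundary, however, carries exactly such a back-to-back pair: by the placement fixed in \Cref{sec:gapconditions}, position $i\ell_b-1$ (the right end of bucket $i$) carries a left arrow and position $i\ell_b+1$ (the left end of bucket $i+1$) carries a right arrow. If a gap covered the boundary at $i\ell_b$, the arrow-marking rule (\Cref{fig:arrowmarking}) would force a left-facing gap arrow at $i\ell_b-1$ and a right-facing gap arrow at $i\ell_b+1$; but then a left-facing arrow would sit to the left of a right-facing arrow inside a single gap, contradicting its left-part/right-part structure. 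Hence each gap lies above a single bucket, giving the first assertion.

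For the second assertion I would set up a global horizontal coordinate and track the two endpoints of each gap. Using the triangle marks (\Cref{fig:trianglemarking}), the left endpoint of a gap (the right edge of the block immediately to its left) must sit above a triangle of the appropriate handedness, and by the placement rule such triangles occur only in the left half of a bucket; symmetrically the right endpoint must lie in the right half. Combined with the first assertion, both endpoints of a gap therefore lie in the same bucket $i$, the left endpoint in its left half and the right endpoint in its right half, and the right-arrow mark at $(i-1)\ell_b+1$ together with the left-arrow mark at $i\ell_b-1$ pin the left and right lengths of the gap relative to bucket $i$. I would then argue by induction along the row: consecutive gaps $g_j,g_{j+1}$ are separated by exactly one block, so the left endpoint of $g_{j+1}$ is the right endpoint of $g_j$ shifted right by the block length $n=s\ell_b$, that is, by exactly $s$ whole buckets. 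Since a shift of $s\ell_b$ returns to the same offset within the lower row's bucket grid, the only way the index of $g_{j+1}$ can differ from that of $g_j$ is through the lower-row gap separating the two lower-row blocks beneath them. The triangle marks restrict the admissible lengths of that lower gap, and the arrow marks couple those lengths to the left/right lengths already fixed for $g_j$ (the mechanism illustrated in \Cref{fig:simpleoperation,fig:statecycle}); the construction is calibrated so that this coupling returns $g_{j+1}$ to a bucket of the same index $i$. Propagating this equality along the row gives the claim.

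The main obstacle I expect is exactly this last bookkeeping. Because the lower row has gaps of variable length, the relative horizontal offset between the two rows is not fixed a priori, so I must verify that the triangle-length restriction and the arrow-mark length coupling are rigid enough that the bucket index cannot drift as one moves along the row. Checking that the chosen values of $\ell_b$ and $s$ and the exact triangle positions make this coupling \emph{exact} rather than merely approximate is the crux; once the half-bucket placement of the endpoints is in hand, the remaining verification is routine.
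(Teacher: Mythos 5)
Your treatment of the first assertion is essentially the paper's own: the back-to-back arrow marks at each bucket boundary (left arrow at position $i\ell_b-1$, right arrow at position $i\ell_b+1$) are incompatible with the internal structure of a gap, whose left part carries right-facing arrows and whose right part carries left-facing arrows, so no gap straddles a boundary. One small omission: a gap could a priori also sit over the unit spaces at the two ends of a lower block, which belong to no bucket --- note the block has length $s\ell_b+2$, not $s\ell_b$ as you assume later --- and the paper rules this out using the requirement that blocks in adjacent rows overlap by at least one unit. That is easily patched.

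The second assertion is where your proposal has a genuine gap, and you half-admit it yourself. The step ``a shift of $s\ell_b$ returns to the same offset within the lower row's bucket grid'' is false twice over: the shift is $s\ell_b+2$, and, more importantly, there is no rigid bucket grid in the lower row, because the two lower blocks $Y$ and $Z$ beneath the endpoints of consecutive gaps are separated by a lower-row gap of variable length $k$, so the relative position drops by $k$ in passing from $Y$-coordinates to $Z$-coordinates. You then defer the decisive point to an unverified ``calibration,'' but that calibration is the entire content of the lemma, and the mechanism you propose for it (in-bucket arrow marks ``pinning'' the left and right lengths of $g_j$) is not how it works: those arrow marks constrain only the position of the direction change inside a gap and play no role in this lemma --- they are what drives \Cref{lem:leftandrightsidelengths,lem:statecycle,lem:intervalbound} later. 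The paper's actual argument is a short inequality chase using exactly the two ingredients you already assembled: the left end of the upper block $X$ sits over a left-facing triangle, hence at relative position $j$ with $\tfrac12\ell_b<j\le\ell_b$ in bucket $i$ of $Y$; the right end of $X$ then sits above position $(i-1)\ell_b+j-k$ in $Z$, and since the lower gap is itself contained in a bucket (the first assertion applied to the row below) one has $2\le k\le\ell_b$, so $-\tfrac12\ell_b<j-k<\ell_b$; demanding a right-facing triangle there, i.e.\ $(i'-1)\ell_b\le(i-1)\ell_b+j-k<(i'-\tfrac12)\ell_b$, forces $i'=i$. Without this computation your induction along the row has nothing to propagate, so the proposal as written does not establish that gaps in the same row lie over buckets of the same index.
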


\begin{proof}
The rightmost square of bucket $i$ has a left-facing arrow while the leftmost square of bucket $i+1$ has a right-facing arrow. These arrows are facing back-to-back, but adjacent spaces in a gap above either have arrows facing the same direction or arrows facing front-to-front. So one side of this boundary must be underneath a block and not underneath a gap. The two spaces at the ends of the block are not in any bucket, but the blocks must overlap by at least one unit, so these spaces can't be below a gap.
This shows that the gap is contained in a bucket in the row below.

Now we argue that two gaps in the same row must be above buckets with the same index. The argument is illustrated by \Cref{fig:gapsinsamebucket}.
Suppose that the left end of a block $X$ in an upper row occurs above position $(i-1)\ell_b+j$ in a block $Y$ below it, with $1\le j\le \ell_b$. There must be a left-facing triangle mark at this position, so $\frac12\ell_b < j\le \ell_b $. If $Z$ is the block right of $Y$, then the right end of $X$ occurs above position $(i-1)\ell_b+j-k$ in $Z$, where $k$ is the size of the gap between $Y$ and $Z$. Note that  $2\le k\le \ell_b$, so $-\frac12\ell_b<j-k<\ell_b$.

There is a right-facing triangle mark below the right end of $X$, so $(i'-1)\ell_b\le (i-1)\ell_b+j-k<(i'-\frac12)\ell_b$ for some $i'$. So $(i-i')\ell_b\le j-k<(i'-i+\frac12)\ell_b$. Since $-\frac12\ell_b<j-k<\ell_b$, we have that $-1<i-i'<1$, and conclude that $i=i'$.

\begin{figure}
    \centering
    \includegraphics[page=8]{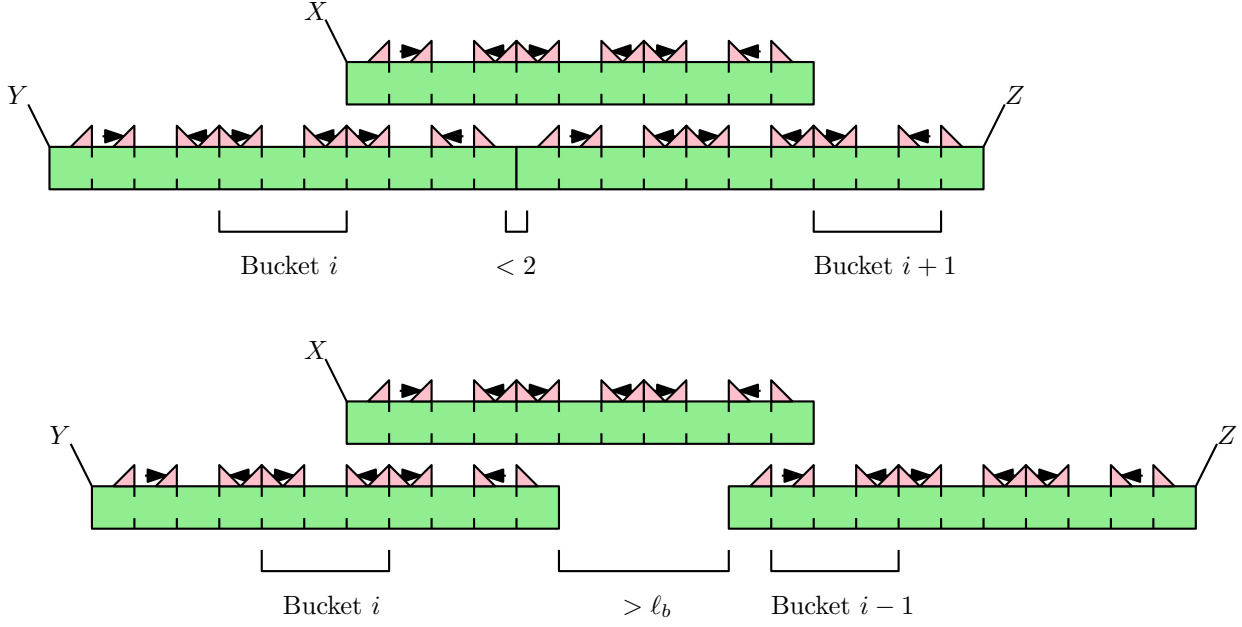}
    \caption{If two adjacent gaps in a row occur above different buckets in the row below, then there is a gap in that row which is either too long or too short.}
    \label{fig:gapsinsamebucket}
\end{figure}

So any two adjacent gaps in the same row are above the same index bucket in the row below. By induction, this is true for any pair of gaps in that row.
\end{proof}

Recall that a row $R$ is in state $i$ if each gap in the row above $R$ is above the $i$th bucket of some block in row $R$.
\Cref{lem:gapsinsamebucket} says that we can always assign a state to a row in this way.
The rows should progress through the states in a cycle.
Let $\sigma(i)=i+1$ for $1\le i\le s-1$, and $\sigma(s)=1$. If a row has state $i$, then we want to make sure that the row below it will have state $\sigma(i)$. 

We now specify each bucket. We say that relative position $k$ in bucket is the position a distance $k$ from the start of the bucket (that is, position $(i-1)\ell_b+k$). 

We can write the left and right lengths of a gap in terms of three numbers. These are the state (which should be constant across an entire row), as well as left and right \emph{values}, which vary within a single row.
We let $v$ be the number of values, labeled $1, \dots, v$. A gap in a row with state $i$ has left value $1\le x\le v$ and right value $1\le y\le v$ if it has left length $(2v-1)(\sigma(i)-1)+x$ and right length $(2v-1)(s-\sigma(i))+y$. We now set $\ell_b=(2v-1)(s-1)+2v$.

The markings in the $i$th bucket are determined by $3$ parameters $L_i$, $R_i$ and $I_i$.
The set $L_i\subseteq\{1, \dots, v\}$ is the set of allowed left values and $R_i\subseteq \{1, \dots, v\}$ is the set of allowed right values.
The set $I_i\subseteq \{-v+1,\dots, v-1\}$ is an interval that bounds the difference in values between adjacent states.

\Cref{fig:bucketschematic} illustrates how $L_i, R_i$ and $I_i$ are used to place markings in the $i$th bucket.
There are right-facing triangle marks at relative positions $v-y$ for $y\in R_i$ and left-facing triangle marks at relative positions $\ell_b-v+x$ for $x\in L_i$.
There is also a right arrow mark pointing at $(2v-1)(i-1)+v-\max(I_i)$ and a left arrow mark pointing at $(2v-1)(i-1)+v-\min(I_i)$.

\begin{figure}
    \centering
    \includegraphics[page=9]{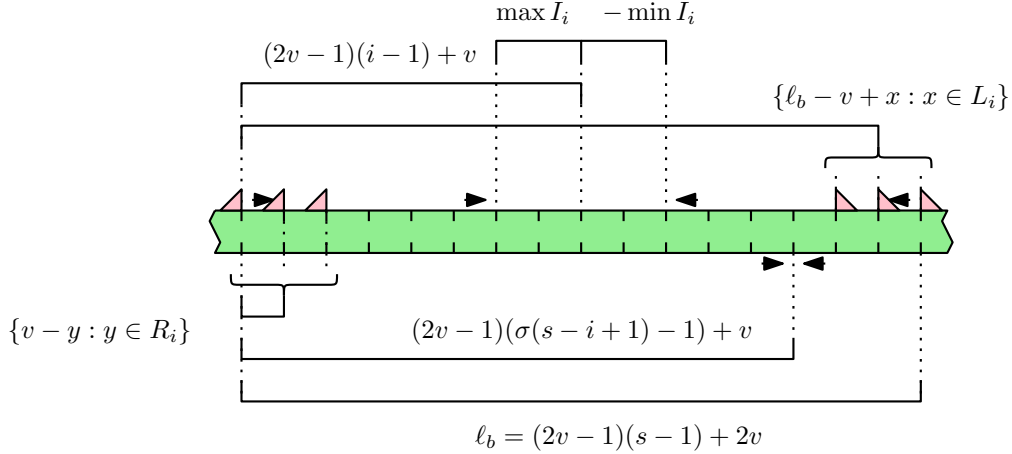}
    \caption{Specification of a bucket. The specific example illustrated is the $i$th bucket where $s=3$, $v=3$, $i=2$, $L_i=R_i=\{1, 2, 3\}$ and $I_i=\{-2, 1, 0, 1, 2\}$. The arrow marks below bucket $i$ are for state $s-i+1$.}
    \label{fig:bucketschematic}
\end{figure}

We also need some marks on the bottom side of the block. Below bucket $i$, we place a pair of left and right arrows facing relative position $(2v-1)(\sigma(s-i+1)-1)+v$ as in \Cref{fig:bucketschematic}. Note that if the right (resp. left) end of a block $X$ is above bucket $i$ on $Y$, then the left (resp. right) end of $Y$ is below bucket $s-i+1$ on $X$. So the marks below bucket $i$ actually correspond to state $s-i+1$. 

\begin{lemma}\label{lem:leftandrightsidelengths}
A gap with state $i$ has left length $(2v-1)(\sigma(i)-1)+x$ and right length $(2v-1)(s-\sigma(i))+y$ for some $x\in L_i$ and $y\in R_i$. 
\end{lemma}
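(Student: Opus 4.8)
The plan is to unpack the definitions of the arrow and triangle markings from Section~\ref{sec:schematics} and combine them with the bucket specification to determine exactly the constraints imposed on a gap sitting above bucket $i$.

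First I would recall the setup. A gap above bucket $i$ has its left and right ends marked by the block above, and these ends must align with triangle marks below them, per the rule in \Cref{fig:trianglemarking}: the left end of the upper block forces a left-facing triangle at that position, and the right end forces a right-facing triangle. Meanwhile the arrow marks in the bucket force the left and right lengths of the gap to lie in a prescribed range, per the rule in \Cref{fig:arrowmarking}: the arrows point at specific relative positions and bound how far the ends of a gap can extend.

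Next I would translate the triangle conditions into value constraints. By the bucket specification, the right-facing triangles sit at relative positions $v-y$ for $y\in R_i$ and the left-facing triangles at relative positions $\ell_b-v+x$ for $x\in L_i$. Since a gap above bucket $i$ is (by the state definition and \Cref{lem:gapsinsamebucket}) entirely contained in that bucket, its left end must land on one of the left-facing triangle positions and its right end on one of the right-facing triangle positions. Reading off the relative positions of the two ends gives that the left end is at $\ell_b-v+x$ for some $x\in L_i$ and the right end is at $v-y$ for some $y\in R_i$. The left length is the distance from the left boundary of the bucket region up to the gap's left end, which works out to $(2v-1)(\sigma(i)-1)+x$ after substituting $\ell_b=(2v-1)(s-1)+2v$; symmetrically the right length is $(2v-1)(s-\sigma(i))+y$. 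This is essentially a bookkeeping computation with the position formulas.

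The main step, and the one I expect to require the most care, is verifying that the arrow marks do not further restrict or contradict these values — i.e., that every gap of the asserted left/right length is genuinely permitted once $x\in L_i$ and $y\in R_i$, and conversely that the triangle positions are forced rather than merely allowed. The subtlety is that the arrow marks encode the interval $I_i$ bounding the \emph{difference} in values across states, so I must check that the arrow positions $(2v-1)(i-1)+v-\max(I_i)$ and $(2v-1)(i-1)+v-\min(I_i)$ are consistent with the full range $x\in\{1,\dots,v\}$ and do not clip any value in $L_i$; this amounts to confirming that the $L_i,R_i$ constraints come entirely from the triangles while the $I_i$ constraint is reserved for the inter-row difference (to be used in the next lemma). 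I would finish by noting that the containment in bucket $i$, together with the forced triangle alignment at both ends, leaves the stated lengths as the only possibilities, so the gap indeed has the claimed left and right lengths for some $x\in L_i$ and $y\in R_i$.
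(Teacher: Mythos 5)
There is a genuine gap, and it begins with which gap you are analyzing. A gap with state $i$ lives in the state-$i$ row, between two blocks $Y$ and $Z$; the triangle marks in bucket $i$ constrain the ends of the block $X$ in the row \emph{above}, which sit above bucket $i$ of $Y$ and $Z$. Your proposal instead takes the state-$i$ gap itself to sit above bucket $i$ and places \emph{its} ends on the bucket-$i$ triangles. But a gap above bucket $i$ belongs to the row above the state-$i$ row; which bucket a state-$i$ gap sits above is exactly what \Cref{lem:statecycle} establishes later (it is $\sigma(i)$), and that lemma is proved \emph{from} this one---so your reading is both circular and off by one application of $\sigma$: the triangles you read off would put $x$ in the $L$-set indexed by the bucket, not by the gap's state. (You also swap the end/triangle correspondence: a gap's left end is the \emph{right} end of a block, hence lies above a right-facing triangle at relative position $v-y$ with $y\in R_i$, and its right end above a left-facing triangle at $\ell_b-v+x$ with $x\in L_i$; with your assignment the left end would sit to the right of the right end.)

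The decisive omission is the marks on the \emph{bottom} of the block above the gap. In the paper's argument, the triangles in bucket $i$ of $Y$ and $Z$ fix the horizontal offsets of $Y$ and $Z$ relative to $X$, so the ends of the state-$i$ gap (the right end of $Y$ and the left end of $Z$) land at computable positions below bucket $s-i+1$ of $X$; the pair of left and right arrows placed below that bucket, facing the single relative position $(2v-1)(\sigma(i)-1)+v$, then forces the direction change of the gap to occur exactly there, and subtracting positions yields left length $(2v-1)(\sigma(i)-1)+x$ and right length $(2v-1)(s-\sigma(i))+y$. Your proposal never invokes these bottom marks; your ``main step'' instead audits the top-side arrows at $(2v-1)(i-1)+v-\max(I_i)$ and $(2v-1)(i-1)+v-\min(I_i)$, which are irrelevant to this lemma (they constrain the direction change of the gap \emph{above} that bucket and are what \Cref{lem:statecycle,lem:intervalbound} later use). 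Without the bottom marks nothing determines how the gap's total length splits into left and right parts---the triangles only bound the total---so the $\sigma(i)$-dependence in the stated formulas cannot be derived; moreover your computation ``left length $=$ distance from the bucket's left boundary to the gap's left end'' is not the definition of left length (the left length runs from the gap's left end to the direction change, and the distance you describe would be $v-y$, nothing like $(2v-1)(\sigma(i)-1)+x$).
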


\begin{proof}
Consider a block $X$, and let $Y$ and $Z$ be the blocks below it to the left and right respectively, as in \Cref{fig:leftandrightsidelengths}.
The gaps in the row containing $X$ occur above some bucket $i$, where $i$ is the state of the row containing $Y$ and $Z$.
So the right end of $X$ is above position $(i-1)\ell_b+v-y$ on $Z$ for some $y\in R_i$ (because these are the positions of right-facing triangle marks), and the left end of $X$ is above position $i\ell_b-v+x$ on $Y$ for some $x\in L_i$.
So the right end of $Y$ is below position $(s-i)\ell_b+v-x$ on $X$, and the left end of $Z$ is below position $(s-i+1)\ell_b-v+y$ on $X$ (both of these are in bucket $s-i+1$ of $X$).
Between these is a pair of right and left arrow marks facing position $(s-i)\ell_b+(2v-1)(\sigma(s-(s-i+1)+1)-1)+v=(s-i)\ell_b+(2v-1)(\sigma(i)-1)+v$, so the direction change in the gap between $Y$ and $Z$ needs to occur below this position.
We calculate that this gap has left length $(s-i)\ell_b+(2v-1)(\sigma(i)-1)+v-((s-i)\ell_b+v-x)=(2v-1)(\sigma(i)-1)+x$ and right length $(s-i+1)\ell_b-v+y-((s-i)\ell_b+(2v-1)(\sigma(i)-1)+v)=\ell_b+y-(2v-1)(\sigma(i)-1)-2v=(2v-1)(s-1)+2v+y-(2v-1)(\sigma(i)-1)-2v=(2v-1)(s-\sigma(i))+y$.

\begin{figure}
    \centering
    \includegraphics[page=10,width=\textwidth]{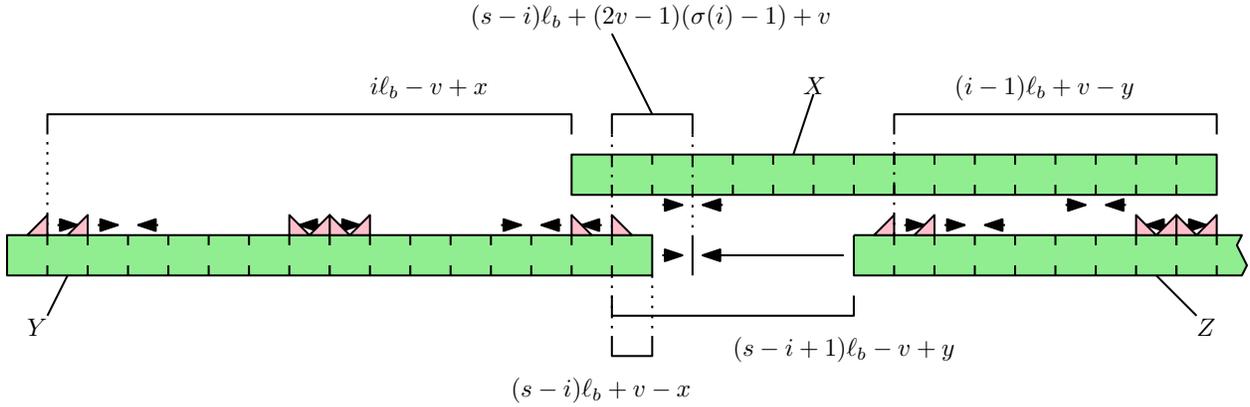}
    \caption{Calculating the possible left and right lengths for a gap in state $i$. }
    \label{fig:leftandrightsidelengths}
\end{figure}
\end{proof}

The construction of the buckets ensures that a gap with left length $(2v-1)(i-1)+x$ and right length $(2v-1)(s-i)+y$ can only fit above bucket $i$. By the above, this means that if a row is in state $i$, then the row below it is in state $\sigma(i)$. An example is shown in \Cref{fig:bucketgapcompatability}. We now check this carefully.

\begin{figure}
    \centering
    \includegraphics[page=11]{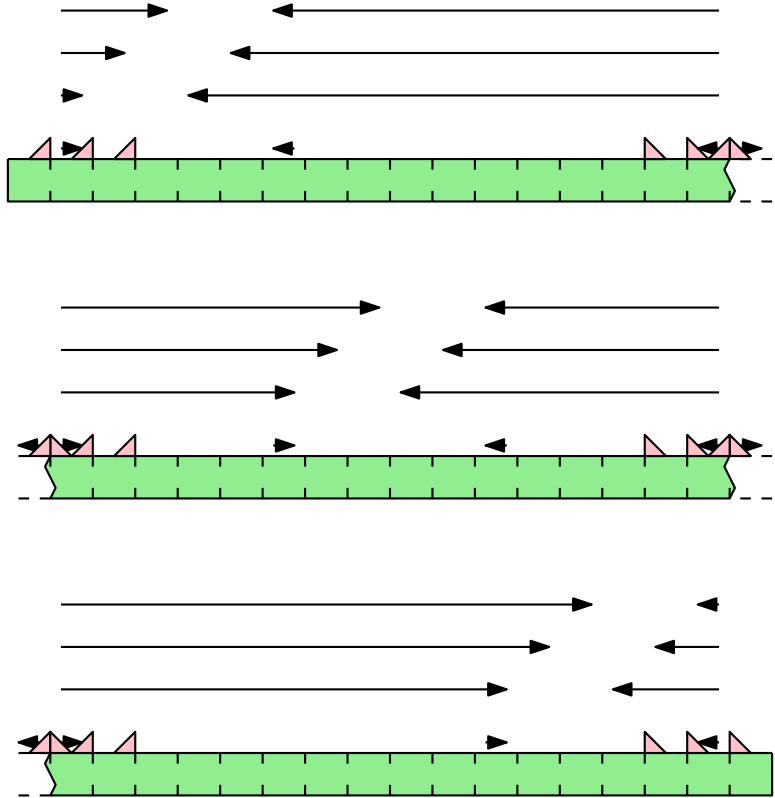}
    \caption{A stick with $s=3$ and $v=3$, split into three rows with one bucket on each row. Each $L_i$, $R_i$, and $I_i$ is chosen to be as large as possible, that is $L_i=R_i=\{1, 2, 3\}$ and $I_i=\{-2, -1, 0, 1, 2\}$. Above each bucket are drawn arrows representing the possible left and right gap lengths that can occur above that bucket. The left lengths that can occur have form $(2v-1)(i-1)+x$ for some $1\le i\le s$ and $1\le x\le v$, and the right lengths that can occur have form $(2v-1)(s-i)+y$ for some $1\le i\le s$ and $1\le y\le v$. Each of these lengths can only occur above one of the buckets.}
    \label{fig:bucketgapcompatability} 
\end{figure}

\begin{lemma}\label{lem:statecycle}
If a row is in state $i$, then the row below is in state $\sigma(i)$. 
\end{lemma}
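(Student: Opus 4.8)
The plan is to reduce the lemma to a single ``key claim'' about gap lengths and then verify that claim by a short counting argument. By definition, the row $R'$ immediately below $R$ is in state $\sigma(i)$ precisely when every gap of $R$ (these are the gaps in the row above $R'$) lies above bucket $\sigma(i)$ of the blocks in $R'$. Since $R$ is in state $i$, \Cref{lem:leftandrightsidelengths} already tells us that each gap of $R$ has left length $(2v-1)(\sigma(i)-1)+x$ and right length $(2v-1)(s-\sigma(i))+y$ for some $x\in L_i$ and $y\in R_i$. So it will suffice to establish the key claim: a gap whose left length is $(2v-1)(j-1)+x$ and whose right length is $(2v-1)(s-j)+y$, with $1\le x,y\le v$, can only sit above bucket $j$. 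Applying this with $j=\sigma(i)$ then forces every gap of $R$ above bucket $\sigma(i)$, so $R'$ is in state $\sigma(i)$.

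To prove the key claim I would suppose such a gap $G$ sits above bucket $j'$ of some block $W$ and locate its features on $W$. The left end of $G$ (the right end of the block above it to the left) must lie over a right-facing triangle mark of bucket $j'$, so by the placement of triangles in \Cref{sec:gapconditions} it is at relative position $v-y'$ for some $y'\in R_{j'}$. The point at which $G$ switches from its right-facing part to its left-facing part---whose distance from the left end is exactly the left length of $G$---must be compatible with the top arrow marks of bucket $j'$. The two arrow marks coming from $I_{j'}$ form a face-to-face pair (a right arrow at relative position $(2v-1)(j'-1)+v-\max(I_{j'})$ followed by a left arrow at $(2v-1)(j'-1)+v-\min(I_{j'})$), so they confine this switch point to the interval between them.

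The final step combines these two locations. Writing the switch point as (left end)$\,+\,$(left length) and requiring it to lie in the arrow interval gives, after cancelling the common $v$, the containment $(2v-1)(\sigma(i)-1)+x-y'\in[(2v-1)(j'-1)-\max(I_{j'}),\,(2v-1)(j'-1)-\min(I_{j'})]$ (here using $j=\sigma(i)$). Since $x,y'\in[1,v]$ and $\min(I_{j'}),\max(I_{j'})\in[-(v-1),v-1]$, this forces $(2v-1)(\sigma(i)-j')$ into the range $[-(2v-2),\,2v-2]$; as that range is strictly narrower than the modulus $2v-1$ while $(2v-1)(\sigma(i)-j')$ is a multiple of $2v-1$, we conclude $j'=\sigma(i)$. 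This is exactly the range/congruence argument already used in \Cref{lem:gapsinsamebucket}, and the choice $\ell_b=(2v-1)(s-1)+2v$ was made so the buckets are spaced far enough apart for it to go through.

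The main obstacle is not this arithmetic but the bookkeeping that precedes it: one must faithfully translate the geometric triangle and arrow markings into exact positions on $W$ (tracking the off-by-one offsets recording which edge pairs each mark forbids, as set up in \Cref{sec:schematics,sec:gapconditions}) and, crucially, verify that it is the face-to-face $I_{j'}$-arrows that pin the switch point into a short interval, rather than the back-to-back bucket-boundary arrows, which only prevent a gap from straddling two buckets. Once the left end and the switch point are correctly placed relative to bucket $j'$, the congruence argument closes the proof.
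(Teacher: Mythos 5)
Your proposal is correct and takes essentially the same approach as the paper: both place the gap's left end on a right-facing triangle mark at relative position $v-y'$ with $y'\in R_{j'}$, confine the direction change between the two $I_{j'}$ arrow marks, and combine this with the left length $(2v-1)(\sigma(i)-1)+x$ from \Cref{lem:leftandrightsidelengths} to force the bucket index to be $\sigma(i)$. Your closing step (a multiple of $2v-1$ trapped in a window of width at most $2(v-1)$) is just a rearrangement of the paper's explicit bounds $(2v-1)(j-1)-v+2\le(2v-1)(\sigma(i)-1)+x\le(2v-1)(j-1)+2v-1$.
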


\begin{proof}
First, calculate the possible left lengths of a gap above bucket $j$. The left end of such a gap is at relative position $v-y$ for some $y\in R_i$.
The bucket has a right arrow mark at relative position $(2v-1)(j-1)+v-\max(I_j)$ and a left arrow mark at relative position $(2v-1)(j-1)+v-\min(I_j)$, so direction change in the gap is in this range. Since $I_j\subseteq \{-v+1, \dots, v-1\}$ and $R_j\subseteq \{1, \dots, v\}$, the left side of the gap is at least $(2v-1)(j-1)+v-(v-1)-(v-1)=(2v-1)(j-1)-v+2$ and at most $(2v-1)(j-1)+v-(-v+1)-0=(2v-1)(j-1)+2v-1$.

By \Cref{lem:leftandrightsidelengths}, the left side of a gap in state $i$ has left length $(2v-1)(\sigma(i)-1)+x$ for some $x\in R_i\subseteq \{1, \dots, v\}$. The inequalities $(2v-1)(j-1)-v+2\le (2v-1)(\sigma(i)-1)+x \le (2v-1)(j-1)+2v-1$ can only be satisfied when $\sigma(i)=j$. So if a row is in state $j$ and the row above it is in state $i$, then $j=\sigma(i)$, proving the claim.
\end{proof}

Finally, we describe the relationship between the values in one row and the values in the row below it.

\begin{lemma}\label{lem:intervalbound}
If a gap in a schematic has state $i$ and left value $x$ and right value $y$, then the gap below it to the left has right value $x'$ and the gap below it to the right has left value $y'$, where $x+y=x'+y'$ and $x'-x=y-y'\in I_{\sigma(i)}$. 
\end{lemma}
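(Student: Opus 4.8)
The plan is to locate everything relative to the single block sitting directly below $G$. Write $j=\sigma(i)$ and let $W$ be the block in the row below $G$ that $G$ sits above; by \Cref{lem:statecycle} together with \Cref{lem:gapsinsamebucket}, the row of $W$ is in state $j$ and $G$ lies above bucket $j$ of $W$. Let $P$ and $Q$ be the two blocks of $G$'s own row flanking $G$, so that $P$ is the above-left and $Q$ the above-right neighbor of $W$, and let $G_L$, $G_R$ be the two gaps of the row of $W$ lying immediately to the left and right of $W$. These $G_L$, $G_R$ are exactly ``the gap below $G$ to the left'' and ``the gap below $G$ to the right''; note that $G_L$ sits below $P$, that $G_R$ sits below $Q$, and that both have state $j$.

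First I would pin down where the ends of $P$ and $Q$ fall on $W$. Applying \Cref{lem:leftandrightsidelengths} to $P$ (with $W$ as its below-right neighbor) shows that the right end of $P$ is above position $(j-1)\ell_b+v-x'$ on $W$, where $x'\in R_j$ is the right value of $G_L$; applying it to $Q$ (with $W$ as its below-left neighbor) shows that the left end of $Q$ is above position $j\ell_b-v+y'$ on $W$, where $y'\in L_j$ is the left value of $G_R$. Since the right end of $P$ and the left end of $Q$ are precisely the left and right edges of $G$ where it meets $W$, the total length of $G$ equals $(j\ell_b-v+y')-((j-1)\ell_b+v-x')=(2v-1)(s-1)+x'+y'$. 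On the other hand, by the definition of values the total length of $G$ is (left length) $+$ (right length) $=(2v-1)(s-1)+x+y$. Equating the two gives $x+y=x'+y'$, hence $x'-x=y-y'$.

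It then remains to show $x'-x\in I_{\sigma(i)}$, and here I would use the top arrow marks of bucket $j$ on $W$, which face $G$ from below. The direction change of $G$ occurs at its left edge plus its left length, i.e.\ at position $((j-1)\ell_b+v-x')+((2v-1)(j-1)+x)=(j-1)\ell_b+(2v-1)(j-1)+v+(x-x')$. The arrow-marking rule forces this to lie between the right arrow mark at relative position $(2v-1)(j-1)+v-\max(I_j)$ and the left arrow mark at $(2v-1)(j-1)+v-\min(I_j)$ of bucket $j$; cancelling the common offset turns this into $\min(I_j)\le x'-x\le\max(I_j)$, and since $I_j$ is an interval we conclude $x'-x\in I_j=I_{\sigma(i)}$.

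The main obstacle is bookkeeping rather than any single hard idea: one must get the diagonal-grid adjacencies right (identifying $G_L,G_R$ and determining which end of $P$ and $Q$ meets $W$ at which triangle mark), and one must keep straight the two independent arrow-mark systems — the single bottom marks of the block above $G$, used in \Cref{lem:leftandrightsidelengths} to read off the values, versus the $I_j$-interval top marks of $W$ used here to bound the difference. Once the positions of the two block-ends on $W$ are written down correctly, both conclusions $x+y=x'+y'$ and $x'-x\in I_{\sigma(i)}$ fall out of essentially the same length computation.
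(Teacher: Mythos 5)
Your proposal is correct and follows essentially the same route as the paper: both locate the gap above bucket $\sigma(i)$ via \Cref{lem:statecycle}, read off $x'$ and $y'$ from the positions of the flanking blocks' ends on $W$ (via the proof of \Cref{lem:leftandrightsidelengths}, which the paper cites where you re-derive it from $P$ and $Q$), equate the two length computations to get $x+y=x'+y'$, and bound the direction change between the two arrow marks of bucket $\sigma(i)$ to get $x'-x\in I_{\sigma(i)}$. The only cosmetic difference is that the paper works in relative positions within the bucket while you carry the absolute offset $(j-1)\ell_b$ and cancel it at the end.
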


\begin{proof}
By \Cref{lem:statecycle}, the rows in the schematic cycle through the states, and a gap with state $i$ has left value in $L_i$ and right value in $R_i$ by \Cref{lem:leftandrightsidelengths}.

Consider a gap $X$ in state $i$ with left value $x$ and right value $y$.
So by definition it has left length $(2v-1)(\sigma(i)-1)+x$ and right length $(2v-1)(s-\sigma(i))+y$, as in \Cref{fig:intervalbound}.
By \Cref{lem:statecycle}, $X$ is above bucket $\sigma(i)$ in some block. Let $Y$ be the gap below $X$ to the left and let $Z$ be the gap below $X$ and to the right.
As seen in the proof of \Cref{lem:leftandrightsidelengths}, $Y$ has right value $x'$ where $v-x'$ is the relative position of the left edge of $X$ in the bucket below it.
Similarly, $Z$ has left value $y'$ where $\ell_b+y'-v$ is the relative position of the right edge of $X$. So the length of $X$ is $\ell_b-2v+x'+y'$. It is also $(2v-1)(s-1)+x+y=\ell_b-2v+x+y$, so $x'+y'=x+y$.

\begin{figure}
    \centering
    \includegraphics[page=12]{TwoTiling.pdf}
    \caption{Calculations for \Cref{lem:intervalbound}}
    \label{fig:intervalbound}
\end{figure}

The direction change in $X$ must occur between the arrow markings in bucket $\sigma(i)$, which face relative positions $(2v-1)(\sigma(i)-1)+v-\max(I_{\sigma(i)})$ and $(2v-1)(\sigma(i)-1)+v-\min(I_{\sigma(i)})$. The left side of $X$ starts at relative position $v-x'$ and has length $(2v-1)(\sigma(i)-1)+x$. So $x'-x\in I_{\sigma(i)}$. 
\end{proof}

The \emph{gap conditions} are the constraints described by \Cref{lem:leftandrightsidelengths,lem:statecycle,lem:intervalbound}. Specifically, these are:

\begin{itemize}
    \item The left value of a gap in state $i$ is in $L_i$ and the right value of a gap in state $i$ is in $R_i$
    \item If a row has state $i$, then the row below it has state $\sigma(i)$
    \item If a gap is in state $i$ and has left value $x$ and right value $y$, the gap below it to the left has right value $x'$, and the gap below it to the right has left value $y'$, then $x, y, x', y'$ satisfy $x+y=x'+y'$ and $x'-x=y-y'\in I_{\sigma(i)}$.
\end{itemize}

We now claim that these conditions fully characterize the schematics that can exist. That is, any of assignment of the states, left values, and right values to a diagonal lattice of gaps that satisfies the gap conditions corresponds to a valid schematic (and so to a valid tiling).

The left and right lengths of a gap are determined by the state and the left and right values. It remains to check that adjacent rows can be aligned with each other to form a valid schematic that satisfies all of the markings that we have created. These calculations are essentially no different from the ones in the proofs of \Cref{lem:gapsinsamebucket,lem:leftandrightsidelengths,lem:statecycle,lem:intervalbound}, so we will not include them here.

\subsection{Structure of states representing an AB tiling}

We will choose an even number of buckets, so if $i$ is even (resp. odd), then $\sigma(i)$ is odd (resp. even). The gaps form a diagonal lattice. We label each gap with a row and column, as shown in \Cref{fig:rowsandcolumns}. The index of a row increases by $1$ when moving down and the index of column increases by $1$ when moving right. We write $(r, c)$ for the gap in row $r$ and column $c$. As a shorthand, we sometimes write $(r, c)_L$ for the left value of that gap and $(r, c)_R$ for the right value. We write that a gap has values $(x|y)$ if it has left value $x$ and right value $y$. 

\begin{figure}
    \centering
    \includegraphics[page=13]{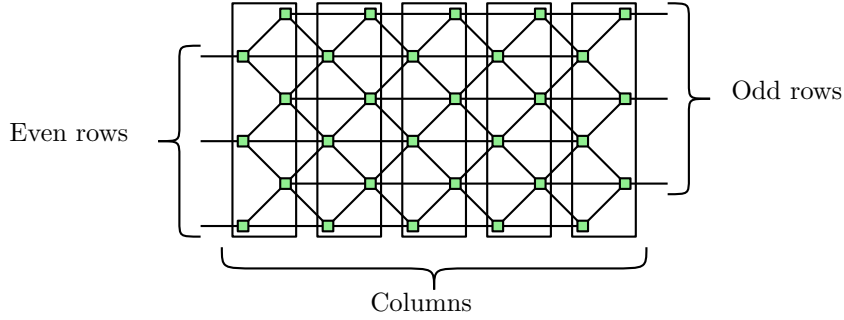}
    \caption{If a gap is in an odd row, then the one below it to the left is in the same column. If a gap is in an even row, then the one below it to the right is in the same column.}
    \label{fig:rowsandcolumns}
\end{figure}

A complete schematic will correspond to an AB-tiling with roughly the same geometric structure. Each column in the schematic will represent a column of AB tiles in an AB-tiling. Moving left or right by one gap in the schematic corresponds to moving in the same direction by $1$ unit in an AB-tiling. It takes several states worth of operations to check the constraints of an AB-tiling, so there are multiple rows in the schematic corresponding to each row in the AB tiling. Every time the rows complete a full cycle through the states, this corresponds to moving down by a distance of $2$ units in the AB-tiling.

Now fix a set of AB tiles with tile sets $A$ and $B$. Let $t=|A|=|B|$ and set $v=(2t+1)d$, where $d\ge 2$ is a constant that will be set later.

We will designate some important states called \emph{encoding states}. An encoding state $e$ is an odd state that has $L_e=\{di : 1\le i\le 2t\}$ and $R_e=\{d(2t+1-i) : 1\le i\le 2t\}$. In an encoding state, a left value of $di$ represents the $i$th $A$ tile if $1\le i\le t$ or the $(i-t)$th $B$ tile if $t+1\le i\le 2t$. 

When we construct a schematic corresponding to an AB tiling, an encoding state with left value $di$ will have right value $d(2t+1-i)$. But there might exist valid schematics where this is not the case. So when constructing an AB tiling given a schematic, we should only use the left value to determine the tile represented, and we should not assume that the right value encodes useful information.

Each interval between encoding states forms a gadget that creates some part of the AB tiling structure. There are $4$ main encoding states which correspond to parts of an AB tiling, as shown in \Cref{fig:rowcorrespondence}. We write $e_1$ through $e_4$ for the indices of these states.

\begin{figure}
    \centering
    \includegraphics[page=14]{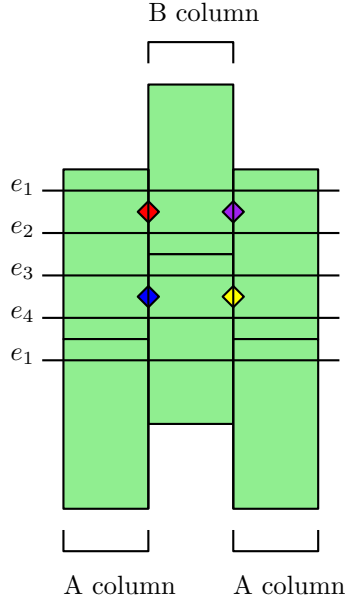}
    \caption{The encoding states $e_1$ through $e_4$ correspond to different parts of an AB tiling. States $e_1$ and $e_2$ go through the upper parts of the $A$ tiles and the lower parts of the $B$ tiles, while states $e_3$ and $e_4$ go through the lower parts of the $A$ tiles and the upper part of the $B$ tiles.}
    \label{fig:rowcorrespondence}
\end{figure}

We will set $e_1=1$ and $e_2=9$. These values in these states represent the top square of an $A$ tile or the bottom square of a $B$ tile. The states between $e_1$ and $e_2$ create a constraint that causes columns to alternate between $A$ and $B$, and requires top colors of an $A$ tile to match the corresponding bottoms colors on adjacent $B$ tiles. The tile represented by a column should not change between $e_1$ and $e_2$. This constraint will be described in \Cref{sec:horizontalconstraints}.

The next encoding states are $e_3=9+4(t-1)$ and $e_4=17+4(t-1)$. The states between $e_3$ and $e_4$ are similar to the ones between $e_1$ and $e_2$, except that now they require the bottom colors of an $A$ tile to match the corresponding top colors on the adjacent $B$ tiles. 

The states between $e_2$ and $e_3$ keep the index of an $A$ tile fixed, but allow the index of a $B$ tile to change. If a column represents an $A$ tile in states $e_1$ and $e_2$, then it should represent the same $A$ tile in states $e_3$ and $e_4$. However, a column that represents a $B$ tile in states $e_1$ and $e_2$ should be allowed to represent any $B$ tile in states $e_3$ and $e_4$. The states between $e_2$ and $e_3$ will be described in \Cref{sec:verticalconstraints}.

There are an additional $4(t-1)-1$ states after $e_4$ before looping back to $e_1$ (so there are $16+8(t-1)$ states total). These are similar to those between $e_2$ and $e_3$, but this time they preserve the index of tiles in a $B$ column but allow the index of a tile in an $A$ column to change. 

\subsection{Creating the AB pattern}\label{sec:verticalconstraints}

In this section, we describe the states between $e_2$ and $e_3$. The states between $e_4$ and $e_1$ can be created in an analogous way.

Recall that $e_2$ is an encoding state, so has $L_{e_2}=\{dj:1\le j\le 2t\}$. If $(r+e_2, c)$ is in state $e_2$ and represents an $A$ tile, then $(r+e_3, c)$ should represent the same $A$ tile.
But if $(r+e_2, c)$ represents a $B$ tile, then $(r+e_3, c)$ should be allowed to represent any $B$ tile.
\Cref{lem:verticalonlyconstraint} provides a basic building block for creating these constraints.

\begin{lemma}\label{lem:verticalonlyconstraint}
Suppose that $i$ is odd, $R_{\sigma(i)}=L_{\sigma(i)}=\{1, \dots, v\}$, and $I_{\sigma^2(i)}=\{0\}$. Let $r$ be a row in state $i$ and say that gap $(r, c)$ has values $(x_c|y_c)$ for each $c\in \mathbb{Z}$.

Then for each $c\in \mathbb{Z}$, the values $(x_c'|y_c')$ of $(r+2, c)$ satisfy $x_c'+y_c'=x+y$ and $x_c'-x_c\in I_{\sigma(i)}$. 

Furthermore, for any assignment of $x_{c}'\in L_{\sigma^2(i)}$ and $y_{c}'\in R_{\sigma^2(i)}$ satisfying the above, we can fill rows $r+1$ and $r+2$ so that gap $(r+2, c)$ has values $(x_c'|y_c')$ in a way that satisfies the gap conditions.
\end{lemma}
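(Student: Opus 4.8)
The plan is to track how the left and right values propagate down the two rows between $r$ and $r+2$ by applying the three gap conditions in sequence. The essential combinatorial fact, read off from the diagonal lattice of \Cref{fig:rowsandcolumns}, is that the two gaps lying directly below $(r,c)$ are $(r+1,c)$ and $(r+1,c+1)$, and that $(r+2,c)$ lies directly below \emph{both} of these. Thus $(r,c)$, the two row-$(r{+}1)$ gaps, and $(r+2,c)$ form a diamond, and the value at $(r+2,c)$ is determined entirely by the value at $(r,c)$. This is where the parity hypothesis enters: since $i$ is odd, $\sigma(i)$ is even, and the column shifts of the two downward steps cancel, returning us to column $c$.

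First I would compute the values of $(r+2,c)$ directly. Set $\delta_c=(r+1,c)_R-x_c$. Applying the gap condition to $(r,c)$ (state $i$) gives $(r+1,c)_R-x_c=y_c-(r+1,c+1)_L\in I_{\sigma(i)}$, so $\delta_c\in I_{\sigma(i)}$, with $(r+1,c)_R=x_c+\delta_c$ and $(r+1,c+1)_L=y_c-\delta_c$. Now applying the gap condition to row $r+1$ (state $\sigma(i)$), where $I_{\sigma^2(i)}=\{0\}$ forces every difference to vanish, each value passes straight down: the left value of $(r+2,c)$ equals $(r+1,c)_R=x_c+\delta_c$ and its right value equals $(r+1,c+1)_L=y_c-\delta_c$. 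Hence $x_c'=x_c+\delta_c$ and $y_c'=y_c-\delta_c$, which yields exactly $x_c'+y_c'=x_c+y_c$ and $x_c'-x_c=\delta_c\in I_{\sigma(i)}$, establishing the first assertion (here $x+y$ in the statement should read $x_c+y_c$).

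For the converse, given target values $(x_c'\,|\,y_c')$ with $x_c'\in L_{\sigma^2(i)}$, $y_c'\in R_{\sigma^2(i)}$ satisfying $x_c'+y_c'=x_c+y_c$ and $\delta_c:=x_c'-x_c\in I_{\sigma(i)}$, I would simply reverse this computation: fill row $r+1$ by assigning $(r+1,c)=(y_{c-1}'\,|\,x_c')$ for every $c$, and take row $r+2$ to be the prescribed $(x_c'\,|\,y_c')$. Substituting into the gap conditions then confirms validity — the step from $r$ into $r+1$ realizes the difference $\delta_c\in I_{\sigma(i)}$, and the step from $r+1$ into $r+2$ realizes the zero difference permitted by $I_{\sigma^2(i)}=\{0\}$, reproducing $(r+2,c)=(x_c'\,|\,y_c')$. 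The one thing that must be checked is that the intermediate values are legal, namely $y_{c-1}',x_c'\in\{1,\dots,v\}$; this holds precisely because $L_{\sigma(i)}=R_{\sigma(i)}=\{1,\dots,v\}$ lets the middle row carry any value in range, while $x_c',y_c'$ are themselves admissible for state $\sigma^2(i)$.

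The arithmetic is short once the geometry is fixed, so I expect the main obstacle to be purely bookkeeping: correctly pinning down the below-left and below-right neighbours and their column indices across the parity change from $i$ to $\sigma(i)$, and verifying that the intermediate assignment gives each gap in row $r+1$ a single consistent pair of values, with no gap in row $r+2$ receiving conflicting left or right values from its two parents. Once the diamond structure is in place, both directions of the lemma fall out by direct substitution into \Cref{lem:leftandrightsidelengths,lem:statecycle,lem:intervalbound}.
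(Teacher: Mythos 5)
Your proof is correct and takes essentially the same route as the paper's: you propagate values through the diamond $(r,c)$, $(r+1,c)$, $(r+1,c+1)$, $(r+2,c)$, use $I_{\sigma^2(i)}=\{0\}$ to force values to pass straight through row $r+1$, and for the converse assign $(r+1,c)=(y_{c-1}'\,|\,x_c')$ exactly as the paper does. Your additional observations---that the intermediate values are legal because $L_{\sigma(i)}=R_{\sigma(i)}=\{1,\dots,v\}$, and that the statement's $x+y$ should read $x_c+y_c$---are correct points the paper leaves implicit.
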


\begin{proof}
Let $(y_{c-1}''|x_c'')$ be the values of gap $(r+1, c)$.
By the gap conditions, we have $x_c''+y_c''=x_c+y_c$ and $x_c''-x_c\in I_{\sigma(i)}$; see  \Cref{fig:verticalonlyconstraint}.
Since $I_{\sigma^2(i)}=\{0\}$, the values $(x_c'|y_c')$ of gap $(r+2, c)$ must be $x_c'=x_c''$ and $y_c'=y_c''$. So $x_c'+y_c'=x_c+y_c$ and $x_c'-x_c\in I_{\sigma(i)}$.

Given such values of $(x_c'|y_c')$ of $(r+2, c)$, we can set the values of $(r+1, c)$ to be $(y_{c-1}'|x_c')$. This satisfies the gap conditions.

\begin{figure}
    \centering
    \includegraphics[page=15]{TwoTiling.pdf}
    \caption{Illustration of \Cref{lem:verticalonlyconstraint}.}
    \label{fig:verticalonlyconstraint}
\end{figure}
\end{proof}

Each odd $i$ with $e_2\le i<e_3$ will be as in the conditions of \Cref{lem:verticalonlyconstraint}. So we just need to specify $L_i$ and $R_i$ for odd $i$ and $I_i$ for even $i$.

The idea can be represented by a diagram as in \Cref{fig:verticalconstraints1}. Each tick mark in the figure represents one of the values of $L_i$ for $e_2\le i\le e_3$. In order to fill in a single column, we need to choose one left value for each such state, corresponding to a choice of one tick mark in the figure for each of the rows shown. These choices must satisfy the conditions of \Cref{lem:verticalonlyconstraint}, which is illustrated by the intervals shown in the figure. \Cref{fig:verticalconstraints1.1} shows a valid assignment, where each tick mark chosen is below the interval for the chosen tick mark in the previous row.

If one of the blue tick marks (corresponding to an $A$ tile) is chosen in the first row, then there is only one path through the figure, meaning the the index of the tile represented is kept fixed. If one of the red tick marks is chosen instead (corresponding to a $B$ tile), then there are several valid paths through the figure, allowing any of the red tick marks in the bottom row to be chosen. So the index of a $B$ tile can can change.

\begin{figure}
    \centering
    \includegraphics[page=16]{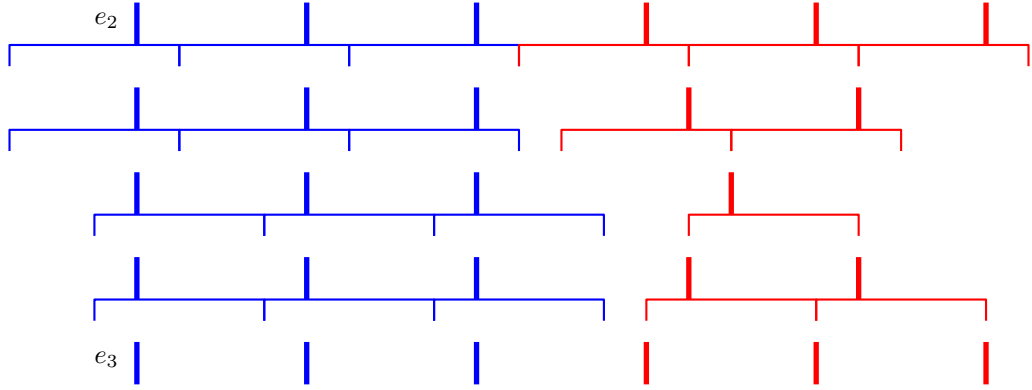}
    \caption{
    Each vertical tick represents one of the values in $L_i$ for odd $i$ going from $e_2$ (top row) to $e_3$ (bottom row). The blue ticks correspond to $A$ tiles and the red ticks correspond to $B$ tiles. Below each tick, there is copy of the interval $I_{i+1}$ which is translated so that $0$ is aligned with that tick. By \Cref{lem:verticalonlyconstraint}, the difference between the left value in $(r, c)$ and the one in $(r+2, c)$ is in $I_{i+1}$. So the translated copy of this interval is above the left values that $(r+2, c)$ can take when $(r, c)$ has the left value corresponding to that interval. This figure shows the case where $t=3$. 
    }
    \label{fig:verticalconstraints1}
\end{figure}

\begin{figure}
    \centering
    \includegraphics[page=17]{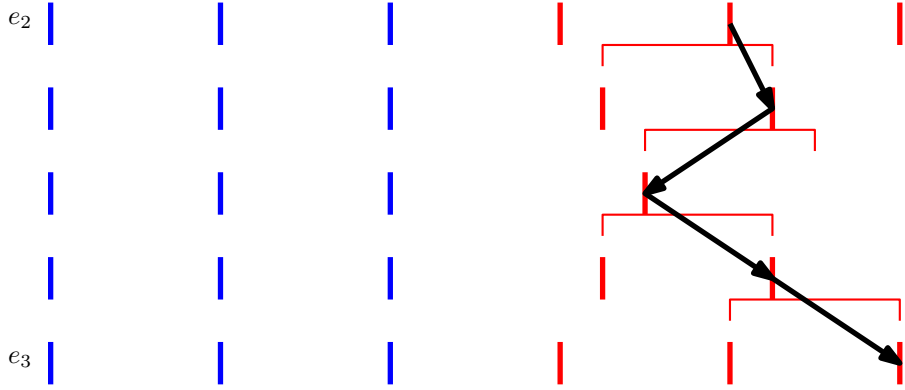}
    \caption{In order to fill in a single column in the states between $e_2$ and $e_3$, we choose a path through \Cref{fig:verticalconstraints1} starting at one of the left values in $e_2$ and ending at one of the left values in $e_3$. The intervals shown in the figure determine which edges are allowed in this path. An example of such a path is shown.}
    \label{fig:verticalconstraints1.1}
\end{figure}

Recall that $e_2=9$ and $e_3=9+4(t-1)$. The states between $e_2$ and $e_3$ satisfy the assumptions of \Cref{lem:verticalonlyconstraint}. Specifically, each even state $e_2\le \alpha\le e_3$ has $R_\alpha=L_\alpha=\{1, \dots, v\}$ and each odd state $e_2<\alpha\le e_3$ is given $L_\alpha=\{0\}$. Additionally, we set $R_{\alpha}=\{1, \dots, v\}$ for each odd $e_2\le \alpha\le e_3$. The remaining information about these states is given by the following:

\begin{center}
\begin{tabular}{|c|c|}
\hline
Even state $\alpha$&$I_\alpha$\\
\hline
$e_2+2i-1$ for $1\le i \le t-1$&$[-d+1, 1]$\\
$e_2+2(t-1)+2i-1$ for $1\le i \le t-1$&$[-1, d-1]$\\
\hline
\end{tabular}
\end{center}

\begin{center}
\begin{tabular}{|c|c|c|}
\hline
Odd state $\alpha$&$L_\alpha$\\
\hline
$e_2+2i$ for $1\le i \le t-1$&$\{d\ell : 1\le \ell \le t\}\cup\{d\ell+i : t+1\le \ell\le 2t-i\}$\\
$e_2+2(t-1)+2i$ for $1\le i \le t-1$&$\{d\ell: 1\le \ell \le t\}\cup \{d\ell+t-1-i: t+1\le \ell \le t+1+i\}$\\
\hline
\end{tabular}
\end{center}

\Cref{lem:verticalspecification} tells us how to construct the gaps between $e_2$ and $e_3$ when building a tiling. 

\begin{lemma}\label{lem:verticalspecification}
Suppose the states between $e_2$ and $e_3$ as above. Let $r$ be a row such that $r+e_2$ is in state $e_2$.
For each $j$, suppose that gap $(r+e_2, j)$ has values $(dv_j|d(2t+1-v_j))$ for some $1\le v_j \le 2t$. Then the rows up to $r+e_3$ can be specified so that the gap $(r+e_3, j)$ has values $(dw_j|d(2t+1-w_j))$ for any assignment of $1\le w_j \le 2t$ satisfying:

\begin{itemize}
    \item If $1\le v_j\le t$, then $w_j=v_j$
    \item If $t+1\le v_j \le 2t$, then $t+1\le w_j\le 2t$
\end{itemize}
\end{lemma}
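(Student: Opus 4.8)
The plan is to peel off the states two at a time, reducing the whole statement to an independent path-finding problem in each column. Every even state $\alpha$ with $e_2 \le \alpha \le e_3$ has $L_\alpha = R_\alpha = \{1,\dots,v\}$ and every odd state $\alpha$ with $e_2 < \alpha \le e_3$ has $I_\alpha = \{0\}$, so \Cref{lem:verticalonlyconstraint} applies at each step. Write the odd states as $s_m = e_2 + 2m$ for $0 \le m \le 2(t-1)$, so $s_0 = e_2$ and $s_{2(t-1)} = e_3$. Applying the ``furthermore'' part of \Cref{lem:verticalonlyconstraint} once for each $m$ (with $i = s_{m-1}$), I claim that filling the rows from $r+e_2$ down to $r+e_3$ reduces to choosing, independently in each column $j$, a left value $x^{(m)}_j \in L_{s_m}$ at every level $m$ such that $x^{(m)}_j - x^{(m-1)}_j \in I_{s_m-1}$ and the conserved right value $d(2t+1) - x^{(m)}_j$ lies in $\{1,\dots,v\}$. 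The columns decouple because the intermediate even rows carry the full value set, so the ``glue'' value that \Cref{lem:verticalonlyconstraint} places in each intermediate gap is always legal, and the total value $x+y=d(2t+1)$ is conserved throughout. Thus it suffices to exhibit, in each column, a legal path of left values from $x^{(0)}_j = d v_j$ to $x^{(2(t-1))}_j = d w_j$.

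For an $A$ column, $1 \le v_j \le t$ and the only admissible target is $w_j = v_j$; I take the constant path $x^{(m)}_j = d v_j$. This is legal since $\{d\ell : 1 \le \ell \le t\} \subseteq L_{s_m}$ for every $m$, the step $0$ lies in each of the intervals $[-d+1,1]$ and $[-1,d-1]$ because $d \ge 2$, and the right value $d(2t+1-v_j) \in \{d(t+1),\dots,2td\} \subseteq \{1,\dots,v\}$.

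For a $B$ column, $t+1 \le v_j \le 2t$, and I track the ``tile index'' $\ell$ in the decomposition $x^{(m)}_j = d\ell + (\text{offset})$, where by the tables the offset is $m$ for $1 \le m \le t-1$ and $2t-2-m$ for $t \le m \le 2(t-1)$. A direct computation shows that a step changing the index by $\Delta\ell$ (while staying among $B$ values) changes the left value by $d\,\Delta\ell + 1$ in the first half and by $d\,\Delta\ell - 1$ in the second half; matching against the intervals $[-d+1,1]$ and $[-1,d-1]$ forces $\Delta\ell \in \{-1,0\}$ in the first half and $\Delta\ell \in \{0,1\}$ in the second. Together with the admissible windows $\ell \in \{t+1,\dots,2t-m\}$ (first half) and $\ell \in \{t+1,\dots,t+1+(m-(t-1))\}$ (second half), this funnels every starting index down to the single value $\ell = t+1$ at the middle state $s_{t-1}$ and then lets it climb to any target in $\{t+1,\dots,2t\}$. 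Concretely I would take the stay-then-decrease path $\ell = \min(v_j, 2t-m)$ on the first half and the stay-then-increase path $\ell = \min(w_j,\, t+1+(m-(t-1)))$ on the second, which meet at $\ell = t+1$ when $m = t-1$ and realize every admissible $w_j$; the steps keep $\Delta\ell$ in the allowed set and keep $\ell$ inside its window at every level by construction.

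The conceptual content is entirely the funnel-then-spread structure already baked into the two tables, which \Cref{lem:verticalonlyconstraint} lets me instantiate one column at a time. I expect the \textbf{main obstacle} to be purely the bookkeeping: checking that the value changes $d\,\Delta\ell \pm 1$ match the interval endpoints exactly, that the differing offsets from the two tables glue consistently at the seam $m = t-1$ (where both halves give the lone $B$ value $d(t+1)+(t-1)$), and that every intermediate left value and its conserved partner right value remain in $\{1,\dots,v\}$. I would also note in passing that an index can never cross between the $A$ range $\{1,\dots,t\}$ and the $B$ range $\{t+1,\dots,2t\}$, since there the offsets differ and the resulting change falls outside the intervals; this guarantees that an $A$ column stays an $A$ column and a $B$ column stays a $B$ column, although only the achievability of the listed targets is needed for the lemma.
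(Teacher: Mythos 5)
Your proof is correct and takes essentially the same approach as the paper: repeatedly apply the ``furthermore'' part of \Cref{lem:verticalonlyconstraint} to decouple the columns, then exhibit an explicit per-column path of left values that keeps $A$ values fixed, funnels every $B$ index down to $t+1$ at the middle state, and climbs back up to the target. The only difference is the choice of explicit paths (your $\min$/$\max$ formulas versus the paper's case formulas, which the paper leaves ``straightforward to check''); your second-half path $\ell=\min\bigl(w_j,\,t+1+i\bigr)$ even sidesteps a small off-by-one in the paper's printed formula, which as written gives $\ell=\min(t+i,w_j)$ and so reaches only $d(2t-1)$ at state $e_3$ when $w_j=2t$.
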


\begin{proof}
We just need to specify the values for rows $r+e_2+2i$ for $1\le i \le 4(t-1)$ that satisfy the conditions of \Cref{lem:verticalonlyconstraint}. 

For values of $j$ with $1\le v_j \le t$, we have $w_j=v_j$. So for such a $j$ and for $1\le i\le 2(t-i)$, gap $(r+e_2+2i|j)$ is given values $(dv_j|d(2t+1-v_j)$.

For values of $j$ with $t+1\le v_j\le 2t$, we have $t+1\le w_j\le 2t$. For such $j$ and for $1\le i \le 2(t-1)$, gap $(r+e_2+2i, j)$ is given left value:

\[
\begin{cases}
d(v_j-i)+i&\text{for }i\le v_j-t-1\\
d(t+1)+i&\text{for }i\ge v_j-t-1\\
\end{cases}
\]

\noindent while gap $(r+e_2+4(t-1)+2i, j)$ is given left value:

\[
\begin{cases}
d(t+i)+t-1-i&\text{for }i\le w_j-t\\
dw_j+t-1-i&\text{for }i\ge w_j-t\\
\end{cases}
\]

\noindent The right value of these gap are set so that the left and right values of each gap sum to $d(2t+1)$. 

It is straightforward to check that this satisfies the conditions of \Cref{lem:verticalonlyconstraint}.
\end{proof}

\Cref{lem:verticalverification} verifies that that the states between $e_2$ and $e_3$ enforce the appropriate constraints.

\begin{lemma}\label{lem:verticalverification}
Suppose the states between $e_2$ and $e_3$ are as above. Suppose that the rows $r+e_2$ through $r+e_3$ are filled, where the left value in gap $(r+e_2, j)$ is $v_j$. If $1\le v_j\le t$, then the left value in gap $(r+e_3, j)$ is $dv_j$. If $t+1\le v_j\le 2t$, then the left value in gap $(r+e_3, j)$ is $dw_j$ for some $t+1\le w_j\le 2t$.
\end{lemma}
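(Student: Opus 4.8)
The plan is to fix a single column $j$ and track its left value through the odd states $e_2, e_2+2, \dots, e_3$, showing that (i) an $A$-tile value is pinned to its exact index, while (ii) a $B$-tile value can never escape the ``large'' range and so must stay a $B$ tile. Every odd state $i$ with $e_2\le i<e_3$ satisfies the hypotheses of \Cref{lem:verticalonlyconstraint} (the even state $\sigma(i)$ has $L=R=\{1,\dots,v\}$, and the next odd state $\sigma^2(i)$ has $I=\{0\}$), so that lemma applies at every double-step: the left value $x$ of column $j$ in state $e_2+2a$ and its left value $x'$ in state $e_2+2(a+1)$ satisfy $x'-x\in I_{e_2+2a+1}$ and $x'\in L_{e_2+2(a+1)}$. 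The whole argument is then an induction on $a$ using only these two facts together with the explicit interval $I_{e_2+2a+1}$ (equal to $[-d+1,1]$ in phase~1 and $[-1,d-1]$ in phase~2) and the explicit set $L_{e_2+2(a+1)}$ read off the two tables.

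For the $A$-tile case I would show by induction that the value stays exactly $dv_j$ at every odd state. The inductive step rests on two observations about the reachable interval $[dv_j+\min I,\,dv_j+\max I]$, which has length $d$ in either phase: its only multiple of $d$ is $dv_j$ itself (the neighbouring multiples $d(v_j\pm 1)$ fall strictly outside, using $d\ge 2$), and since $dv_j\le dt$ while every large element of the next $L$-set exceeds $dt$, no large value is within reach. As $L_{e_2+2(a+1)}$ consists only of the multiples $d,2d,\dots,td$ together with large values, the value must remain $dv_j$; in particular it equals $dv_j$ at $e_3$.

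For the $B$-tile case the invariant I would maintain is simply that the value is a \emph{large} element of the current $L$-set (of the form $d\ell+\text{offset}$ with $\ell\ge t+1$), which automatically forces it to exceed $dt$. The minimum large value in state $e_2+2a$ is $d(t+1)+a$ in phase~1 and $d(t+1)+t-1-b$ in phase~2 (writing the state as $e_2+2(t-1)+2b$); in each case subtracting the maximal downward shift permitted by $I$ (namely $d-1$ in phase~1 and $1$ in phase~2) leaves the reachable values strictly above $dt$, so the image cannot be one of the $A$-values $d,\dots,td$ and must again be large. Since $e_3$ is an encoding state, the value there is a genuine multiple of $d$, and being larger than $dt$ it equals $dw_j$ for some $t+1\le w_j\le 2t$, as claimed. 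The case $t=1$ is vacuous, since then $e_2=e_3$.

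The main obstacle — really the only place any care is needed — is verifying the boundary inequalities that keep an $A$-value away from the large values and a $B$-value away from the small values. This is exactly where the spacing $d\ge 2$ between consecutive tile codes is used, and where the phase-dependent orientation of the interval ($[-d+1,1]$ versus $[-1,d-1]$) must be matched against the offset pattern of the corresponding $L$-set. Once these inequalities are checked the induction is routine, and it runs parallel to the explicit filling already produced in \Cref{lem:verticalspecification}.
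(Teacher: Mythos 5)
Your proposal is correct and takes essentially the same route as the paper's proof: an induction over the odd states via \Cref{lem:verticalonlyconstraint}, pinning an $A$-value to $dv_j$ because it is the unique element of each $L_i$ in the length-$d$ window permitted by the relevant interval, and trapping a $B$-value strictly above $dt$ (the paper phrases this as staying at least $d(t+1)$) until the encoding state $e_3$ forces it to be $dw_j$ with $t+1\le w_j\le 2t$. The only difference is cosmetic: you track the phase-specific intervals $[-d+1,1]$ and $[-1,d-1]$ separately, while the paper coarsens both to $[-d+1,d-1]$.
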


\begin{proof}
First, consider some $j$ where $1\le v_j\le t$. In each of the odd states between $e_2$ and $e_3$, the only value $x\in L_i$ satisfying $dv_j-d+1\le x\le dv_j+d-1$ is $x=dv_j$. The interval $I_i$ is contained in $\{-d+1, d-1\}$ for every even state between $e_2$ and $e_3$. By \Cref{lem:verticalonlyconstraint} and induction, we see that the left value in gap $(r+i, j)$ is $v_j$ for each odd $i$ with $e_2<i\le e_3$. In particular, the left value in gap $(r+e_3, j)$ is $v_j$.

Now consider some $j$ where $t+1\le v_j\le 2t$. In each of the odd states between $e_2$ and $e_3$, the smallest $x\in L_i$ with $x\ge d(t+1)-d+1$ is at least $d(t+1)$. By induction and \Cref{lem:verticalonlyconstraint}, we see that the left value in gap $(r+i, j)$ is at least $d(t+1)$ for each $i$ with $e_2<i\le e_3$. Since $L_{e_3}=\{d\ell : 1\le \ell\le 2t\}$, we see that the left value in gap $(r+e_3, j)$ is $dw_j$ for some $t+1\le w_j\le 2t$.
\end{proof}

The states between $e_4$ and $e_1$ can be created in a way analogous to those between $e_2$ and $e_3$. The difference is that the roles of the $A$ and $B$ tiles are swapped, so the index of an $A$ tile is allowed to change while the index of a $B$ tile is fixed. A diagram of this is shown in \Cref{fig:verticalconstraints2}.

\begin{figure}
    \centering
    \includegraphics[page=18]{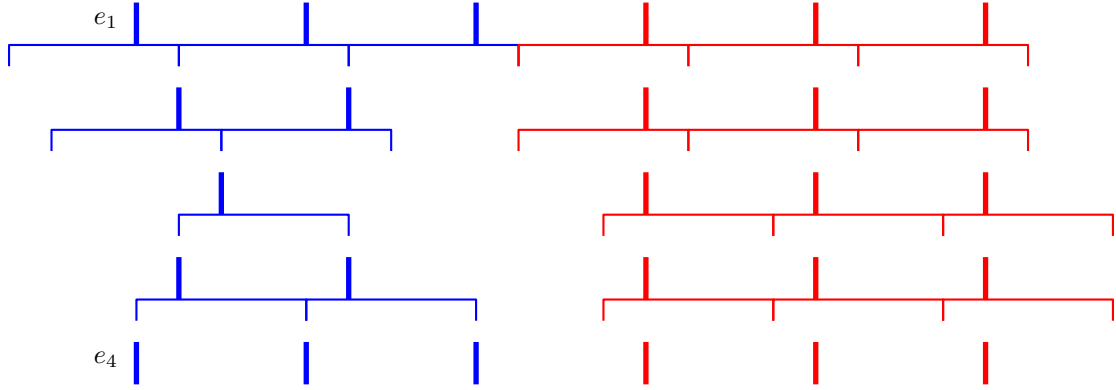}
    \caption{The analogous diagram to \Cref{fig:verticalconstraints1}, but for the states between $e_4$ and $e_1$.}
    \label{fig:verticalconstraints2}
\end{figure}

\subsection{Creating color-matching constraints}\label{sec:horizontalconstraints}

Next, we create the constraints that require the colors on opposite sides of an edge to match. Let $k$ be the number of colors in the AB tiling instance, and label the colors $0$ through $k-1$. We now fix $d=12k$. 

We split the space between $e_1$ and $e_2$ by adding another encoding state $e_{1.5}=5$ between them. The states between $e_1$ and $e_{1.5}$ will require the upper-right color of an $A$ column to match the lower-left color of the $B$ column right of it, and the states between $e_{1.5}$ and $e_2$ will require the upper-left color of an $A$ column to match the lower-right color of the $B$ column left of it. An encoding state $e_{3.5}$ is added between $e_3$ and $e_4$ analogously. This is shown in \Cref{fig:detailedrowcorrespondence}.

\begin{figure}
    \centering
    \includegraphics[page=19]{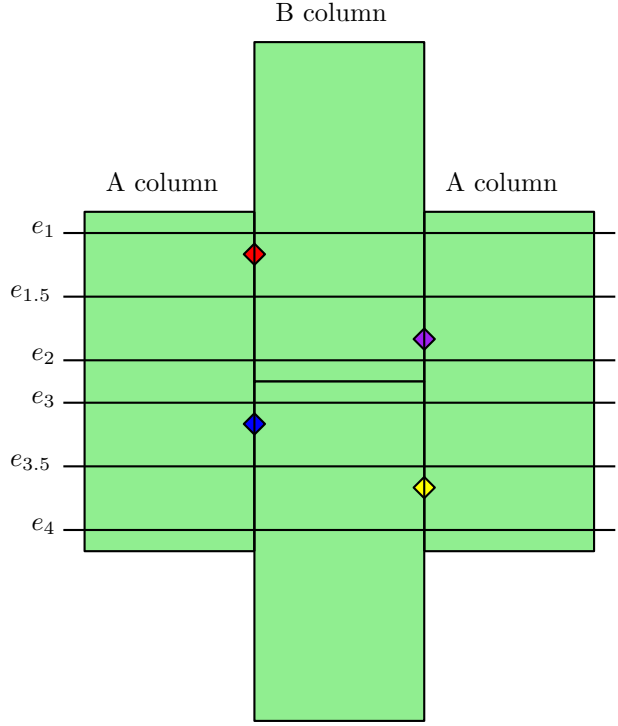}
    \caption{The gaps between $e_1$ and $e_2$ and between $e_3$ and $e_4$ are each split in half.}
    \label{fig:detailedrowcorrespondence}
\end{figure}

In this section, we describe the states between $e_1$ and $e_{1.5}$. The other states can be created analogously. 

States $e_1$ and $e_{1.5}$ are encoding states, so have $L_*=\{12kj:1\le j\le 2t\}$ and $R_*=\{12k(2t+1-j):1\le j \le 2t\}$. We set $I_{e_1+1}=I_{e_1+2}=I_{e_1+3}=I_{e_{1.5}}=[0, 6k]$. For $1\le j\le 2t$, let $c(j)$ be the upper-right color when $j$ is an $A$ tile or the lower-right color when $j$ is a $B$ tile. The values in $L_*$ and $R_*$ for $e_1+1$ through $e_1+3$ are then given by:

\begin{center}
\begin{tabular}{|c|c|c|}
\hline
State $i$&$L_i$&$R_i$\\
\hline
$e_1+1$&$\{1, \dots, v\}$&$\{12kj+2k+c(j) : 1\le j\le t\}\cup \{12kj+4k+c(j) : t+1\le j \le 2t\}$\\
\hline
$e_1+2$&
$\{12ki : 1\le i\le 2t\}$&
\begin{tabular}{c}
$\{12k(2t+1-j)+2k : 1\le j \le 2t\}\cup$ \\ 
$\{12k(2t+1-j)-2k+c: 1\le j \le 2t \text{ and } -k+1\le c\le k-1\}$
\end{tabular}
\\
\hline
$e_1+3$&$\{1, \dots, v\}$&$\{12ki+x : 0\le x < 6k\}$\\
\hline
\end{tabular}
\end{center}

\Cref{lem:horizontalspecification} shows that the gaps between $e_1$ and $e_{1.5}$ can be filled when constructing a tiling.

\begin{lemma}\label{lem:horizontalspecification}
Suppose that row $r$ is in state $e_1$ and for each $j$ gap $(r, j)$ has values $(12kv_j|12k(2t+1-v_j))$. If each even $j$ has $1\le v_{j}\le t$, $t+1\le v_{j+1}\le 2t$ and $c(v_{j})=c(v_{j+1})$, then the rows up to $r+e_{1.5}$ can be filled in such a way that gap $(e_{1.5}|j)$ has values $(12kv_j|12k(2t+1-v_j))$.
\end{lemma}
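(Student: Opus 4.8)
The plan is to follow the template of \Cref{lem:verticalspecification}: I would write down explicit left and right values for every gap in the three intermediate rows (states $e_1+1$, $e_1+2$, $e_1+3$) and then check that the assignment satisfies the gap conditions of \Cref{sec:gapconditions}, verifying in particular that row $r+e_{1.5}$ carries the claimed values $(12kv_j|12k(2t+1-v_j))$. By \Cref{lem:intervalbound}, filling one row below another amounts to choosing, for each gap, a single offset $\delta\in I_{\sigma(\cdot)}=[0,6k]$ that governs both of its lower neighbours: the right value of the lower-left neighbour rises by $\delta$, the left value of the lower-right neighbour falls by $\delta$, and the sum is preserved. So the whole construction reduces to naming one offset per gap in each of the four transitions from $e_1$ down to $e_{1.5}$.

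For a tile index $j$ set $\epsilon(j)=2k+c(j)$ if $1\le j\le t$ and $\epsilon(j)=4k+c(j)$ if $t+1\le j\le 2t$; then $\epsilon(j)\in[2k,5k-1]\subseteq[0,6k]$ and $12kj+\epsilon(j)$ ranges over exactly $R_{e_1+1}$. I would assign the gap tracking tile $v_c$ the offsets $\epsilon(v_c)$, $\epsilon(v_c)$, $6k-\epsilon(v_c)$, $6k-\epsilon(v_c)$ in the four successive transitions; note that each offset depends only on $v_c$ and lies in $[0,6k]$. The first offset pushes the right value of each state-$e_1+1$ gap to $12kv_c+\epsilon(v_c)\in R_{e_1+1}$ (the left value is unconstrained), and the second collapses the left value of each state-$e_1+2$ gap back to $12kv_c\in L_{e_1+2}$, leaving its right value equal to $12k(2t+1-v_c)+(\epsilon(v_{c+1})-\epsilon(v_c))$.

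This middle value is where the hypothesis enters. At an $A{\to}B$ junction (even $c$, with $v_c$ an $A$ tile and $v_{c+1}$ a $B$ tile), the matching assumption $c(v_c)=c(v_{c+1})$ gives $\epsilon(v_{c+1})-\epsilon(v_c)=2k$, so the right value is $12k(2t+1-v_c)+2k$, landing in the ``$+2k$'' family of $R_{e_1+2}$. At a $B{\to}A$ junction (odd $c$) no match is assumed, but $\epsilon(v_{c+1})-\epsilon(v_c)=-2k+(c(v_{c+1})-c(v_c))$ with $c(v_{c+1})-c(v_c)\in[-k+1,k-1]$, so the value lands in the ``$-2k+c$'' family of $R_{e_1+2}$. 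Thus every state-$e_1+2$ value is legal precisely because the even junctions match. The state-$e_1+3$ right values are then $12kv_c+(6k-\epsilon(v_c))\in R_{e_1+3}$, since $6k-\epsilon(v_c)\in[k+1,4k]\subseteq[0,6k)$.

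The main thing to get right---and where I expect the real work to be---is that these offsets must be undone by $e_{1.5}$ while every one of them stays inside the single interval $[0,6k]$. The key observation is that consecutive third-transition offsets differ by $(6k-\epsilon(v_{c+1}))-(6k-\epsilon(v_c))=\epsilon(v_c)-\epsilon(v_{c+1})$, which is exactly the negative of the junction discrepancy introduced in state $e_1+2$; so rather than accumulating along the row, the offsets \emph{telescope}, and a short computation gives that the gap in column $c$ of row $r+e_{1.5}$ has left value $12kv_c$ and right value $12k(2t+1-v_c)$, as required. This telescoping is possible only because $d=12k$ is large enough that the color contributions (of magnitude below $k$) never interfere with the structural shifts $2k,4k,6k$ modulo $12k$; this is the design feature that forces each offset to depend only on the single tile $v_c$. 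With the four offset families in hand, checking the three gap conditions row by row is routine, exactly as in \Cref{lem:verticalspecification}, so I would relegate it to a direct verification.
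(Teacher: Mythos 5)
Your proposal is correct and takes essentially the same approach as the paper: the paper's proof likewise fills the three intermediate rows with one explicit assignment per column (displayed in \Cref{fig:horizontalspecification1,fig:horizontalspecification2}) and declares the gap-condition check routine, and your offsets $\epsilon(v_c),\ \epsilon(v_c),\ 6k-\epsilon(v_c),\ 6k-\epsilon(v_c)$ reproduce exactly the values in those figures, including the middle-row right values $12k(2t+1-a)+2k+c_b-c_a$ at matched $A\to B$ junctions and $12k(2t+1-b)-2k+c_a-c_b$ at unmatched $B\to A$ junctions. The only difference is presentational: you package the construction as one offset per gap with a telescoping cancellation, where the paper draws the same numbers in two figures.
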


\begin{proof}
Consider some even $j$, so $1\le v_j\le t$ and $t+1\le v_{j-1},v_{j+1}\le 2t$ by assumption. \Cref{fig:horizontalspecification1} shows how columns $j$ and $j+1$ are filled, while \Cref{fig:horizontalspecification2} shows how columns $j-1$ and $j$ are filled. It is straightforward to check that these assignments satisfy the gap conditions.

\begin{figure}
    \centering
    \includegraphics[page=21]{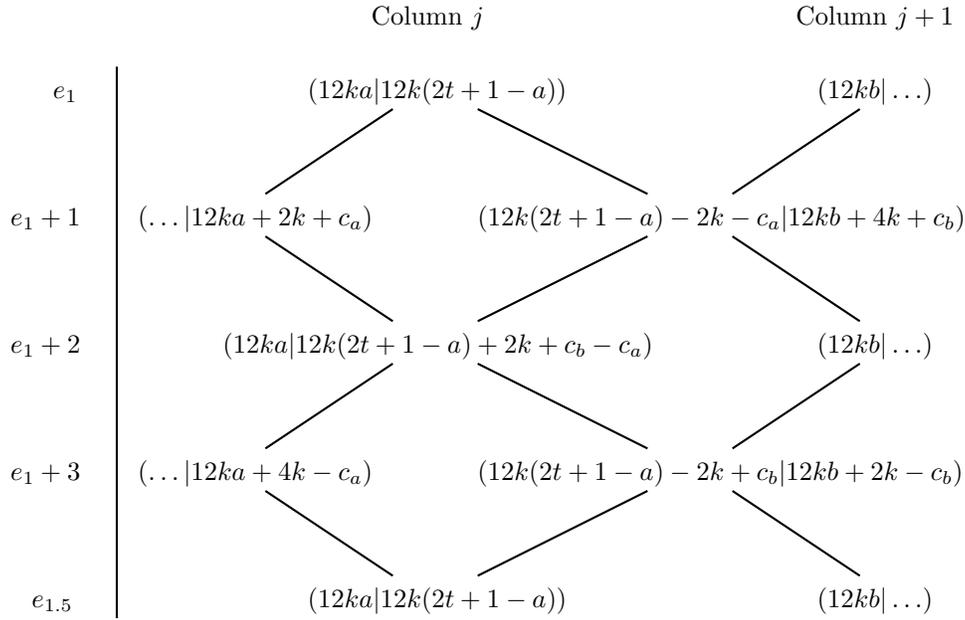}
    \caption{Filling in the columns $j$ and $j+1$ when $v_j$ represents an $A$ tile. In the figure, $a=v_j$ and $b=v_{j+1}$. Note that $c_a=c_b$ by assumption, so $12k(2t+1-a)+2k+c_b-c_a$ is in $R_{e_1+2}$.}
    \label{fig:horizontalspecification1}
\end{figure}

\begin{figure}
    \centering
    \includegraphics[page=20]{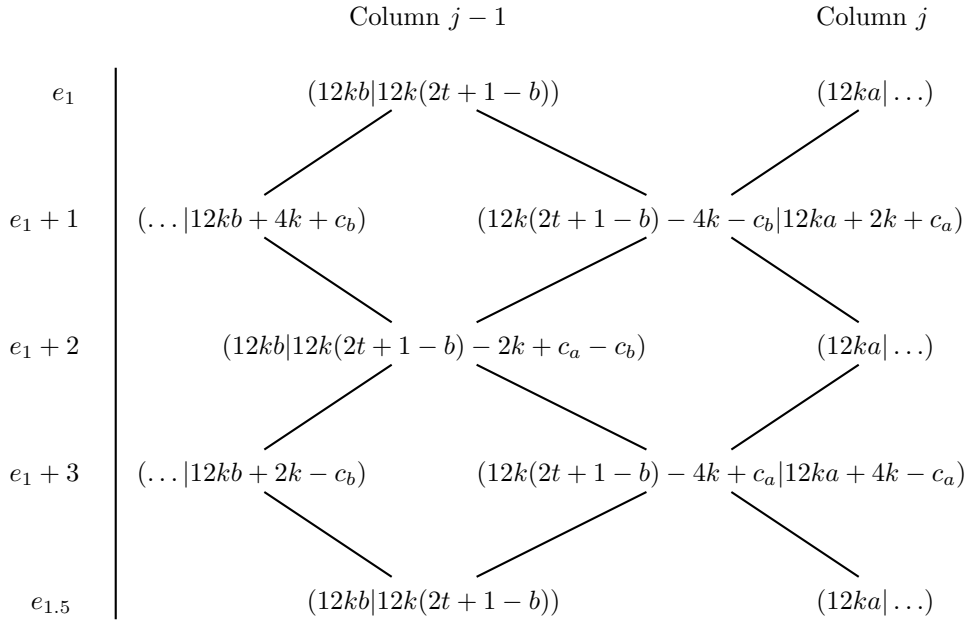}
    \caption{Filling in the columns $j-1$ and $j$ when $v_j$ represents and $A$ tile. In the figure, $a=v_j$ and $b=v_{j+1}$. The colors $c_a$ and $c_b$ might not match, which is okay because $12k(2t+1-b)-2k+c_a-c_b$ is in $R_{e_1+2}$.}
    \label{fig:horizontalspecification2}
\end{figure}
\end{proof}

\Cref{lem:horizontalverification} verifies that the states $e_1$ to $e_{1.5}$ enforce the appropriate constraint.

\begin{lemma}\label{lem:horizontalverification}
Suppose there is a complete schematic and that row $r+e_1$ is in state $e_1$, where for each $j$ the gap $(r+e_1, j)$ has left value $12kv_j$ for some $1\le v_j\le 2t$. Then:

\begin{itemize}
    \item When $1\le v_j \le t$, we have $t+1\le v_{j+1}\le 2t$ and $c(v_j)=c(v_{j+1})$
    \item When $t+1\le v_j\le 2t$, we have $1\le v_{j+1}\le t$
\end{itemize}

Furthermore, gap $(r+e_{1.5}, j)$ has left value $12kv_j$. 
\end{lemma}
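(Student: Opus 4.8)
The plan is to run the value propagation of the gap conditions downward from state $e_1$ to $e_{1.5}$, reversing the construction of \Cref{lem:horizontalspecification}, and to exploit that the admissible $L$- and $R$-values in these states are separated by multiples of $12k$ while every relevant interval $I_{e_1+1}=\dots=I_{e_{1.5}}=[0,6k]$ has width only $6k$. This separation makes each propagation step essentially deterministic, with the low-order terms (residues below $12k$) carrying the color data to be compared. Throughout I would work with a single adjacent pair of columns $j,j+1$ and use the three gap conditions of \Cref{lem:intervalbound}—membership $x\in L_i$, $y\in R_i$; conservation $x+y=x'+y'$; and the difference bound $x'-x=y-y'\in I_{\sigma(i)}$—together with the adjacency rule of \Cref{fig:rowsandcolumns} (below an odd row the left child keeps the column, below an even row the right child keeps the column). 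Recall also that every $v_j\in\{1,\dots,2t\}$, so each column is either an $A$ tile ($v_j\le t$) or a $B$ tile ($v_j\ge t+1$).

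First I would pin down single-column values. Since $R(r+e_1+1,j)$ lies in $R_{e_1+1}$ and within $[0,6k]$ of $L(r+e_1,j)=12kv_j$, and the elements of $R_{e_1+1}$ are $12k$-separated, it is forced to equal $12kv_j+2k+c(v_j)$ when $v_j\le t$ and $12kv_j+4k+c(v_j)$ when $v_j\ge t+1$; thus the $2k$/$4k$ offset records whether the column is $A$ or $B$ and the residue records the color $c(v_j)$. The same ``nearest admissible value'' step, now using that $L_{e_1+2}=\{12ki\}$ is clean and $I_{e_1+2}=[0,6k]$, forces $L(r+e_1+2,j)=12kv_j$, and repeating it across the last two rows (with $R_{e_1+3}$ a union of length-$6k$ blocks and $L_{e_{1.5}}=\{12ki\}$ clean) gives the ``Furthermore'' statement $L(r+e_{1.5},j)=12kv_j$.

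The heart of the argument is to combine the two columns at state $e_1+2$. Using conservation at the gap $(r+e_1,j)$ I solve for $L(r+e_1+1,j+1)=R(r+e_1,j)-2k-c(v_j)$, and then conservation at $(r+e_1+1,j+1)$, using the right value of that gap and $L(r+e_1+2,j+1)=12kv_{j+1}$, yields
\[
R(r+e_1+2,j)=R(r+e_1,j)+2k+\bigl(c(v_{j+1})-c(v_j)\bigr)
\]
in the $A\to B$ case, with analogous expressions in the other three cases. Writing $R(r+e_1,j)=12k(2t+1-u)$ for the (a priori unknown) residue-free index $u$, the requirement $R(r+e_1+2,j)\in R_{e_1+2}$ becomes a membership test for $12k(2t+1-u)+(\text{offset})+(\text{color difference})$. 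Because the ``clean'' family $12k(2t+1-m)+2k$ and the ``flexible'' family $12k(2t+1-m)-2k+c$ with $c\in[-k+1,k-1]$ are separated modulo $12k$ from the remaining $\pm k$ window, this test succeeds in the $A\to B$ case only when the color difference vanishes (landing in the clean family), succeeds for every color in the $B\to A$ case (landing in the flexible family), and fails outright in the $A\to A$ and $B\to B$ cases. This simultaneously gives the alternation ($A$ must be followed by $B$ and $B$ by $A$) and the match $c(v_j)=c(v_{j+1})$ on the $A\to B$ interface, with no constraint on the $B\to A$ interface, exactly as claimed.

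The main obstacle is the bookkeeping in that last step: one must carry $R(r+e_1,j)$ as an unknown $12k(2t+1-u)$ (the verification may not assume the right value encodes the correct index), track how the $2k$ and $4k$ offsets of $R_{e_1+1}$ interact with the $\pm 2k$ structure of $R_{e_1+2}$, and dispatch all four adjacency cases so that precisely the intended one forces matching. Once the separation inequalities ($12k\gg 6k$ and $6k$ dominates the color range) are set up, every membership test reduces to ``the only multiple of $12k$ in this small window is $0$,'' so each step is forced; the consistency of the remaining rows $e_1+2$ through $e_{1.5}$ is automatic because we are given a complete schematic. The states between $e_{1.5}$ and $e_2$, and symmetrically those between $e_3$ and $e_4$, are then handled by the identical computation with the roles of the left and right edges (and of $A$ and $B$) interchanged.
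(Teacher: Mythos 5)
Your proposal is correct and takes essentially the same route as the paper's proof: the forced single-column propagation $12kv_j \to 12kv_j+2k+c(v_j)$ (or $+4k$) $\to 12kv_j$ using the $12k$-separation of $L_i, R_i$ against the width-$6k$ intervals, the conservation bookkeeping that computes $(r+e_1+2,j)_R$ while carrying the unknown right value $12k(2t+1-r_j)$, and the four-case membership test in $R_{e_1+2}$ (clean $+2k$ family forcing $c(v_j)=c(v_{j+1})$ in the $A\to B$ case, flexible $-2k$ family absorbing any color in the $B\to A$ case, outright failure for $A\to A$ and $B\to B$) are precisely the paper's argument via \eqref{eqn:colorloading} and \eqref{eqn:colorreset}.
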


\begin{proof}
Gap $(r+e_1, j)$ has left value $12kv_j$ by assumption. By the gap conditions, $(r+e_1+1, j)_R\in R_{e_1+1}$ and $(r+e_1+1, j)_R-(r+e_1, j)_L\in I_{e_1+1}=[0, 6k]$. So $12kv_j\le (r+e_1+1, j)_R\le12kv_j+6k$. For a given $j$, there is only one value in $R_{e_1+1}$ in this range, which is:

\begin{equation}\label{eqn:colorloading}
(r+e_1+1, j)_R=\begin{cases}
    12kv_j+2k+c(v_j)&\text{for } 1\le v_j\le t\\
    12kv_j+4k+c(v_j)&\text{for } t+1\le v_j\le 2t\\
\end{cases}
\end{equation}

Similarly, $(r+e_1+2, j)_L\in L_{e_1+2}$ and $(r+e_1+1)_R-(r+e_1+2, j)_L\in I_{e_1+2}=[0, 6k]$. So $12kv_j+qk+c(v_j)-6k\le (r+e_1+2, j)_L\le 12kv_j+qk+c(v_j)$. This gives:

\begin{equation}\label{eqn:colorreset}
(r+e_1+2, j)_L=12kv_j
\end{equation}

By similar calculations, we obtain $(r+e_1+3, j)_R=12kv_j+x$ for some $0\le x\le 6k$ and $(r+e_1+4, j)_L=(r+e_{1.5}, j)_L=12kv_j$. 

Since $e_1$ is a encoding state, the right value of gap $(r+e_1, j)$ is $12k(2t+1-r_j)$ for some $1\le r_j\le 2t$. It may or may not be the case that $r_j=v_j$. 

Consider some $j$ with $t+1\le v_j\le 2t$, and suppose for contradiction that $t+1\le v_{j+1}\le 2t$. By the gap conditions, $(r+e_1+1, j)_R+(r+e_1+1, j+1)_L=(r+e_1, j)_L+(r+e_1, j)_R=12k(2t+1+v_j-r_j)$. By \eqref{eqn:colorloading}, $(r+e_1+1, j)_R=12kv_j+4k+c(v_j)$, so $(r+e_1+1, j)_L=12k(2t+1-r_j)-4k-c(v_j)$. 

By the gap conditions, $(r+e_1+1, j+1)_L+(r+e_1+1, j+1)_R=(r+e_1+2, j)_R+(r+e_1+2, j+1)_L$. By \eqref{eqn:colorloading}, $(r+e_1+1, j+1)_R=12kv_{j+1}+4k+c(v_{j+1})$. By \eqref{eqn:colorreset}, $(r+e_1+2, j+1)_L=12kv_{j+1}$. So $(r+e_1+2, j)_R=12k(2t+1-r_j)+c(v_{j+1})-c(v_j)$. Recall that $c(v_j)$ and $c(v_{j+1})$ are between $1$ and $k$, so $12k(2t+1-r_j)+c(v_{j+1})-c(v_j)$ is not in $R_{e_1+2}$, violating the gap conditions. By contradiction we conclude that $1\le v_{j+1}\le t$.

By an analogous argument, we can show that if $1\le v_j\le t$, then $t+1\le v_{j+1}\le 2t$. In this case, we calculate that $(r+e_1+2, j)_R$ is $12k(2t+1-r_j)+2k+c(v_{j+1})-c(v_j)$. This must be in $R_{e_1+2}$, which is only possible if $c(v_j)=c(v_{j+1})$.
\end{proof}

The states between $e_{1.5}$ and $e_{2}$ are created in an analogous way, only the roles of $A$ and $B$ tiles are reversed and the colors used are the upper-left color of an $A$ tile or the lower-right color of a $B$ tile. The states between $e_2$ and $e_3$ are created similarly to those between $e_{1.5}$ and $e_2$, but using the lower colors of $A$ tiles and the upper colors of $B$ tiles.

\subsection{Final verification}

\begin{theorem}
Deciding whether a single prototile $T$ with edge-to-edge matching rules tiles the plane is undecideable. 
\end{theorem}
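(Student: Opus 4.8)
The plan is to assemble the machinery built up through \Cref{sec:monotiling} into a reduction from AB tiling, which is undecidable by \Cref{lem:ABtiling}. Given an arbitrary AB-tiling instance with tile sets $A,B$, set $t=|A|=|B|$ and let $k$ be the number of colors. I would construct a single stick prototile together with a finite list of edge-to-edge matching rules, and prove that this prototile tiles the plane if and only if the AB tiles do; chaining this with the Wang-to-AB reduction inside \Cref{lem:ABtiling} gives \Cref{thm:onetiling}.

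First I would pin down every parameter exactly as in the preceding subsections: take $d=12k$, $v=(2t+1)d$, use $s=16+8(t-1)$ buckets (states) of length $\ell_b=(2v-1)(s-1)+2v$ each, and form the stick of length $n+1$, where $n$ is the resulting block length. As $n$ is polynomial in $t$ and $k$ and in particular far exceeds $5$, \Cref{lem:weavepattern} applies once matching rules $1$ through $11$ are imposed, so every tiling is a weave pattern and hence (\Cref{sec:schematics}) is encoded by a tiling schematic. I would then install the arrow and triangle markings, realized as the additional forbidden edge pairs of \Cref{sec:schematics}, bucket by bucket through the parameters $L_i,R_i,I_i$, using the encoding-state assignments of \Cref{sec:verticalconstraints,sec:horizontalconstraints} together with their stated analogues for the remaining state intervals (those between $e_4$ and $e_1$, between $e_{1.5}$ and $e_2$, and between $e_3$ and $e_4$).

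The heart of the proof is a correspondence, in both directions, between valid tilings and valid AB tilings. For soundness, \Cref{lem:gapsinsamebucket,lem:statecycle,lem:intervalbound} show that any valid tiling is governed by the gap conditions, so its rows cycle through the states via $\sigma$; reading off the left value $di$ of each gap at an encoding state decodes a tile (an $A$ tile when $1\le i\le t$, a $B$ tile when $t+1\le i\le 2t$). Then \Cref{lem:horizontalverification} forces adjacent columns to alternate between $A$ and $B$ tiles with matching colors across each vertical AB-edge, and \Cref{lem:verticalverification} forces each column to preserve its $A$-index (resp.\ $B$-index) across a full cycle while allowing the complementary index to change. Following one complete cycle of $\sigma$ shows this decoded data is a legal AB tiling. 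For completeness I would reverse the process: starting from a legal AB tiling, \Cref{lem:horizontalspecification,lem:verticalspecification} and their analogues let me fill in, cycle by cycle, all intermediate rows so that the gap conditions hold, after which the claim at the end of \Cref{sec:gapconditions} realizes this data as an honest schematic, and therefore as an honest stick tiling.

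The step I expect to be the main obstacle is the global bookkeeping that glues these local, per-interval statements into one coherent map between the doubly-infinite diagonal lattice of gaps and the AB plane. One must check that a single full pass through $\sigma$ corresponds precisely to descending two units in the AB tiling (as in \Cref{fig:rowcorrespondence,fig:detailedrowcorrespondence}), that schematic columns stay aligned with AB columns with no accumulating horizontal drift, and that the \emph{decode-only-the-left-value} convention is sound --- a valid schematic need not have its right values encode the matching tile, yet the verification lemmas must still yield a legal AB tiling and admit no spurious extra tilings. Establishing that these global consistency properties follow purely from the local gap conditions is the delicate part; the rest is the arithmetic already discharged in \Cref{lem:leftandrightsidelengths} through \Cref{lem:horizontalverification}.
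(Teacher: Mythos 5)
Your proposal matches the paper's proof essentially step for step: both directions reduce from AB tiling via \Cref{lem:ABtiling}, using \Cref{lem:weavepattern} and the gap conditions to decode a tiling into an AB tiling through \Cref{lem:horizontalverification,lem:verticalverification}, and using \Cref{lem:verticalspecification,lem:horizontalspecification} (plus the schematic-realizability claim at the end of \Cref{sec:gapconditions}) to build a tiling from an AB tiling. The ``global bookkeeping'' you flag as the main obstacle is handled in the paper exactly as you anticipate, including the convention of decoding only left values at encoding states, so your plan is sound and not a different route.
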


\begin{proof}
Let $\mathcal{S}=\{A, B\}$ be an AB-tiling problem. Then we can construct a single prototile $T$ and a set of edge-to-edge matching rules as we have described.

First, suppose that $\mathcal{S}$ admits an AB-tiling, and fix such a tiling. Let $t=|A|=|B|$. For $r, c\in \mathbb{Z}$, let $a_{rc}\in \{1, \dots, t\}$ be the index of the $A$ tile with lower-left corner $(2c, 2r)$ and let $b_{rc}$ be the index of the $B$ tile with lower-left corner $(2c+1, 2r+1)$.

We start by specifying the states and values of the gaps in a schematic. For $i\in \{1, 2, 3, 4\}$, the gap $(e_i-sr, 2c)$ is given values $(da_{rc}|d(2t+1-a_{rc}))$. Gaps $(e_1-sr, 2c+1)$ and $(e_2-sr, 2c+1)$ are given values $(db_{rc}|d(2t+1-b_{rc}))$ while gaps $(e_3-sr, 2c+1)$ and $(e_4-sr, 2c+1)$ are given values $(db_{r-1, c}|d(2t+1-b_{r-1,c})$. By \Cref{lem:verticalspecification,lem:horizontalspecification}, the rest of the gaps can be filled in a way that satisfies the gap conditions. This assignment can be used to generate a valid schematic, which can be used to construct a weave-pattern tiling with copies of $T$. 

Now, suppose that $T$ tiles the plane. By \Cref{lem:weavepattern}, this tiling forms a weave pattern. As discussed in \Cref{sec:schematics}, this can be interpreted as a schematic. As shown in \Cref{sec:gapconditions}, the gaps in this schematic satisfy the gap conditions. Choose a labeling of the rows and columns so that row $e_1$ is in state $e_1$ and so that $(e_1, 0)$ represents and $A$ tile (this is possible by \Cref{lem:statecycle,lem:horizontalverification}). By \Cref{lem:horizontalverification,lem:verticalverification}, for $r, c\in \mathbb{Z}$, $(e_1-sr, 2c)$ always represents an $A$ tile and $(e_1-sr, 2c+1)$ always represents a $B$ tile. So we construct an AB-tiling where the $A$ tile with lower-left corner $(2c, 2r)$ is the one represented by gap $(e_1-sr, 2c)$ and the $B$ tile with lower-left corner is the one represented by gap $(e_1-sr, 2c+1)$. By \Cref{lem:horizontalverification}, the colors match along vertical edges, so this forms a valid AB tiling.

So $T$ tiles the plane if and only if $\mathcal{S}$ does. By \Cref{lem:ABtiling}, this proves the claim.
\end{proof}

\section{Simulating matching rules geometrically}\label{sec:stapletile}

We have shown that no algorithm can decide if a stick tile tessellates the plane when edge-to-edge matching rules are allowed. Next, we show how to simulate the matching rules by modifying the edges of the sticks and adding a second second \emph{staple} tile. Our staple tile is a regular $12$-gon, as in \Cref{fig:stapletile}. If the stick has length $n$, then the staple should have diameter smaller than $\frac{1}{8n+4}$-times the length of an edge of the stick. Any shape would work as a staple, so long as it is sufficiently small and cannot tile the plane independently. 

\begin{figure}
    \centering
    \includegraphics[page=1]{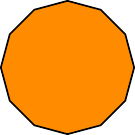}
    \caption{A regular $12$-gon.}
    \label{fig:stapletile}
\end{figure}

\begin{lemma}\label{lem:nostapletiling}
A regular 12-gon does not tile the plane on its own.
\end{lemma}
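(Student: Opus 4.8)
The plan is to show that a regular $12$-gon cannot tile the plane by an angle-counting argument at the vertices of any putative tiling. First I would recall the relevant angles: the interior angle of a regular $12$-gon is $\frac{(12-2)\cdot 180^\circ}{12}=150^\circ$, while the full angle around any point in the plane is $360^\circ$. If the $12$-gons tiled the plane, then around each vertex of the tiling the interior angles meeting at that point would have to sum to exactly $360^\circ$.

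The key step is to determine which combinations of $12$-gon corners and straight edges can surround a point. At a point where only full corners meet, the angles available are integer multiples of $150^\circ$, and no positive integer multiple of $150^\circ$ equals $360^\circ$ (since $360/150=12/5$ is not an integer), so corners alone cannot close up a vertex. The remaining possibility is that the tiling is not edge-to-edge, so that a vertex of one tile lies in the interior of an edge of another; along such a straight edge the $180^\circ$ of flat angle must be partitioned by the corners abutting it. Then at a point in the relative interior of an edge, the angles on the other side must sum to $180^\circ$, and again these are sums of $150^\circ$ corner angles (possibly together with further flat pieces of edges), so I would show $180^\circ$ is likewise not expressible as a nonnegative integer combination of $150^\circ$ (and $180^\circ$) contributions in a way consistent with a tiling. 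Concretely, any angle that must be filled is a multiple of $30^\circ$ that is not itself a multiple of $150^\circ$ beyond the trivial cases, and one checks $150a + 180b \in \{180, 360\}$ has no solution in nonnegative integers with $a\ge 1$.

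More cleanly, I would argue as follows. Consider any vertex $p$ of the tiling where corners of tiles meet, and let $a$ be the number of $12$-gon corners incident to $p$ and $b$ the number of straight edges passing through $p$ (each contributing $180^\circ$). The total angle is $150^\circ a + 180^\circ b = 360^\circ$, i.e. $5a + 6b = 12$. The only nonnegative integer solution with $a \ge 1$ is ruled out directly: $5a \le 12$ forces $a \in \{1,2\}$, giving $6b = 7$ or $6b = 2$, neither of which has an integer solution. Since every tiling of the plane by convex polygons has some vertex where at least one genuine corner meets, and corners must meet at such points, this contradiction shows no tiling exists.

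The main obstacle I expect is handling non-edge-to-edge configurations rigorously: one must be careful that a corner of a tile need not coincide with a corner of its neighbor, so the bookkeeping of ``vertices of the tiling'' must include points interior to edges, and I must confirm that at every such point the governing equation is still of the form $5a + 6b = c$ with $c \in \{6, 12\}$ (for a flat point or a full vertex) and that $a \ge 1$ somewhere is forced. Establishing that there is always at least one point with $a \ge 1$—which follows because the tiles are bounded, so some tile has a corner that is extremal and cannot be swallowed entirely by a flat edge—is the crux; the residual arithmetic is routine once the geometric setup is fixed.
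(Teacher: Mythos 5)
Your proof is correct, but there is nothing in the paper to compare it against: the paper states \Cref{lem:nostapletiling} without any proof, treating it as an evident fact, so your argument supplies exactly the justification the paper leaves implicit. It is also consonant with the angle bookkeeping the paper does carry out later, in the proof of \Cref{lem:stapleverification}, where the only listed way to complete a $\frac{5\pi}{6}$ corner to $2\pi$ among the available angles is $\frac{5\pi}{6}+\frac{7\pi}{6}$ --- i.e.\ a $12$-gon corner must meet a reflex stick corner, never other $12$-gon corners. Your central computation is right: at any point of a tiling each incident tile contributes $150^\circ$ (if the point is one of its corners) or $180^\circ$ (if the point lies in the relative interior of one of its edges), and $150a+180b=360$, equivalently $5a+6b=12$, has no nonnegative integer solution with $a\ge 1$ (and $5a+6b=6$ handles flat points the same way). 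One simplification worth noting: the step you single out as the crux --- producing a point at which a genuine corner is incident, for which you sketch an extremality argument --- is immediate and needs no such machinery. Any corner $p$ of any tile $T$ is such a point: $T$ itself contributes $150^\circ$ at $p$, the remaining $210^\circ$ must be covered by other tiles each contributing $150^\circ$ or $180^\circ$, and $150a+180b=210$ has no nonnegative integer solutions, so the corner can never be surrounded. (A tile cannot contain $p$ in its interior, since its interior would then overlap the interior of $T$.) The only genuine hypothesis to check is local finiteness, which is automatic for congruent convex tiles of fixed diameter and area and guarantees the angle sum at every point is exactly $360^\circ$; with that in place your residual arithmetic closes the proof.
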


We modify the edges of the stick as follows. Label the edges $1$ through $4n+2$. For each edge, we designate $8n+4$ spots which have either a bump or a dent, as in \Cref{fig:dentbump}. Edge number $i$ has dents in the first $4n+2$ spots except for spot $i$, which has a bump. If a rule prevents edge $i$ from being placed against edge $j$, then $i$ has a bump in spot $8n+4-j+1$, otherwise it has a dent in that spot. An example is shown in \Cref{fig:matchingexample1}.

\begin{figure}
    \centering
    \includegraphics[page=2]{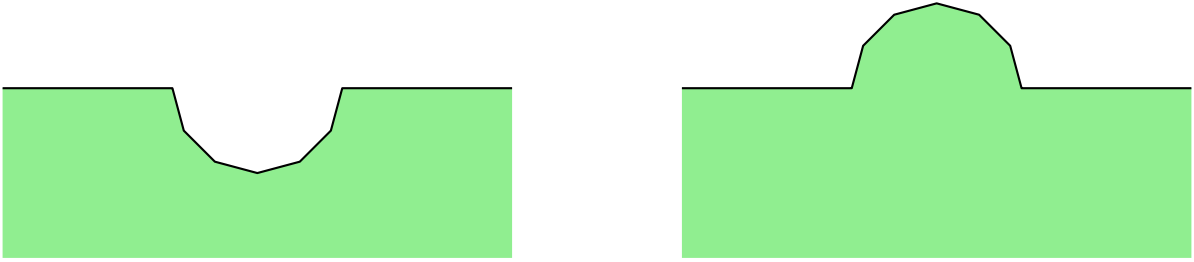}
    \caption{A dent (left) or a bump (right).}
    \label{fig:dentbump}
\end{figure}

\begin{figure}
    \centering
    \includegraphics[page=3]{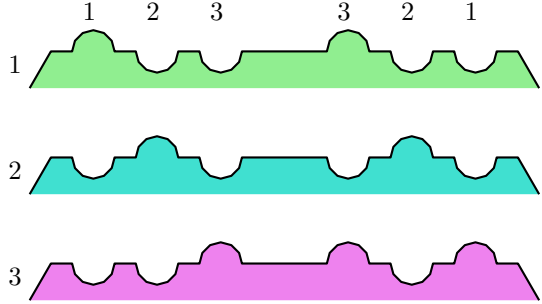}
    \caption{Three edges with dents and bumps. Here the prohibited pairs are $(1, 3)$, $(2, 2)$, and $(3, 3)$.}
    \label{fig:matchingexample1}
\end{figure}

When two edges that satisfy the matching rules are placed against each other, the bumps and dents line up, leaving spaces that can be filled by the staple, as in \Cref{fig:matchingexample2}. If a bump is placed against a dent, then the bump fills the dent. If two dents are placed against each other, then they create a space that can be exactly filled by a staple. However, if two edges are placed against each other in such a way that they would violate the matching rules, then two bumps would overlap, as in \Cref{fig:matchingexample3}.

\begin{figure}
    \centering
    \includegraphics[page=4]{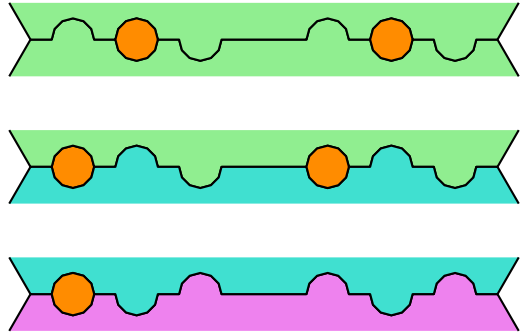}
    \caption{The allowed pairs $(1, 1)$, $(1, 2)$, and $(2, 3)$ from the example in \Cref{fig:matchingexample1}.}
    \label{fig:matchingexample2}
\end{figure}

\begin{figure}
    \centering
    \includegraphics[page=5]{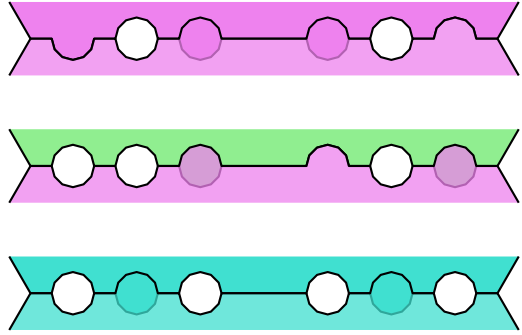}
    \caption{The prohibited pairs $(1, 3)$, $(2, 2)$ and $(3, 3)$ from the example in \Cref{fig:matchingexample1}.}
    \label{fig:matchingexample3}
\end{figure}

When the dents and bumps are so constructed, a tiling by modified sticks and staples should simulate a tiling with unmodified sticks that respects the matching rules. The proof of \Cref{lem:stapleverification} completes the proof of \Cref{thm:main}.

\begin{lemma}\label{lem:stapleverification}
The modified stick and staple tiles can tile the plane if and only if the unmodified stick tile can tile the plane while respecting the matching rules.
\end{lemma}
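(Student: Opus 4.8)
The plan is to prove the two directions of the equivalence separately, establishing first the easier ``if'' direction and then the harder ``only if'' direction, where the main obstacle lies.

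For the easier direction, suppose the unmodified stick admits an (edge-to-edge) tiling respecting the matching rules, and place a modified stick in each position occupied by an unmodified stick. I claim no two bumps are ever placed against one another. By the construction of the bumps and dents, when edge $i$ is placed against edge $j$ the two edges meet anti-aligned, so spot $i$ of edge $i$ (its identity bump) meets spot $8n+4-i+1$ of edge $j$, which carries a bump precisely when $(j,i)$ is forbidden; symmetrically, edge $j$'s identity bump collides with a matching bump of edge $i$ exactly when $(i,j)$ is forbidden, and one checks that these are the only possible bump-against-bump events. Since every coincident pair of edges respects the matching rules, no collision occurs: each spot along each contact is bump-against-dent (fitting flush) or dent-against-dent (leaving a staple-shaped hole). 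Filling each dent-against-dent hole with a single staple then produces a tiling of the plane by modified sticks and staples.

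For the harder direction, suppose the modified sticks and staples tile the plane; the goal is to recover an unmodified-stick tiling respecting the matching rules by deleting the staples and smoothing away the bumps and dents. The crucial local observation is that a bump on a stick edge can only be received by a dent on an \emph{adjacent stick edge}: the staple is convex and carries no dents, so it cannot accommodate a protruding bump, and two bumps cannot overlap. Hence every stick edge must lie against another stick edge, aligned spot-for-spot, with the dent-against-dent gaps occupied only by staples. Once this spot-for-spot alignment is established, the identity bump at spot $i$ of edge $i$ forces the opposing edge $j$ to carry a dent at the matching spot, which by construction means neither $(i,j)$ nor $(j,i)$ is forbidden; thus every stick-stick contact respects the matching rules, and deleting the staples while replacing each modified stick by an unmodified one in the same position yields the desired tiling.

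The main obstacle is justifying the spot-for-spot alignment and ruling out configurations in which staples do something other than fill individual dent-against-dent gaps---for instance an extended pocket tiled entirely by staples, or stick edges offset from one another by a sub-edge amount. Here I would use the two defining properties of the staple. The diameter bound, smaller than $\frac{1}{8n+4}$ times an edge length and hence less than the spacing between consecutive spots, ensures that a staple is too small to bridge more than a single spot, so it cannot open up room for a stick to slide out of alignment; combined with the requirement that each bump be met by a dent, this pins the sticks into edge-to-edge contact exactly as in an unmodified tiling. \Cref{lem:nostapletiling}, that the staple cannot tile the plane on its own, rules out the degenerate all-staple tiling and, together with the smallness bound, any extended staple-only pocket, guaranteeing that the sticks genuinely cover the plane up to the small gaps. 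Assembling these facts gives the equivalence and, via \Cref{lem:weavepattern} and the construction of the preceding sections, completes the proof of \Cref{thm:main}.
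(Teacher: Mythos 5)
Your ``if'' direction is correct and matches the paper, including the bookkeeping that under anti-aligned contact the only possible bump--bump collisions are identity-bump-versus-rule-bump pairs, which occur exactly for forbidden edge pairs. The gap is in the ``only if'' direction, at precisely the step you flag as the main obstacle: the claim that every stick edge must lie spot-for-spot against another stick edge is asserted rather than proved, and the ingredients you invoke do not establish it. Convexity of the staple only shows that a \emph{single} staple cannot receive a bump; it does not rule out a bump protruding into a pocket bounded by several staples together with pieces of other sticks, and it says nothing about configurations with no bump nearby at all: two sticks whose edges are offset by a sub-edge amount, a stick rotated by $\frac{\pi}{6}$ relative to another with staple-filled wedges between them, or a corner of one stick resting against the interior of another stick's edge. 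The diameter bound does not do the work you assign to it --- its role is to make room for the $8n+4$ spots on each edge, and ``too small to bridge more than one spot'' is not a rigidity statement; misalignment is a global possibility that smallness of the filler tile does not by itself exclude. Likewise \Cref{lem:nostapletiling} only rules out the all-staple tiling (it supplies one stick to start from); it does not exclude large but bounded staple-filled pockets.

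The paper closes exactly this gap with a vertex-angle enumeration that your sketch is missing. The interior angles occurring in the two tiles are $\frac{2\pi}{3}$ and $\frac{4\pi}{3}$ (unmodified stick corners), $\frac{5\pi}{6}$, $\frac{7\pi}{6}$, $\frac{7\pi}{12}$, $\frac{17\pi}{12}$ (corners created by the bumps and dents), and $\frac{5\pi}{6}$ (staple corners). Listing all ways these angles, possibly together with flat edges contributing $\pi$, can sum to $2\pi$ around a vertex yields only
\[
\pi+\pi,\quad \tfrac{2\pi}{3}+\tfrac{2\pi}{3}+\tfrac{2\pi}{3},\quad \tfrac{2\pi}{3}+\tfrac{4\pi}{3},\quad \tfrac{5\pi}{6}+\tfrac{7\pi}{6},\quad \tfrac{7\pi}{12}+\tfrac{17\pi}{12}.
\]
In particular no corner can rest on the interior of an edge, and unmodified stick corners meet only unmodified stick corners. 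Starting from the one stick guaranteed by \Cref{lem:nostapletiling}, aligned with a hexagonal grid, this propagates by induction: every adjacent grid cell contains a stick aligned with the same grid, so all sticks lie edge-to-edge on one common grid and staples are confined to dent--dent holes. Only after this rigidity is in hand does your endgame (the identity bump at spot $i$ forces a dent opposite, hence every coincident edge pair respects the rules, and replacing modified sticks by unmodified ones gives the desired tiling) go through. So your outline of the conclusion is right, but the load-bearing step needs the angle enumeration, not convexity plus smallness.
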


\begin{proof}
Given a tiling with unmodified sticks that respects the matching rules, we can replace each unmodified stick with the modified stick. This leaves some gaps that can be filled with the staple tile. Since the original tiling obeys the matching rules, the new tiling never has two bumps overlapping.

Suppose we have a tiling of the plane with modified sticks and staples. The staple tile has corners with inside angle $\frac{5\pi}{6}$. The corners of the unmodified stick tile have inside angles $\frac{2\pi}{3}$ and $\frac{4\pi}{3}$, but modifying the edges introduces some corners of angles $\frac{5\pi}{6}$, $\frac{7\pi}{6}$, $\frac{7\pi}{12}$, and $\frac{17\pi}{12}$. 

By \Cref{lem:nostapletiling}, a tiling of the plane with staples and modified sticks must have at least one stick. Align this stick with a hexagonal grid. Around each vertex, we must have vertices with angles that sum to $2\pi$ or vertices with angles summing to $\pi$ and a flat edge. Using angles $\pi$, $\frac{2\pi}{3}$, $\frac{4\pi}{3}$, $\frac{5\pi}{6}$, $\frac{7\pi}{6}$, $\frac{7\pi}{12}$ and $\frac{17\pi}{12}$, the only ways to reach exactly $2\pi$ are:

\[\pi+\pi\]
\[\frac{2\pi}{3}+\frac{2\pi}{3}+\frac{2\pi}{3}\]
\[\frac{2\pi}{3}+\frac{4\pi}{3}\]
\[\frac{5\pi}{6}+\frac{7\pi}{6}\]
\[\frac{7\pi}{12}+\frac{17\pi}{12}\]

So each of the unmodified corners of the stick must meet only unmodified corners of other sticks. If a given stick is aligned with the hexagonal grid, then the adjacent spaces in the grid are also contain sticks that are aligned with this grid. We see that all the sticks are aligned with the hexagonal grid, and every space in the grid is contains part of one stick. 

If we remove the staple tiles and replace each modified stick tile with an unmodified stick, then we obtain a tiling of the plane by unmodified sticks. This tiling must respect the matching rules, because otherwise the tiling with modified sticks would have had two bumps overlapping.
\end{proof}

So far, we have considered the case where tiles can be rotated but not reflected. We can modify the stick tile further so that a reflected tile and an unreflected tiling cannot be placed adjacent to each other in a tiling. An example is shown in \Cref{fig:prohibitingreflections}.

\begin{figure}
    \centering
    \includegraphics[page=6]{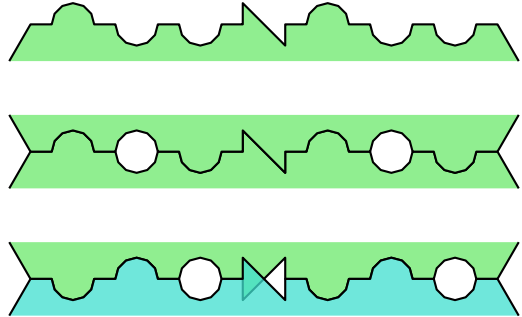}
    \caption{We can further modify each edge of the stick so that a reflected tile can never be placed next to a reflected tile. The edges can still be placed against each other if they have the same orientation, as in the middle figure. So even if reflections are allowed, all the stick tiles must occur with the same orientation.}
    \label{fig:prohibitingreflections}
\end{figure}

\section{Conclusion}\label{sec:conclusion}

Our construction can be generalized to improve bounds for undecidability for several related tiling problems.

We considered tiling where rotations are allowed. It was previously known that translational $8$-tiling is undecidable (\cite{YangZhang8Tiling}). In our tilings, the stick tile occurs with 3 different orientations and the staple tile occurs with only one. So our results imply that translational 4-tiling is undecidable, and that translational 3-tiling is undecidable when there are edge-to-edge matching rules. 

Ollinger (\cite{Ollinger5Tiling}) showed that tiling with $5$ polyomino prototiles is undecidable. Our construction relies on non-orthogonal rotations in order to form the weave pattern, and so doesn't work for polyomino tiles. However, our techniques can be used to show that tiling with $3$ polyomino prototiles is undecidable. This is illustrated in \Cref{fig:polyominos}. We can construct an $n\times 1$ block tiling similar to the blocks in the schematics from \Cref{sec:schematics} and a $1\times 1$ gap tile that can be rotated $180$ degrees to represent part of either the left or right part of a gap. With appropriate edge-to-edge matching rules, these tiles can be made to create tilings according a set of gap conditions (see \Cref{sec:gapconditions}). Determining the existence of such a tiling is undecidable. A staple tile to simulate the edge-to-edge matching rules brings the total number of polyonimo tiles needed to $3$. 

\begin{figure}
    \centering
    \includegraphics[page=1]{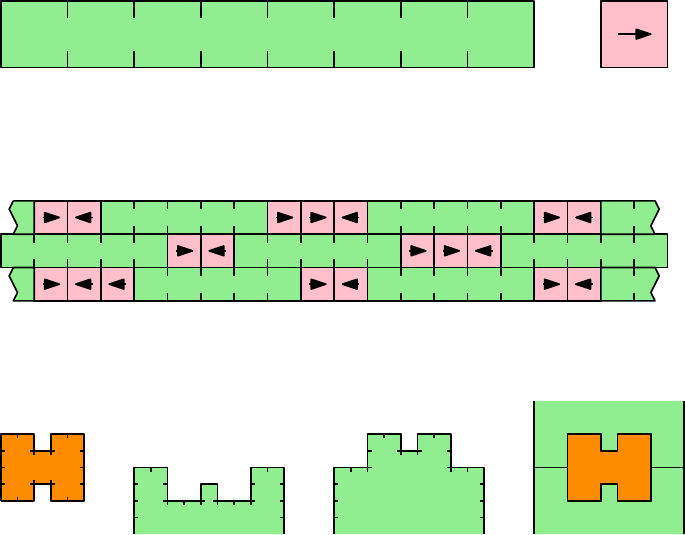}
    \caption{Top: a block tile and a gap tile. Middle: a section of a schematic made with block tiles and gap tiles. The gap tile occurs with $2$ different orientations. Bottom: a polyomino staple tile.}
    \label{fig:polyominos}
\end{figure}

While our tiles can't be polyominos, we could realize our construction with polyhexes, at the cost of making the staple tiles somewhat more complicated. This is illustrated in \Cref{fig:polyhex}. 

\begin{figure}
    \centering
    \includegraphics[page=2]{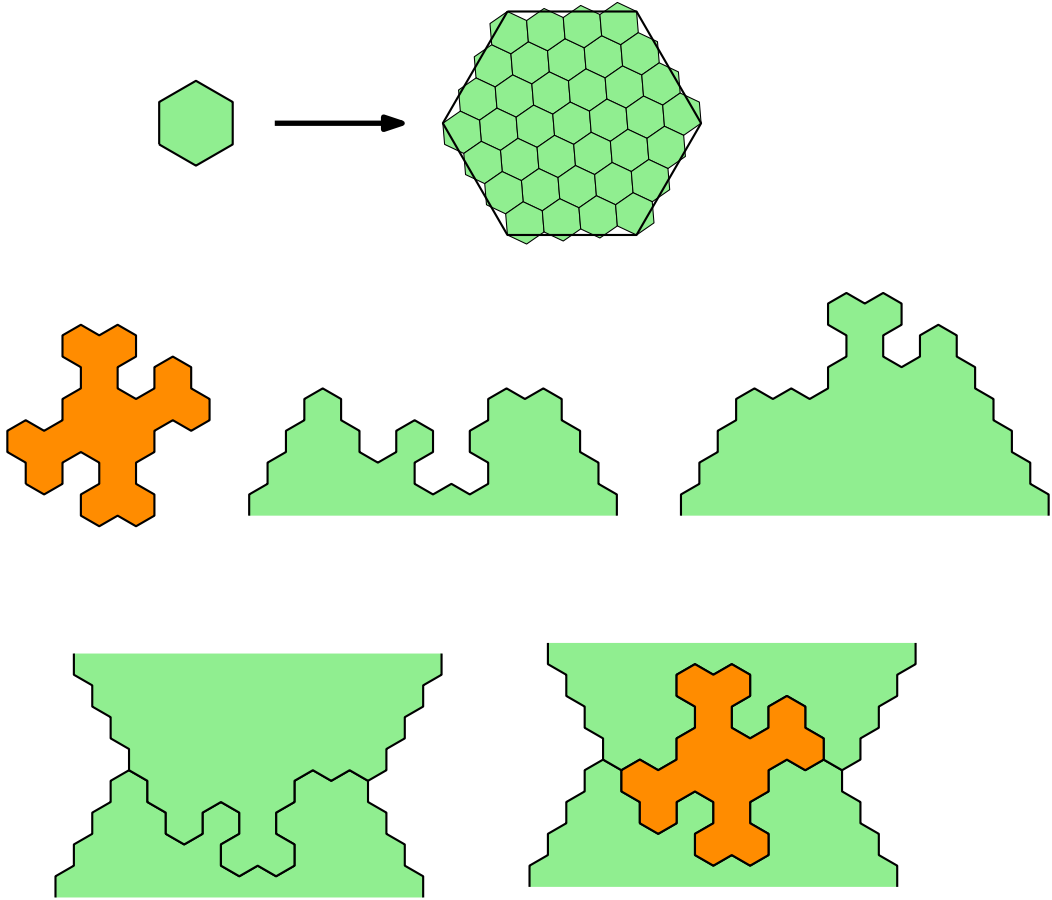}
    \caption{Realizing our construction with polyhexes. First, we need to increase the resolution of the stick tiles by replacing each hexagon by a large hexagon-shaped patch (top). Then we can add a staple tile made of polyhexes, and shown in the middle and bottom of the figure.}
    \label{fig:polyhex}
\end{figure}

An important open problem that remains is the decidability of monotiling. Aperiodic monotiles, that is prototiles that tile the plane but never with translational symmetry, have recently been found (\cite{SMKGSAperiodic1,SMKGSAperiodic2}). If monotiling is undecidable, then there should be infinitely many combinatorially different aperiodic monotiles, but so far at most a couple have been found.

\section{Acknowledgments}

I am grateful to Mikkel Abrahamsen for helpful discussions and feedback on earlier drafts of this paper. 

This research was supported by Independent Research Fund Denmark, grant 1054-00032B, and by the Carlsberg Foundation, grant CF24-1929.

\bibliographystyle{plainurl}
\bibliography{bib}

\begin{thebibliography}{10}

\bibitem{BergerWangTiling}
Robert Berger.
\newblock The undecidability of the domino problem.
\newblock {\em Mem. Amer. Math. Soc.}, 66:72, 1966.

\bibitem{DemaineLangerman3Tiling}
Erik Demaine and Stefan Langerman.
\newblock Tiling with three polygons is undecidable.
\newblock In {\em 41st International Symposium on Computational Geometry (SoCG 2025)}, Leibniz International Proceedings in Informatics (LIPIcs), Dagstuhl, Germany, 2025. Schloss Dagstuhl -- Leibniz-Zentrum f{\"u}r Informatik.
\newblock To appear.
\newblock URL: \url{https://arxiv.org/abs/2409.11582}.

\bibitem{GBNTranslationalMonotiling}
D.~Girault-Beauquier and M.~Nivat.
\newblock Tiling the plane with one tile.
\newblock In {\em Proceedings of the Sixth Annual Symposium on Computational Geometry}, SCG '90, page 128–138, New York, NY, USA, 1990. Association for Computing Machinery.
\newblock \href {https://doi.org/10.1145/98524.98553} {\path{doi:10.1145/98524.98553}}.

\bibitem{GolombSimulatingWangTiles}
Solomon~W. Golomb.
\newblock Tiling with sets of polyominoes.
\newblock {\em Journal of Combinatorial Theory}, 9(1):60--71, 1970.
\newblock URL: \url{https://www.sciencedirect.com/science/article/pii/S0021980070800552}, \href {https://doi.org/10.1016/S0021-9800(70)80055-2} {\path{doi:10.1016/S0021-9800(70)80055-2}}.

\bibitem{GoodmanStraussSurvey}
Chaim Goodman-Strauss.
\newblock Open questions in tiling.
\newblock 01 2000.
\newblock URL: \url{https://strauss.hosted.uark.edu/papers/survey.pdf}.

\bibitem{KariWangTiling}
Jarkko Kari.
\newblock On the undecidability of the tiling problem.
\newblock In Viliam Geffert, Juhani Karhum{\"a}ki, Alberto Bertoni, Bart Preneel, Pavol N{\'a}vrat, and M{\'a}ria Bielikov{\'a}, editors, {\em SOFSEM 2008: Theory and Practice of Computer Science}, pages 74--82, Berlin, Heidelberg, 2008. Springer Berlin Heidelberg.

\bibitem{OllingerWangTiling}
Nicolas Ollinger.
\newblock Two-by-two substitution systems and the undecidability of the domino problem.
\newblock In Arnold Beckmann, Costas Dimitracopoulos, and Benedikt L{\"o}we, editors, {\em Logic and Theory of Algorithms}, pages 476--485, Berlin, Heidelberg, 2008. Springer Berlin Heidelberg.

\bibitem{Ollinger5Tiling}
Nicolas Ollinger.
\newblock Tiling the plane with a fixed number of polyominoes.
\newblock In Adrian~Horia Dediu, Armand~Mihai Ionescu, and Carlos Mart{\'i}n-Vide, editors, {\em Language and Automata Theory and Applications}, pages 638--647, Berlin, Heidelberg, 2009. Springer Berlin Heidelberg.

\bibitem{RobinsonWangTiling}
Raphael~M. Robinson.
\newblock Undecidability and nonperiodicity for tilings of the plane.
\newblock {\em Inventiones mathematicae}, 12(3):177--209, 1971.
\newblock \href {https://doi.org/10.1007/BF01418780} {\path{doi:10.1007/BF01418780}}.

\bibitem{SMKGSAperiodic1}
David Smith, Joseph~Samuel Myers, Craig~S. Kaplan, and Chaim Goodman-Strauss.
\newblock An aperiodic monotile.
\newblock {\em Combinatorial Theory}, 4(1), July 2024.
\newblock URL: \url{http://dx.doi.org/10.5070/C64163843}, \href {https://doi.org/10.5070/c64163843} {\path{doi:10.5070/c64163843}}.

\bibitem{SMKGSAperiodic2}
David Smith, Joseph~Samuel Myers, Craig~S. Kaplan, and Chaim Goodman-Strauss.
\newblock A chiral aperiodic monotile.
\newblock {\em Combinatorial Theory}, 4(2), September 2024.
\newblock URL: \url{http://dx.doi.org/10.5070/C64264241}, \href {https://doi.org/10.5070/c64264241} {\path{doi:10.5070/c64264241}}.

\bibitem{YangZhang8Tiling}
Chao Yang and Zhujun Zhang.
\newblock Translational tiling with 8 polyominoes is undecidable.
\newblock {\em Discrete and Computational Geometry}, November 2024.
\newblock URL: \url{http://dx.doi.org/10.1007/s00454-024-00706-1}, \href {https://doi.org/10.1007/s00454-024-00706-1} {\path{doi:10.1007/s00454-024-00706-1}}.

\end{thebibliography}

\end{document}